\newtheorem{theorem}{Theorem}[section]
\newtheorem{lemma}[theorem]{Lemma}
\newtheorem{corollary}[theorem]{Corollary}
\newtheorem{definition}[theorem]{Definition}
\newtheorem{proposition}[theorem]{Proposition}
\newtheorem{assumption}[theorem]{Assumption}
\newtheorem{remark}[theorem]{Remark}
\newtheorem{remarks}[theorem]{Remarks}
\newcommand\NN{\mathbb N}
\newcommand\PP{\mathbb P}
\newcommand\QQ{\mathbb Q}
\newcommand\RR{\mathbb R}
\newcommand\UU{\mathbb U}
\newcommand\VV{\mathbb V}
\newcommand\C{\mathscr C}
\newcommand\D{\mathscr D}
\newcommand\F{\mathscr F}
\renewcommand\H{\mathscr H}
\newcommand\M{\mathcal M}
\renewcommand\d{\mathrm d}
\newcommand\E[2][\PP]{\operatorname E_{#1}\left[ #2\right]}
\newcommand\bigE[2][\PP]{\operatorname E_{#1}\big[ #2\big]}
\newcommand\qed{\hfill$\Box$}
\newcommand\ip[2]{#2(#1)}
\newcommand\radnik[1][\QQ]{\frac{\d #1}{\d\PP}}
\newcommand\ind{\mathds1}
\newcommand\edow{\mathscr E}
\newcommand\Cone{\operatorname{Cone}}
\newcommand\CM{\Xi_\tau}
\newcommand\ba{{\operatorname{\textit{ba}}}}
\newcommand\optQQ{\widehat\QQ}
\newcommand\adm{\textup{adm}}
\newcommand\perm{\textup{perm}}
\newcommand\mg{\textup{mg}}
\newcommand\Kperm{K^\perm}
\newcommand\normcl[2][\QQ]{\operatorname{cl}_{L^1(#1)}(#2)}
\newcommand\esssup{\operatorname{ess\:sup}}
\newcommand\essinf{\operatorname{ess\:inf}}
\newcommand\essmax{\operatorname{ess\:max}}
\newcommand\lint\llbracket
\newcommand\rint\rrbracket
\newcommand\coloneqq{\mathrel{\mathop:}=}
\numberwithin{equation}{section}
\newenvironment{proof}[1][]{\noindent\textit{Proof#1.} }{\vskip\baselineskip}
\begin{document}
\title{Optimal Investment with an Unbounded Random Endowment and Utility-Based Pricing}

\author{Mark P. Owen\thanks{Maxwell Institute for Mathematical Sciences, and Department of Actuarial Mathematics and Statistics, Heriot-Watt University.}\and Gordan \v{Z}itkovi\'{c}\thanks{Department of Mathematics, University of Texas at
Austin.}} \date{}
\maketitle
\renewcommand{\abstractname}{}
\begin{abstract}\renewcommand\thefootnote{}\footnotesize
This paper studies the problem of maximizing the expected utility of terminal wealth for a financial agent with an unbounded random endowment, and with a utility function which supports both positive and negative wealth. We prove the existence of an optimal trading strategy within a class of permissible strategies -- those strategies whose wealth process is a supermartingale under all pricing measures with finite relative entropy. We give necessary and sufficient conditions for the absence of utility-based arbitrage, and for the existence of a solution to the primal problem.

We consider two utility-based methods which can be used to price contingent claims. Firstly we investigate marginal utility-based price processes (MUBPP's). We show that such processes can be characterized as local martingales under the normalized optimal dual measure for the utility maximizing investor. Finally, we present some new results on utility indifference prices, including continuity properties and volume asymptotics for the case of a general utility function, unbounded endowment and unbounded contingent claims.
\footnote{We thank Pavel Grigoriev for discussions on the topic of convex risk measures. We would also like to thank an anonymous referee for helpful comments. Mark Owen gratefully acknowledges support from EPSRC grant GR/S80202}
\footnote{{\it Key words and phrases.} utility maximization, incomplete markets, random endowment, marginal utility-based price processes, utility indifference prices.}
\end{abstract}

\section{Introduction}

The problems of optimal investment and contingent claim pricing in incomplete markets are fundamental in mathematical finance. The purpose of this article is to develop further the theory for the case of an unbounded random endowment and a utility function defined on the whole real line. We also investigate some of the consequences for utility-based approaches to pricing contingent claims.

The existing literature on optimal investment is far too broad for us to give a meaningful overview, so we concentrate on the area of immediate interest. The first paper to treat the case of general utility, within the framework of general semimartingale models of incomplete markets, was Kramkov and Schachermayer (1999). In this paper, the authors considered an investor, endowed with a deterministic initial wealth, with a utility function which supports positive wealths. The investor's objective is to maximize the expected utility of their terminal wealth by trading using admissible strategies -- those strategies whose wealth process is uniformly bounded from below by a constant. The authors demonstrate the existence of an optimal admissible trading strategy by first solving a dual optimization problem.

The case of a general utility function supporting negative wealth was solved by Schachermayer (2001) for the case of a locally bounded semimartingale. For the negative wealth case, it quickly becomes apparent that the class of admissible strategies is too narrow. However, one must somehow rule out arbitrage strategies, and a subsequent paper by Schachermayer (2003) showed that the wealth process of the optimal trading strategy indeed satisfies an important supermartingale property. Recently, Biagini and Frittelli (2005) have generalized the results of Schachermayer (2001) in an elegant way to the non-locally bounded case.

In Cvitani\'c, Schachermayer and Wang (2001) the optimal investment problem considered in Kramkov and Schachermayer (1999) was generalized by allowing a bounded random endowment, rather than just a deterministic initial wealth. The authors employed the duality between $L^\infty$ and finitely additive signed measures to treat the case of a bounded random endowment. The case of a bounded random endowment with a utility function supporting negative wealth was treated in Owen (2002).

For an exponential utility function Delbaen et al. (2002) solve the case of hedging a short position in a contingent claim which is unbounded above, by making a change of measure. Kabanov and Stricker (2002) have presented some additional and interesting results in this context. Hugonnier and Kramkov (2004) consider an unbounded endowment (whose modulus is super-hedgeable) for a general utility function which is defined on the positive real line. Meanwhile, \v{Z}itkovi\'c (2005) has presented the solution to the problem of optimal investment and consumption of wealth for an investor with an endowment which is unbounded above.

In the current article we assume that the investor has an unbounded random endowment which is assumed to be both super- and sub-hedgeable (for precise details, see Assumptions \ref{ass:edow} and \ref{ass:newweakerendowment}). This includes, of course, the known cases of a bounded random endowment, or simply just a deterministic initial wealth. We assume that the investor has a utility function supporting both positive and negative wealth, the only restriction being the now standard hypothesis of ``Reasonable Asymptotic Elasticity'' (see Assumption \ref{ass:rae} for more details). The agent's objective is to maximize their expected utility of terminal wealth derived from trading and the endowment, over the time interval $[0,T]$, where $T$ is some finite time horizon.

We treat, for simplicity, the case of a locally bounded semimartingale market model. For the case of a non-locally bounded semimartingale with a deterministic endowment, the reader is referred to the work of Biagini and Frittelli (2005, 2006). An analysis of the non-locally bounded case requires the use of carefully selected weight functions which bound from below the acceptable losses of a trading strategy.

Rather than treating the supermartingale property of the optimal wealth process as an added extra, we have taken the step of embedding this property directly in the formulation of the problem. We consider therefore the space of ``permissible'' trading strategies -- those strategies whose wealth process is a supermartingale under the pricing measures with finite relative entropy (see Definition \ref{def:perm}).

In Theorem \ref{thm:main} we show the existence of a unique solution to both the primal and dual problems, for an endowment which is unbounded above. By a simple observation, we extend these results to the case of an endowment which is sub- and super-hedgeable. In Theorem \ref{thm:minimalassumption} we complete the picture by showing that a necessary and sufficient condition for the well-posedness of the primal problem is the existence of an absolutely continuous local martingale measure with finite relative entropy. In addition, we show that a necessary and sufficient condition for the existence of an optimal solution to the primal problem is the existence of an \emph{equivalent} local martingale measure with finite relative entropy. These observations are new, to the best of our knowledge.

The proof of Theorem \ref{thm:main} is broken down into Sections \ref{sec:twwwwooooo}-\ref{sec:opttrading}. In Section \ref{sec:twwwwooooo} we consider the optimization problem on the whole interval $[0,T]$. Lemma \ref{thm:kabastri} presents a slight development of a result of Kabanov and Stricker which shows that any optimal measure in the dual problem will be ``as close to being equivalent to the statistical measure as possible'' within the set of local martingale measures with finite relative entropy. We show, in Proposition \ref{thm:lagrange_duality}, both that there is no duality gap and that there is an optimal measure in the dual problem by applying the Lagrange Duality Theorem.

In order to facilitate the proof of the supermartingale property we consider, in Section \ref{sec:dyndual}, a dynamic version of the dual problem, similar to that used by Schachermayer (2003). The dynamic dual problem is initialized at a stopping time $\tau$, valued in $[0,T]$, with the conditional expectation of the density of a local-martingale measure.

In Section \ref{sec:opttrading} we find the optimal trading strategy in the primal problem by showing that the optimal candidate terminal wealth is superhedgeable with zero initial endowment and applying results of Delbaen and Schachermayer in a fairly standard fashion. We then give a proof of the supermartingale property for the optimal wealth process, along the lines of Schachermayer (2003). Before we present the proof however, we give a new and, we believe, insightful proof of the supermartingale property in the case of exponential utility (see Proposition \ref{thm:exponly}), by showing that the optimal wealth process can be written as a Snell envelope. This idea is new, to the best of our knowledge, even for the case of deterministic initial wealth. The supermartingale property for general utility is given in Theorem \ref{thm:supermart}.

Section \ref{sec:dependow} contains an investigation of the properties of the optimal expected utility as a function of the endowment. This is worthwhile in its own right, but also helps us prove our results in Section \ref{sec:indiff}.

In the last two sections of the paper we turn to the topic of pricing contingent claims. A brief overview of some of the relevant literature can be found at the beginning of each section. In Section \ref{sec:mubpp} we introduce Marginal Utility-Based Price Processes (MUBPP's), which are essentially a dynamic version of the Marginal Utility-Based Price investigated by Hugonnier, Kramkov and Schachermayer (2005). The purpose of this section is to determine a fair price process for one or more new assets which are to be introduced into an existing economy. The dynamics of the asset prices should be consistent with the balance of supply and demand expected to hold for a financial market in equilibrium. If the new assets are fairly priced then the agent's expected utility of terminal wealth will not increase. Put differently, the agent's optimal demand for the new assets will be equal to zero. This idea is common in economics, where one considers a purely financial asset (in zero total supply) and a representative investor for the economy as a whole. In Theorem \ref{thm:MUBPP} we show that a locally bounded semimartingale is a MUBPP if and only if it is a local martingale under the (normalized) optimal dual measure.

In Section \ref{sec:indiff} we consider the implication of our results in terms of utility indifference pricing. Until now, most results on indifference prices have been proved for special cases of the utility function. See for example Delbaen et al. (2002) and Becherer (2003). In Proposition \ref{thm:indifprice} we consider properties of the indifference price for a general utility function. It is natural for us formulate the indifference price for an investor with a random endowment, although our results are new even for the case of a deterministic initial wealth. We show that indifference pricing can be considered as pricing with an entropic penalty, thus generalizing a representation of Delbaen et al. (2002) for any utility function. We also obtain various continuity properties which are usually associated with convex measures of risk. In Proposition \ref{thm:avindifprice} we generalize results of Delbaen et al. (2002) and Becherer (2003) on volume asymptotics of the indifference price, both as the volume increases to infinity and decreases to zero.

\vskip\baselineskip

Throughout the article, $U:\RR\rightarrow\RR$ will denote a strictly increasing, strictly concave, continuously differentiable utility function which satisfies the Inada conditions $U'(-\infty)\coloneqq\lim_{x\rightarrow-\infty}U'(x)=\infty$ and $U'(\infty)\coloneqq\lim_{x\rightarrow\infty}U'(x)=0$. In order to avoid pathological examples, we assume throughout the hypothesis of Reasonable Asymptotic Elasticity, namely that
\begin{assumption}\label{ass:rae}
  \begin{equation*}
    \operatorname{AE}_{-\infty}(U)\coloneqq\liminf_{x\rightarrow-\infty}\frac{xU'(x)}{U(x)}>1,
    \qquad
    \operatorname{AE}_{+\infty}(U)\coloneqq\limsup_{x\rightarrow\infty}\frac{xU'(x)}{U(x)}<1.
  \end{equation*}
\end{assumption}
This assumption essentially ensures that the investor is still reasonably risk averse when they are either very wealthy or very poor (see Kramkov and Schachermayer (1999) and Schachermayer (2001) for a detailed discussion of the two conditions).

We extend the domain of $U$ in a natural way to $[-\infty,\infty]$ by prescribing that $U(-\infty)\coloneqq-\infty$ and $U(\infty)\coloneqq\sup_{x\in\RR}U(x)$. The convex conjugate $V:(0,\infty)\rightarrow\RR$ of $U$ is defined by
\[ V(y)\coloneqq\sup_{x\in\RR}\{U(x)-xy\}. \]
It is an elementary exercise in convex analysis to show that $V$ is a strictly convex, continuously differentiable function and that $(U')^{-1}=-V'$. We may therefore extend the domains of the functions $V$ and $V'$ in a natural way to $[0,\infty]$, by prescribing that
\[ V(0)\coloneqq U(\infty),\quad V(\infty)\coloneqq\infty, \quad V'(0)\coloneqq-\infty \quad \text{and} \quad V'(\infty)\coloneqq\infty. \]
Provided $U(0)>0$, which we henceforth assume with no loss of generality, it turns out that Assumption \ref{ass:rae} (Reasonable Asymptotic Elasticity) is equivalent to the assumption of either of the following two growth conditions on $V$.
\newcounter{SavedEnum}
\begin{assumption}\ \label{ass:growth}
  \begin{enumerate}
    \item For any $\lambda>0$ there exists a constant $C>0$ such that $V(\lambda y)\le CV(y)$, for all $y\ge0$,
    \setcounter{SavedEnum}{\value{enumi}}
  \end{enumerate}
  or, equivalently,
  \begin{enumerate}
    \setcounter{enumi}{\value{SavedEnum}}
    \item there exists a constant $C'>0$ such that $y|V'(y)|\le C'V(y)$ for all $y>0$.
  \end{enumerate}
\end{assumption}
If $U(0)\le0$ then we would need to use a slightly more cumbersome version of the growth conditions above, however all our results are clearly still valid. For a proof of the equivalence of Assumptions \ref{ass:rae} and \ref{ass:growth}, we refer the reader to Proposition 4.1 and Corollary 4.2 of Schachermayer (2001) and Lemma 6.3 of Kramkov and Schachermayer (1999).

\vskip\baselineskip

We model the discounted prices of $d$ risky assets by a locally bounded, $\RR^d$-valued semimartingale, $(S_t)_{t\in[0,T]}$, defined on a filtered probability space $(\Omega,\F,(\F_t)_{t\in[0,T]},\PP)$, in which the filtration is assumed to satisfy the usual conditions of right continuity and completeness.

Let $\M^a$ (resp. $\M^e$) denote the set of probability measures $\QQ\ll\PP$ (resp. $\QQ\sim\PP$) such that $S$ is a local martingale under $\QQ$. To avoid ambiguity in Section \ref{sec:mubpp} we shall write $\M^a(S)$ for the set of local martingale measures for the process $S$.

For a measure $\mu\ll\PP$, the quantity $\E{V\left(\radnik[\mu]\right)}\in (0,\infty]$ is known as the (generalized) entropy of $\mu$ relative to $\PP$. Let $\M_V^a$ (resp. $\M_V^e$) denote the convex set of probability measures $\QQ\in\M^a$ (resp. $\QQ\in\M^e$) with finite relative entropy. Note that if $V(0)=U(\infty)=\infty$ then $\M_V^a=\M_V^e$, but otherwise this may not be the case.

Throughout the paper we shall use measures with non-unit total mass. We introduce therefore the notation
\[ \Cone(\M)\coloneqq\{y\QQ:y\ge0,\QQ\in\M\}, \]
where $\M$ is any convex collection of measures. Note that all non-zero measures in $\Cone(\M_V^a)$ have finite relative entropy due to Assumption \ref{ass:growth}(i).

For $\PP'\ll\PP$, let $L(S;\PP')$ denote the space of $\RR^d$-valued, predictable, $S$-integrable processes under $\PP'$. In order to rule out doubling strategies (whose ``upward creep'' leads to arbitrage), it is common practice in the literature to consider trading strategies which are admissible, in the sense that they belong to the cone
\[ \H^\adm\coloneqq\{H\in L(S;\PP):H\cdot S\text{ is }\PP\text{-a.s. uniformly bounded from below by a constant}\}. \]
As is well known however, this class lacks the necessary closure properties to admit a solution to the optimal investment problem when the utility function is defined on the whole real line. In Definition \ref{def:perm} below, we widen this class sufficiently so that it contains the optimal strategy.

The primal problem we shall formulate shortly assumes that as well as being able to trade on the financial market, our investor has a random endowment, which is represented by an $\F$-measurable random variable $\edow$.
\begin{assumption}\label{ass:edow}
  There exist $x',x''\in\RR$ and $H''\in\H^\adm$ such that
  \[ x'\le\edow\le x''+(H''\cdot S)_T. \]
\end{assumption}
Clearly any bounded random endowment, or even just a deterministic initial wealth satisfies Assumption \ref{ass:edow}. See Assumption \ref{ass:newweakerendowment} for a wider class of endowments, which can also be treated by a simple observation.

The primal problem for the investor is to maximize their expected utility of terminal wealth derived from trading and the endowment. In order that the primal problem can be sensibly formulated, it is important to rule out arbitrage. In particular, a minimum requirement is that
\begin{equation}\label{eqn:admprimalproblem}
  u_\edow^\adm\coloneqq\sup_{H\in\H^\adm}\E{U\left(\int_0^TH_u\d S_u+\edow\right)}<U(\infty).
\end{equation}
In Theorem \ref{thm:minimalassumption} we show that, under Assumption \ref{ass:edow}, the assumption \eqref{eqn:admprimalproblem} is equivalent to the following, which shall therefore hold throughout the paper.
\begin{assumption}\label{ass:noarb}
  $\M_V^a\neq\emptyset$.
\end{assumption}
We will indicate where Assumption \ref{ass:noarb} needs strengthening to the existence of an {\it equivalent} local martingale measure with finite entropy (for example in Theorem \ref{thm:main}).

As remarked above the cone $\H^\adm$ of trading strategies is not large enough for the optimal investment problem \eqref{eqn:admprimalproblem} to admit a solution. We consider therefore the following wider class of trading strategies.
\begin{definition}\label{def:perm}
  We shall say that $H$ is a \emph{permissible} trading strategy if it lies in the cone
  \begin{equation*}
    \H^\perm\coloneqq\{H\in L(S;\PP): H\cdot S\text{ is a }\QQ\text{-supermartingale for all }\QQ\in\M_V^a \}.
  \end{equation*}
\end{definition}
While $\H^\perm$ is itself not a vector space, it contains the space $\H^\mg\coloneqq\{H\in L(S;\PP): H\cdot S\text{ is a }\QQ\text{-martingale for all }\QQ\in\M_V^a \}$.

Using these wider classes of trading strategies, we can formulate a weaker version of Assumption \ref{ass:edow}, which we can also treat.

\begin{assumption}\label{ass:newweakerendowment}
  There exist $x',x''\in\RR$, and trading strategies $H'\in\H^\mg$ and $H''\in\H^\perm$, such that
  \[ x'+(H'\cdot S)_T\le\edow\le x''+(H''\cdot S)_T. \]
\end{assumption}
We may now formulate the primal and dual problems.
\begin{definition}\label{def:optproblems}
  \begin{enumerate}
    \item (Primal Problem)
    \begin{equation}\label{eqn:primalproblem}
      u_\edow\coloneqq\sup_{H\in\H^\perm}\E{U\left(\int_0^TH_u\d S_u+\edow\right)};
    \end{equation}
    \item (Dual Problem)
    \begin{equation}\label{eqn:maindual}
      v_\edow\coloneqq\inf_{\mu\in\Cone(\M_V^a)}\E{V\left(\radnik[\mu]\right)+\radnik[\mu]\edow}.
    \end{equation}
  \end{enumerate}
\end{definition}
Note that $u_\edow^\adm\le u_\edow$ because $\H^\adm\subseteq\H^\perm$ (see Theorem 2.9 of Delbaen and Schachermayer (1994)). Later on, in Theorem \ref{thm:minimalassumption}, we show that $u_\edow^\adm=u_\edow$. Note that $v_\edow=\inf_{y\ge0}v_\edow(y)$ where $v:[0,\infty)\rightarrow(-\infty,\infty]$ is defined by
\begin{equation}\label{eqn:simpledual}
  v_\edow(y)\coloneqq\inf_{\QQ\in\M_V^a}\E{V\left(y\radnik\right)+y\radnik\edow}.
\end{equation}
It follows from convexity of $V$ and $\M_V^a$ that $v_\edow(y)$ is convex as a function of $y$. Note also that $v_\edow(0)=V(0)=U(\infty)$.

In preparation for Theorem \ref{thm:main}, note that it follows immediately from the definition of $V$ that for any $H\in\H^\perm$ and $\mu\in\Cone(\M_V^a)$ we have
\begin{align}\label{eqn:duality}
  \E{U((H\cdot S)_T+\edow)} &\le \E{V\left(\radnik[\mu]\right)+\radnik[\mu]((H\cdot S)_T+\edow)}\notag\\
  &\le \E{V\left(\radnik[\mu]\right)+\radnik[\mu]\edow}.
\end{align}
Taking the supremum of the left hand side over all $H\in\H^\perm$ and the infimum of the right hand side over all $\mu\in\Cone(\M_V^a)$ gives
\begin{equation}\label{eqn:easypart}
  u_\edow\le v_\edow\le U(\infty).
\end{equation}
The following theorem is the first of several of our central results.
\begin{theorem}\label{thm:main}
  Suppose that the investor's utility function satisfies Assumption \ref{ass:rae}. Suppose also that Assumption \ref{ass:noarb} holds (i.e. $\M_V^a\neq\emptyset$), and that the random endowment satisfies either Assumption \ref{ass:edow} or the weaker Assumption \ref{ass:newweakerendowment}. Then
  \begin{enumerate}
    \item $u_\edow=v_\edow<U(\infty)$; and \item there exists a $\widehat\mu\in\Cone(\M_V^a)\setminus\{0\}$ which is optimal in the dual problem \eqref{eqn:maindual}.
  \end{enumerate}
  If, in addition, $\M_V^e\neq\emptyset$ then
  \begin{enumerate}\setcounter{enumi}2
    \item $\widehat\mu\in\Cone(\M_V^e)\setminus\{0\}$ and there exists an $\widehat H\in\H^\perm$ which is optimal in the primal problem \eqref{eqn:primalproblem}.
  \end{enumerate}
\end{theorem}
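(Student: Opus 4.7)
The plan is to proceed in four stages.

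First, I would reduce the weak endowment Assumption \ref{ass:newweakerendowment} to Assumption \ref{ass:edow} by a change of variable. The shift $\widetilde\edow \coloneqq \edow - (H' \cdot S)_T$ yields $x' \le \widetilde\edow \le x'' + ((H''-H') \cdot S)_T$; because $H' \in \H^\mg$, this shift leaves the dual functional invariant (martingales have zero expectation under each $\QQ \in \M_V^a$) and translates the primal via $H \mapsto H + H'$, since $\H^\perm \pm \H^\mg \subseteq \H^\perm$. I therefore assume throughout that $x' \le \edow \le x'' + (H'' \cdot S)_T$, with $H''$ in either $\H^\adm$ or $\H^\perm$ --- the estimates below use only the supermartingale property of $H'' \cdot S$ under measures in $\M_V^a$.

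Second, I would establish (i) and (ii) by solving the dual problem. The functional $\mu \mapsto \E{V(d\mu/d\PP) + (d\mu/d\PP)\edow}$ is convex and is bounded below thanks to $\edow \ge x'$. Along a minimizing sequence in $\Cone(\M_V^a)$, Assumption \ref{ass:growth}(i) provides a de la Vall\'ee-Poussin-type uniform integrability, so a Koml\'os argument extracts a convex-combination limit in $L^1(\PP)$. Convexity of $\M^a$ and superlinearity of $V$ place the limit inside $\Cone(\M_V^a)$, and Fatou's lemma identifies it as a minimizer $\widehat\mu$. Non-triviality $\widehat\mu \ne 0$ follows because any test measure in $\M_V^a$ gives a dual value strictly below $v_\edow(0) = U(\infty)$. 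The Lagrange Duality Theorem (Proposition \ref{thm:lagrange_duality}) then yields $u_\edow = v_\edow < U(\infty)$, proving (i), and Lemma \ref{thm:kabastri} refines $\widehat\mu$ to be as equivalent to $\PP$ as possible within $\Cone(\M_V^a)$.

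Third, assuming $\M_V^e \ne \emptyset$, I would prove (iii). Strict convexity of $V$ together with $V'(0) = -\infty$ implies that perturbing $\widehat\mu$ towards any element of $\Cone(\M_V^e)$ strictly decreases the dual objective unless $\widehat\mu$ itself lies in $\Cone(\M_V^e)$, yielding the first assertion. Writing $\widehat\mu = \widehat y \widehat\QQ$, the first-order optimality condition suggests the candidate terminal trading gain $\widehat X_T \coloneqq -V'(\widehat y \, d\widehat\QQ / d\PP) - \edow$. A duality argument saturating \eqref{eqn:duality} gives $\sup_{\QQ \in \M_V^a} \E[\QQ]{\widehat X_T} \le 0$, and a Delbaen--Schachermayer-type super-replication theorem adapted to the permissible class furnishes $\widehat H \in \H^\perm$ with $(\widehat H \cdot S)_T \ge \widehat X_T$. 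Equality in \eqref{eqn:duality} then forces $(\widehat H \cdot S)_T = \widehat X_T$ a.s., so $\widehat H$ is primal-optimal.

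The principal obstacle is verifying the full supermartingale property built into $\H^\perm$, not merely the time-$T$ inequality $\E[\QQ]{(\widehat H \cdot S)_T} \le 0$. This is why Section \ref{sec:dyndual} develops a dynamic dual problem initialized at arbitrary stopping times $\tau$: the corresponding conditional value function is shown to be a $\QQ$-supermartingale for every $\QQ \in \M_V^a$, and this supermartingality transfers to $\widehat H \cdot S$ via the pointwise first-order conditions at $\tau$ and at $T$. The Snell-envelope representation highlighted in the exponential case suggests a cleaner abstract realization of the same step for general $U$.
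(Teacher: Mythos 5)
Your plan is correct and follows essentially the same route as the paper: reduce Assumption \ref{ass:newweakerendowment} to Assumption \ref{ass:edow} by shifting $\edow$ by $(H'\cdot S)_T$ (the paper does this step last rather than first, but this is immaterial), invoke the Lagrange duality result (Proposition \ref{thm:lagrange_duality}) for no-gap and existence of $\widehat\mu$, use Lemma \ref{thm:kabastri} for $\widehat\mu\sim\PP$, construct $\widehat H$ from the first-order conditions via a Delbaen--Schachermayer super-replication argument (Proposition \ref{thm:defoptgain}), and establish membership in $\H^\perm$ through the dynamic dual / supermartingale argument (Theorem \ref{thm:supermart}). The one place you diverge is in sketching a Koml\'os-in-$L^1$ construction of the dual minimizer alongside citing Proposition \ref{thm:lagrange_duality}, which itself produces $\widehat\mu$ via weak$^*$-compactness in $\ba$ and Lemma \ref{thm:useful_tool} to exclude purely finitely additive optimizers; that redundancy is harmless, and the paper's own Remark \ref{rem:one_thing} records that a non-$\ba$ route along these lines is indeed available.
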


\begin{proof}
  Suppose first that the random endowment satisfies Assumption \ref{ass:edow}. In this case, parts (i) and (ii) of the above theorem are proved in Proposition \ref{thm:lagrange_duality}. If, in addition, $\M_V^e\neq\emptyset$ then $\widehat\mu\sim\PP$ by Lemma \ref{thm:kabastri}. Proposition \ref{thm:defoptgain} (see also Remark \ref{rem:aboutequiv}) and Theorem \ref{thm:supermart} provide a proof of the existence of an $\widehat H\in\H^\perm$  such that
  \[ U'\big((\widehat H\cdot S)_T+\edow\big)=\radnik[\widehat\mu] \qquad\text{or, equivalently,}\qquad (\widehat H\cdot S)_T=-V'\left(\radnik[\widehat\mu]\right)-\edow, \]
  and
  \[ \E[\widehat\mu]{\big(\widehat H\cdot S\big)_T}=0. \]
  Since, for such $\widehat\mu$ and $\widehat H$, the inequalities in \eqref{eqn:duality} become equalities, the proof is finished for this case.

  Suppose now that the random endowment satisfies Assumption \ref{ass:newweakerendowment}. Define $\widetilde\edow=\edow-(H'\cdot S)_T$. Then
  \[ x'\le\widetilde\edow\le x''+((H''-H')\cdot S)_T. \]
  Since $H''-H'\in\H^\perm$ it follows from the above inequalities that $(H''-H')\cdot S$ is uniformly bounded from below by the constant $x'-x''$, and thus $H''-H'\in\H^\adm$. Hence $\widetilde\edow$ satisfies the conditions of Assumption \ref{ass:edow}.

  By Proposition \ref{thm:lagrange_duality} there exists a unique $\widehat\mu\in\Cone(\M_V^a)\setminus\{0\}$ such that
  \[ v_{\widetilde\edow} = \E{V\left(\radnik[\widehat\mu]\right)+\radnik[\widehat\mu]\widetilde\edow}. \]
  Since $H'\cdot S$ is a martingale under all measures in $\M_V^a$ it follows that,
  \[ v_\edow=v_{\widetilde\edow}=\E{V\left(\radnik[\widehat\mu]\right)+\radnik[\widehat\mu]\widetilde\edow}
  =\E{V\left(\radnik[\widehat\mu]\right)+\radnik[\widehat\mu]\edow}, \]
  so $\widehat\mu$ is also a minimizer in the dual problem for the endowment $\edow$. It follows from strict convexity of $V$ that the minimizer is unique.

  If, in addition, $\M_V^e\neq\emptyset$ then our earlier results show that $\widehat\mu\in\Cone(\M_V^e)\setminus\{0\}$ and there exists an $\widetilde H\in\H^\perm$ such that
  \[ u_{\widetilde\edow}=\E{U\left((\widetilde H\cdot S)_T+\widetilde\edow\right)}. \]
  Define $\widehat H\coloneqq\widetilde H-H'\in\H^\perm$. Again, since $H'\cdot S$ is a martingale under all measures in $\M_V^a$ it follows that $u_\edow=u_{\widetilde\edow}=v_{\widetilde\edow}=v_\edow$ and
  \begin{equation*}
    u_\edow=u_{\widetilde\edow}=\E{U\big((\widetilde H\cdot S)_T+\widetilde\edow\big)}=
    \E{U\big((\widehat H\cdot S)_T+\edow\big)},
  \end{equation*}
  so $\widehat H\in\H^\perm$ is optimal in the primal problem.\qed
\end{proof}

In the next result, we show that both of our assumptions on the existence of local martingale measures are optimal. In the case when $\M_V^a\neq\emptyset$ but $\M_V^e=\emptyset$, the ``optimal terminal wealth'' will be infinite on the event $\{\radnik[\widehat\mu]=0\}$, which has non-zero measure under $\PP$. There is therefore no hope of representing the optimal terminal wealth as a stochastic integral. Recall that if $U(\infty)=\infty$ then the statements $\M_V^a\neq\emptyset$ and $\M_V^e\neq\emptyset$ are equivalent. One can interpret the following proposition as a version of the Fundamental Theorem of Asset Pricing - it relates the existence of a ``martingale'' measure to the notion of no-arbitrage phrased in terms of finiteness of the maximal utility.
\begin{theorem}\label{thm:minimalassumption}\
  \begin{enumerate}
    \item Under Assumptions \ref{ass:rae} and \ref{ass:edow}, Assumption \ref{ass:noarb} holds (i.e. $\M_V^a$ is non-empty) if and only if $u_\edow^\adm<U(\infty)$. In this case, $u_\edow^\adm=u_\edow$.

    \item Under Assumptions \ref{ass:rae}, \ref{ass:noarb} and \ref{ass:newweakerendowment},
    the set $\M_V^e$ is non-empty if and only if there exists an optimal $\widehat
    H\in\H^\perm$ in the primal problem.
  \end{enumerate}
\end{theorem}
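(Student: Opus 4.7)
The plan is to handle parts (i) and (ii) in turn, each splitting into two implications (plus the value equality in (i)). For part (i), the forward direction $\M_V^a \neq \emptyset \Rightarrow u^\adm_\edow < U(\infty)$ is the easy side: fixing any $\QQ \in \M_V^a$, the process $H \cdot S$ is a $\QQ$-supermartingale for every admissible $H$ (being a local martingale bounded from below), and the same fact applied to $H''$ from Assumption \ref{ass:edow} gives $\E[\QQ]{\edow} \leq x''$. A pointwise Fenchel-Young estimate then yields, for every $y > 0$,
\begin{equation*}
u^\adm_\edow \;\leq\; \E{V\left(y \, \radnik \right)} + y \, x''.
\end{equation*}
The right-hand side is convex in $y$, equals $V(0) = U(\infty)$ at $y = 0$, and has right-derivative $-\infty$ at $0^+$ because $V'(0^+) = -\infty$ (from the Inada condition on $U$). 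Hence it dips strictly below $U(\infty)$ for small $y$, giving $u^\adm_\edow < U(\infty)$.

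The converse direction of part (i) is the main obstacle. The plan is to argue the contrapositive: assume $\M_V^a = \emptyset$ and deduce $u^\adm_\edow = U(\infty)$. Using $\edow \leq x'' + (H'' \cdot S)_T$ and the fact that $K = H + H''$ is admissible whenever $H$ is, one reduces to the deterministic-endowment problem $u^\adm_{x''} := \sup_{H \in \H^\adm} \E{U((H \cdot S)_T + x'')}$, so it suffices to prove that $u^\adm_{x''} = U(\infty)$ whenever no element of $\M^a$ has finite $V$-entropy. This utility-based FTAP will be handled by a Fenchel-Moreau / bipolar duality argument along the lines of Schachermayer (2001) and Biagini-Frittelli (2005): the polar of the cone of admissible superreplicable claims is $\Cone(\M^a)$, and Legendre duality for the concave functional $x \mapsto u^\adm_x$ converts boundedness of the primal away from $U(\infty)$ into the existence of a dual element with finite $V$-entropy. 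The value equality $u^\adm_\edow = u_\edow$ will then follow by localization: given $H \in \H^\perm$, put $H^n := H \, \ind_{\lint 0, \tau_n \rint}$ with $\tau_n := T \wedge \inf\{t : (H \cdot S)_t \leq -n\}$; each $H^n$ is admissible, and since $H \cdot S$ is $\PP$-a.s.\ c\`adl\`ag (hence bounded from below on $[0,T]$) one has $\tau_n = T$ eventually $\PP$-a.s., so $(H^n \cdot S)_T \to (H \cdot S)_T$. The $\QQ$-supermartingale property of $H \cdot S$ under any $\QQ \in \M_V^a$, combined with Assumption \ref{ass:rae}, should supply the domination needed to pass to the limit in expectation.

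For part (ii), the forward implication is immediate from Theorem \ref{thm:main}(iii). For the converse, suppose $\widehat H \in \H^\perm$ is optimal in the primal problem and set $\widehat X := (\widehat H \cdot S)_T + \edow$. By part (i) together with Assumption \ref{ass:noarb}, $u_\edow < U(\infty)$, while the permissible choice $H = -H'$ (from Assumption \ref{ass:newweakerendowment}) yields $(H \cdot S)_T + \edow \geq x'$ and hence $u_\edow \geq U(x') > -\infty$; thus $\widehat X$ is $\PP$-a.s.\ real-valued. Theorem \ref{thm:main}(ii) furnishes a dual minimizer $\widehat \mu \in \Cone(\M_V^a) \setminus \{0\}$ with $v_\edow = u_\edow$. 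Substituting $(\widehat H, \widehat \mu)$ into the duality chain \eqref{eqn:duality} collapses both inequalities into equalities, and equality in the Fenchel-Young step forces
\begin{equation*}
\radnik[\widehat \mu] \;=\; U'(\widehat X) \qquad \PP\text{-a.s.}
\end{equation*}
Since $\widehat X \in \RR$ $\PP$-a.s.\ and $U'$ is strictly positive on $\RR$, we get $\radnik[\widehat \mu] > 0$ $\PP$-a.s., so $\widehat \mu \sim \PP$; the normalization $\widehat \QQ := \widehat \mu / \widehat \mu(\Omega)$ belongs to $\M_V^e$, its finite relative entropy being inherited from $\widehat \mu$ via Assumption \ref{ass:growth}(i).
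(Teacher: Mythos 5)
Your part (ii) is essentially the paper's argument: forward direction from Theorem~\ref{thm:main}, converse by collapsing the duality chain and invoking Fenchel--Young equality to get $\radnik[\widehat\mu]=U'(\widehat X)>0$. One small correction: $\widehat X<\infty$ $\PP$-a.s.\ is not a consequence of $-\infty<u_\edow<U(\infty)$ (expectations being finite on average does not force the integrand to be finite); it holds simply because $\widehat H\in L(S;\PP)$ means $(\widehat H\cdot S)_T$ is a $\PP$-a.s.\ finite random variable, and $\edow$ is real-valued. Your forward direction of part (i) (the Fenchel--Young bound $u^\adm_\edow\le\E{V(y\radnik)}+yx''$ plus $f'(0^+)=-\infty$) is a correct and slightly more elementary route than the paper's, which obtains it via Proposition~\ref{thm:lagrange_duality}.

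The converse of part (i) is not a proof but a plan, and the reduction you propose runs the wrong way. From $\edow\le x''+(H''\cdot S)_T$ you get $u^\adm_\edow\le u^\adm_{x''}$, which is an \emph{upper} bound on $u^\adm_\edow$; proving $u^\adm_{x''}=U(\infty)$ then yields $u^\adm_\edow\le U(\infty)$, which is vacuous. The usable direction is $\edow\ge x'$, giving $u^\adm_\edow\ge u^\adm_{x'}$, so that $u^\adm_{x'}=U(\infty)$ would indeed force $u^\adm_\edow=U(\infty)$. Beyond this sign error, the Fenchel--Moreau/bipolar step is only gestured at. The paper instead proves the converse directly for general endowment by applying the Lagrange Duality Theorem to $(\C^\adm,\D^\adm)$: $u^\adm_\edow=\min_{\mu\in\D^\adm}\VV_\edow(\mu)$, and since this minimum is strictly below $\VV_\edow(0)=U(\infty)$ the minimizer is nonzero, hence (by Lemma~\ref{thm:useful_tool} and \eqref{eqn:repofDadm}) a nonzero element of $\Cone(\M_V^a)$, proving nonemptiness and $u^\adm_\edow=v_\edow=u_\edow$ in one stroke.

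The localization argument for $u^\adm_\edow=u_\edow$ also has a genuine gap: with $\tau_n=T\wedge\inf\{t:(H\cdot S)_t\le -n\}$ and $H^n=H\ind_{\lint 0,\tau_n\rint}$, one has $(H^n\cdot S)_{\tau_n}=(H\cdot S)_{\tau_n^-}+H_{\tau_n}\Delta S_{\tau_n}$, and the jump term is \emph{not} bounded from below merely because $S$ is locally bounded (the integrand $H_{\tau_n}$ can be unbounded), so $H^n$ need not be admissible. Even with a repaired truncation, the claim that $\E{U((H^n\cdot S)_T+\edow)}\to\E{U((H\cdot S)_T+\edow)}$ would require a uniform-integrability / domination argument that you do not supply; Assumption~\ref{ass:rae} alone does not give it. The paper sidesteps all of this by obtaining the value equality from the dual side, since the admissible and permissible dual problems turn out to have the same minimizer.
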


\begin{proof}
  We postpone the proof of part (i) to Section 2. For part (ii), the ``only if'' direction is shown in Theorem \ref{thm:main}. To show the other direction, let $\widehat\mu$ denote the minimizer in the dual problem and suppose there exists an optimal $\widehat H\in\H^\perm$ in the primal problem. Then since, by Theorem \ref{thm:main}, there is no duality gap, and $(\widehat H\cdot S)_T+\edow<\infty$ a.s. we must have $\radnik[\widehat\mu]=U'((\widehat H\cdot S)_T+\edow)>0$ a.s., so $\widehat\mu\sim\PP$.\qed
\end{proof}

\subsubsection*{Finitely Additive Measure Theory}

We finish the introduction with some details concerning the theory of finitely additive signed measures (also known as finitely additive set functions), and refer the reader to \S20 of Hewitt and Stromberg (1965) or Chapter III of Dunford and Schwartz (1964) for further details. Let $\ba$ denote the vector space of all bounded, finitely additive, signed measures on $(\Omega,\F)$ which are absolutely continuous with respect to $\PP$. To be precise, $\ba$ consists of all real-valued functions $\mu$ defined on $\F$ which satisfy the conditions (i) $\sup\{|\mu(A)|:A\in\F\}<\infty$; (ii) $\mu(A)=0$ if $A\in\F$ and $\PP(A)=0$; and (iii) $\mu(A\cup B)=\mu(A)+\mu(B)$ if $A,B\in\F$ and $A\cap B=\emptyset$.

Let $\ba_+$ denote the set of finitely additive unsigned (i.e. non-negative) measures in $\ba$. Elements of $ba$ admit a Jordan decomposition $\mu=\mu_+-\mu_-$ where $\mu_+,\mu_-\in\ba_+$, which allows us to define the total variation of a finitely additive signed measure as $|\mu|=\mu_++\mu_-$.

The space $\ba$ can be endowed with the total variation norm $\|\mu\|\coloneqq|\mu|(\Omega)$. Condition (i) above ensures that the norm is finite. Of course, if $\mu\in\ba_+$ then $\|\mu\|=\mu(\Omega)$. When we discuss the normalization of an non-zero element $\mu\in\ba_+$, we shall mean the measure $\QQ\coloneqq\mu/\|\mu\|=\mu/\mu(\Omega)$ with unit mass.

Let $\ba^\sigma$ be the space consisting of those measures in $\ba$ which are sigma additive (also known in some texts as countably additive). Elements of $\ba^\sigma$ are therefore real-valued functions $\mu$ defined on $\F$ which satisfy condition (i), (ii) and the following strengthened condition (iii)' $\mu(\cup_{n=1}^\infty A_n)=\sum_{n=1}^\infty\mu(A_n)$ for all pairwise disjoint sequences $(A_n)_{n=1}^\infty$ of $\F$.

In addition to the usual theory of integration for measures in $\ba^\sigma$, one may also develop a theory of integration for measures in $\ba$ (see Section III.2 of Dunford and Schwartz (1964) or p.354ff in Hewitt and Stromberg (1965)). As a special case, the expression $\int_\Omega X\d\mu$ is well defined for any random variable $X\in L^\infty(\Omega,\F,\PP)$. For notational convenience, if $\mu\in\ba$ we shall also let $\mu$ denote the corresponding functional $\mu:L^\infty(\Omega,\F,\PP)\rightarrow\RR$ defined by $\ip X\mu\coloneqq\int_\Omega X\d\mu$. The intended meaning of this ``overloaded'' notation will be clear from the context.

It can be shown that the mapping $L^\infty\ni X\mapsto\mu(X)$ is continuous with respect to the $L^\infty$ norm topology. In fact, the correspondence between finitely additive signed measures and continuous linear functionals provides an isometric isomorphism between $(\ba,\|.\|)$ and $L^\infty(\Omega,\F,\PP)^*$ (see Theorem 20.35 of Hewitt and Stromberg (1965)). It follows therefore that $(\ba,\|.\|)$ is a Banach space, with a $\sigma(\ba,L^\infty)$-compact unit ball $\{\mu\in\ba:\|\mu\|\le1\}$. Meanwhile for signed measures $\mu\in\ba^\sigma$ the mapping $\mu\mapsto\radnik[\mu]$ provides an isometric isomorphism of $(\ba^\sigma,\|.\|)$ onto $L^1(\Omega,\F,\PP)$.

\section{A Global Constrained Optimization Problem}\label{sec:twwwwooooo}

In order to solve the constrained optimization problem \eqref{eqn:primalproblem}, it is convenient to pass to a related, abstract version of the problem which is formulated within $L^\infty$. This allows us to appeal to the duality theory between $L^\infty$ and $ba$. Throughout this section we assume that the random endowment $\edow$ satisfies Assumption \ref{ass:edow}. We also assume that Assumption \ref{ass:noarb} holds throughout, with the exception of Lemma \ref{thm:useful_tool} and the proof of Theorem \ref{thm:minimalassumption}(i), which is found at the end of the section.
\begin{definition}
  Define the concave functional $\UU_\edow:L^\infty\rightarrow\RR$ by
  \begin{equation}\label{eqn:defU}
    \UU_\edow(X)\coloneqq\E{U(X+\edow)}.
  \end{equation}
\end{definition}
Note that the expectation above exists and is finite because $U(X+\edow)$ is bounded below by a constant, and above by any of the integrable random variables $V\left(\radnik\right)+(X+\edow)\radnik\in L^1(\PP)$ with $\QQ\in \M_V^a$.

We now consider the abstract maximization problem
\begin{equation}\label{eqn:abstrmax} \sup_{G\in\C}\UU_{\edow}(G),\qquad\text{where}\qquad
  \C\coloneqq\{X\in L^\infty:X\le (H\cdot S)_T\text{ for some }H\in\H^\perm\}
\end{equation}
is the cone of bounded random variables which are dominated by the terminal gain of a permissible trading strategy. Note that it follows immediately from the definition of $\C$ that
\begin{equation}\label{eqn:dunno}
  \sup_{G\in\C}\UU_{\edow}(G)\le u_\edow.
\end{equation}
The convex functional $\VV_\edow:\ba\rightarrow(-\infty,\infty]$, conjugate to $\UU_\edow$, is defined by (c.f. \v{Z}itkovi\'c (2005))
\[ \VV_\edow(\mu)\coloneqq\sup_{X\in L^\infty}(\UU_\edow(X)-\ip X\mu). \]
The functional $\VV_\edow$ is lower semicontinuous with respect to the weak* topology $\sigma(\ba,L^\infty)$ because it is defined as the supremum of the $\sigma(\ba,L^\infty)$-continuous affine linear functionals $f_X:\ba\rightarrow\RR$ given by $f_X(\mu)\coloneqq\UU_\edow(X)-\ip X\mu$.

We define an abstract dual problem to \eqref{eqn:abstrmax} by
\begin{equation}\label{eqn:abstrglobdual}
  \inf_{\mu\in\D}\VV_\edow(\mu),\qquad\text{where}\qquad \D\coloneqq\{\mu\in \ba:\ip
  X{\mu}\le 0\text{ for all }X\in \C\}
\end{equation}
is the polar cone to $\C$. Note that $\D\subseteq\ba_+$ because $-L_+^\infty\subseteq \C$.

A well known route to solving the constrained optimization problem \eqref{eqn:abstrmax} is to introduce the Lagrangian $L(X,\mu)\coloneqq\UU_\edow(X)-\ip X\mu$, in which the measure $\mu$ acts as a Lagrange multiplier. Note that
\[ \sup_{X\in\C}\UU_{\edow}(X)\le\sup_{X\in L^\infty}\inf_{\mu\in\D}L(X,\mu)\le\inf_{\mu\in\D}\sup_{X\in L^\infty}L(X,\mu)=\inf_{\mu\in\D}\VV_\edow(\mu). \]

The following lemma probes deeper into the structure of the dual functional $\VV_\edow$ and relates the abstract approach to duality employed in this paper to the one used, for example, by Cvitani\'c, Schachermayer and Wang (2001).
\begin{lemma}\label{thm:useful_tool}
  If the endowment $\edow$ satisfies Assumption \ref{ass:edow} then
  \begin{equation*}
    \VV_\edow(\mu)=
    \begin{cases}\displaystyle\E{V\left(\radnik[\mu]\right)+\radnik[\mu]\edow},&\text{ if }\mu\in\ba^\sigma_+;\\
    \infty,&\text{ otherwise}.\end{cases}
  \end{equation*}
\end{lemma}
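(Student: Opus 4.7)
The plan is to split the finitely additive universe based on $\mu$'s structure and to handle each piece with a pointwise Fenchel-type estimate. If $\mu$ lies outside $\ba^\sigma_+$, I will drive $\UU_\edow-\ip{\cdot}{\mu}$ to $+\infty$ along explicit $L^\infty$-sequences; if $\mu\in\ba^\sigma_+$ with density $f\coloneqq\radnik[\mu]$, I will identify $\VV_\edow(\mu)=\E{V(f)+\edow f}$ via Fenchel for the upper bound, and an explicit $L^\infty$-approximation of the pointwise optimiser $-V'(f)$ for the lower bound.

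If $\mu\notin\ba_+$, choose $A\in\F$ with $\mu(A)<0$ and test $X_c\coloneqq c\ind_A$. Since $X_c+\edow\ge\edow$, one has $\UU_\edow(X_c)\ge\UU_\edow(0)$, while $\ip{X_c}{\mu}=c\mu(A)\to-\infty$ as $c\to+\infty$, so $\VV_\edow(\mu)=+\infty$. If $\mu\in\ba_+\setminus\ba^\sigma$, apply the Yosida--Hewitt decomposition $\mu=\mu^r+\mu^s$ with $\mu^s\ne 0$ purely finitely additive, pick $A_n\in\F$ (decreasing, modulo $\PP$-null) with $\PP(A_n)\to 0$ and $\mu^s(A_n)\to\|\mu^s\|$, and test $X=-c\ind_{A_n}$. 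Dominated convergence gives $\UU_\edow(-c\ind_{A_n})\to\UU_\edow(0)$ (since $\PP(A_n)\to 0$), while $\ip{-c\ind_{A_n}}{\mu}=-c\mu(A_n)\to -c\|\mu^s\|$ by sigma-additivity of $\mu^r$, so $\VV_\edow(\mu)\ge\UU_\edow(0)+c\|\mu^s\|\to+\infty$.

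For $\mu\in\ba^\sigma_+$ the upper bound $\VV_\edow(\mu)\le\E{V(f)+\edow f}$ is immediate from the pointwise Fenchel inequality $U(X+\edow)\le V(f)+(X+\edow)f$. For the lower bound I approximate the unbounded optimiser $-V'(f)-\edow$ by
\[
X_M\coloneqq(-V'(f)-\edow)\ind_{A_M}+M\ind_{\{f=0\}}\in L^\infty,\qquad A_M\coloneqq\{1/M\le f\le M\}\cap\{|\edow|\le M\}.
\]
Using Fenchel equality $U(-V'(f))=V(f)-fV'(f)$ on $A_M$, a direct computation gives
\[
\UU_\edow(X_M)-\ip{X_M}{\mu}=\E{(V(f)+\edow f)\ind_{A_M}}+\E{U(M+\edow)\ind_{\{f=0\}}}+\E{U(\edow)\ind_{B_M}},
\]
where $B_M\coloneqq\Omega\setminus(A_M\cup\{f=0\})$. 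As $M\to\infty$: monotone convergence (after subtracting the integrable floor $U(0)+x'f$) sends the first term to $\E{(V(f)+\edow f)\ind_{\{f>0\}}}$; monotone convergence on the second delivers $V(0)\PP(f=0)=U(\infty)\PP(f=0)$, matching the $\{f=0\}$-contribution; dominated convergence on the third (using $U(\edow)\in L^1$, sandwiched between $U(x')$ and $V(\radnik[\QQ])+\edow\radnik[\QQ]$ for any $\QQ\in\M_V^a$) sends it to $0$.

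The principal obstacle is the lower bound in the sigma-additive case: the pointwise Fenchel optimiser $-V'(f)-\edow$ generically fails to lie in $L^\infty$, both because $|V'|$ blows up near $0$ and $\infty$ and because $\edow$ is unbounded above under Assumption~\ref{ass:edow}. The joint exhaustion set $A_M$ controls both sources of unboundedness simultaneously, while the separate large-constant perturbation on $\{f=0\}$ is indispensable whenever $V(0)=U(\infty)>0$ (possibly $+\infty$), since on that event the pointwise Fenchel supremum is attained only in the limit $X+\edow\to+\infty$.
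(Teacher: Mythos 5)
Your proof is correct. The reduction to $\ba^\sigma_+$ and the Fenchel upper bound are essentially identical to the paper's. The lower-bound construction is organized differently: the paper truncates the pointwise optimiser and the endowment \emph{separately} with two parameters $m,n$ (testing $G_{m,n}=X_m-\edow\ind_{\{\edow\le n\}}$ where $X_m=(-m)\vee(-V'(f)\wedge m)$) and introduces the truncated conjugate $V_m(y)\coloneqq\sup_{|x|\le m}(U(x)-xy)\nearrow V(y)$ to identify the limit; you instead use a single joint exhaustion set $A_M=\{1/M\le f\le M\}\cap\{|\edow|\le M\}$ plus an explicit boost $M\ind_{\{f=0\}}$ and split $\Omega$ into three regions. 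Both routes must handle the event $\{f=0\}$ carefully: the paper does this implicitly via $V_m(0)=U(m)\nearrow U(\infty)=V(0)$, while you do it explicitly, which makes the role of that event more transparent (as you rightly emphasise at the end). Your version trades the auxiliary conjugate $V_m$ for a three-way decomposition; the paper's version trades the extra region bookkeeping for the double-index truncation. Two small remarks: (a) for $\UU_\edow(-c\ind_{A_n})\to\UU_\edow(0)$ in the non-$\sigma$-additive case, monotone convergence is cleaner than dominated convergence since $-c\ind_{A_n}$ increases in $n$ (the paper uses MCT); (b) invoking Yosida--Hewitt is fine but slightly more than needed --- the paper simply extracts the defect of $\sigma$-additivity directly to produce the decreasing sequence $A_n$ with $\PP(A_n)\to 0$ and $\inf_n\mu(A_n)>0$.
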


\begin{proof}
  Suppose first that $\mu\not\in\ba_+$. Then there exists $A\in\F$ such that
  $\mu(A)<0$. Hence $\VV_\edow(\mu)\ge\sup_{n\ge0}(\UU_\edow(n\ind_A)-n\mu(A))
  \ge\UU_\edow(0)-\mu(A)\sup_{n\ge0}n=\infty$.

  If $\mu\in\ba_+\setminus\ba^\sigma_+$ then it follows (from the definition of a sigma additive
  measure) that there exists a decreasing sequence $\{A_n\}_{n\in\NN}$ of events such that $\PP[A_n]\rightarrow0$ and $\inf_n\mu[A_n]>0$. Given a constant $k>0$, set $X_n^k=-k\ind_{A_n}$. Then $\UU_\edow(X_n^k)\nearrow\UU_\edow(0)$ as $n\rightarrow\infty$ by the monotone convergence theorem, so $\VV_\edow(\mu)\ge\sup_{k>0}\sup_n(\UU_\edow(X_n^k)-\ip{X_n^k}\mu) \ge \UU_\edow(0)+\sup_{k>0}k\inf_n\mu(A_n) = \infty$.

  For the remainder of the proof, we consider those $\mu\in\ba^\sigma_+$. Define $X=-V'\left(\radnik[\mu]\right)$. For $m,n\in\NN$, define $X_m=(-m)\vee(X\wedge m)\in L^\infty$, an $G_{m,n}\coloneqq X_m-\edow\ind_{\{\edow\le n\}}\in L^\infty$, so that $\UU_\edow(G_{m,n})\ge\E{U(X_m)}$. An application of the monotone convergence theorem gives $-\ip{G_{m,n}}\mu=\E[\mu]{\edow\ind_{\{\edow\le n\}}-X_m}\nearrow\E[\mu]{\edow-X_m}$ as $n\rightarrow\infty$. Define $V_m(y)\coloneqq\sup_{-m\le x\le m}(U(x)-xy)$. Clearly, $U(0)\le V_m(y)\nearrow V(y)$ as $m\rightarrow\infty$, and it is easy to show that the supremum is attained at $x=(-m)\vee((-V'(y))\wedge m)$. Therefore, by the monotone convergence theorem we have
  \begin{align*}
  \VV_\edow(\mu)&\ge\sup_m\sup_n(\UU_\edow(G_{m,n})-\ip {G_{m,n}}\mu)\ge\sup_m\E{U(X_m)-\radnik[\mu]X_m+\radnik[\mu]\edow} \\
  &=\sup_m\E{V_m\left(\radnik[\mu]\right)+\radnik[\mu]\edow}=\E{V\left(\radnik[\mu]\right)+\radnik[\mu]\edow}.
  \end{align*}
  For the opposite inequality take any $X\in L^\infty$. Then
  \[ \UU_\edow(X)-\ip X\mu=\E{U(X+\edow)-\radnik[\mu](X+\edow)+\radnik[\mu]\edow} \le \E{V\left(\radnik[\mu]\right)+\radnik[\mu]\edow} \]
  so $\VV_\edow(\mu)\le\E{V\left(\radnik[\mu]\right)+\radnik[\mu]\edow}$.\qed
\end{proof}

\begin{lemma}\label{thm:technical}
  \[ \Cone(\M_V^a)\subseteq\D\cap\ba^\sigma\subseteq\Cone(\M^a). \]
\end{lemma}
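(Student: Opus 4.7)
My plan is to verify the two inclusions separately; both reduce to fairly direct manipulations built on the defining properties of $\H^\perm$ and of the polar cone $\D$, together with local boundedness of $S$.

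For the first inclusion $\Cone(\M_V^a)\subseteq\D\cap\ba^\sigma$, I take an arbitrary $\mu=y\QQ$ with $y\ge0$ and $\QQ\in\M_V^a$. Sigma-additivity and $\mu\in\ba$ are automatic since $\QQ$ is a $\PP$-absolutely-continuous probability measure. To place $\mu$ in $\D$, I pick $X\in\C$, write $X\le(H\cdot S)_T$ for some $H\in\H^\perm$, and invoke the defining supermartingale property: since $H\cdot S$ is a $\QQ$-supermartingale starting at zero, $\E[\QQ]{(H\cdot S)_T}\le0$, and the boundedness of $X$ yields $\ip X\mu=y\E[\QQ]{X}\le y\E[\QQ]{(H\cdot S)_T}\le0$.

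For the more substantive second inclusion $\D\cap\ba^\sigma\subseteq\Cone(\M^a)$, take $\mu\in\D\cap\ba^\sigma$. The observation $\D\subseteq\ba_+$ noted just before the lemma already yields $\mu\in\ba^\sigma_+$. If $\mu=0$ there is nothing to prove; otherwise set $y\coloneqq\mu(\Omega)>0$ and $\QQ\coloneqq\mu/y$, a $\PP$-absolutely-continuous probability measure. It suffices to verify that $S$ is a $\QQ$-local martingale. Using local boundedness, I pick stopping times $\tau_n\nearrow\infty$ for which $S^{\tau_n}$ is bounded. For each coordinate $i$, each $0\le s\le t\le T$ and each $A\in\F_s$, the simple predictable strategy $H\coloneqq e_i\ind_A\ind_{(s,t]}\ind_{\lint 0,\tau_n\rint}$ has uniformly bounded integral $H\cdot S=(S^{i,\tau_n}_{\cdot\wedge t}-S^{i,\tau_n}_{\cdot\wedge s})\ind_A$. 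Hence both $\pm H$ lie in $\H^\adm\subseteq\H^\perm$, and both $\pm(H\cdot S)_T$ lie in $\C$; the polar-cone property then forces $\ip{(H\cdot S)_T}\mu=0$. Rewriting via $\QQ=\mu/y$ gives $\E[\QQ]{(S^{i,\tau_n}_t-S^{i,\tau_n}_s)\ind_A}=0$, i.e.\ the $\QQ$-martingale property of $S^{i,\tau_n}$. Letting $n\to\infty$ yields $\QQ\in\M^a$, so $\mu=y\QQ\in\Cone(\M^a)$.

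The main obstacle is not a deep idea but careful bookkeeping at two points: one must verify that the simple test strategies really are bounded, so that they actually lie in the narrow class $\H^\adm$ whose terminal gains automatically belong to $\C$; and one must correctly identify the abstract pairing $\ip X\mu$ with ordinary $\QQ$-integration of $X$ against $\mu/\mu(\Omega)$—a step that is valid precisely because $\mu$ is countably additive and $X$ is bounded. No machinery beyond the definitions of $\H^\perm$, $\C$, $\D$ and local boundedness of $S$ is needed.
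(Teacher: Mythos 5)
Your proof is correct. The first inclusion matches the paper's own argument. For the second, the paper simply cites Theorem 5.6 of Delbaen and Schachermayer (1994) (using implicitly that $\D\subseteq\D^\adm$, the polar of $\C^\adm$, since $\C^\adm\subseteq\C$), whereas you give a self-contained argument by testing the polar inequality against the simple bounded integrands $\pm e_i\ind_A\ind_{(s,t]}\ind_{\lint 0,\tau_n\rint}$. Your bookkeeping is sound on all the points that matter: $\ind_{\lint 0,\tau_n\rint}$ is adapted and left-continuous, hence predictable; the stopped integral is bounded, so these test integrands lie in $\H^\adm\subseteq\H^\perm$ and $\pm(H\cdot S)_T\in\C$; the polar relation then forces $\ip{(H\cdot S)_T}\mu=0$, i.e.\ $\E[\QQ]{(S^{i,\tau_n}_t-S^{i,\tau_n}_s)\ind_A}=0$ for $\QQ\coloneqq\mu/\mu(\Omega)$ and all $A\in\F_s$; and since $\QQ\ll\PP$ the sequence $\tau_n$ still localizes $S$ under $\QQ$, so $\QQ\in\M^a$. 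What you gain over the paper's route is transparency at essentially no extra cost: your argument unwinds only the elementary half of the cited Delbaen--Schachermayer result, needs nothing beyond local boundedness of $S$ and the definitions of $\C$ and $\D$, and is just the classical verification that a measure polar to the terminal gains of elementary strategies is a local martingale measure.
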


\begin{proof}
  Take any $\mu\in\Cone(\M_V^a)$. It follows from the definition of $\H^\perm$ that $\mu(G)\le0$ for all $G\in\C$, so $\mu\in\D$. The second inclusion follows from Theorem 5.6 of Delbaen and Schachermayer (1994).\qed
\end{proof}

We give next a slight development of Proposition 3.1 of Kabanov and Stricker (2002), which will be useful in the proofs of Propositions \ref{thm:lagrange_duality} and \ref{thm:suboptimal}.
\begin{lemma}\label{thm:kabastri}
  Suppose that Assumptions \ref{ass:rae}, \ref{ass:edow} and \ref{ass:noarb} hold. Let $\mathcal D\subseteq\D$ be a non-empty convex set such that $\VV_\edow(\mu)<\infty$ for some $\mu\in\mathcal D$. Suppose that $\widehat\mu\in\mathcal D$ is a minimizer for the problem $\inf_{\mu\in\mathcal D}\VV_\edow(\mu)$. Then $\mu\ll\widehat\mu$ for any $\mu\in\mathcal D$ such that $\VV_\edow(\mu)<\infty$. As a result, any minimizer must lie in $\Cone(\M_V^a)\setminus\{0\}$. Furthermore, if $\M_V^e$ is non-empty then $\widehat\mu\sim\PP$.
\end{lemma}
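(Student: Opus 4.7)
The plan is a first-order variational argument. I perturb the optimizer along $\widehat\mu_\lambda \coloneqq (1-\lambda)\widehat\mu + \lambda\mu$ for any $\mu \in \mathcal D$ with $\VV_\edow(\mu) < \infty$, and exploit the singularity $V'(0) = -\infty$ (a consequence of the Inada condition $U'(\infty) = 0$) to forbid positive-$\PP$-measure mass on $\{d\widehat\mu/d\PP = 0\}$. Once the absolute continuity $\mu \ll \widehat\mu$ is in hand, membership in $\Cone(\M_V^a)$ drops out of Lemmas \ref{thm:useful_tool} and \ref{thm:technical}, and the equivalence under $\M_V^e \neq \emptyset$ follows by feeding in any $\mu \in \mathcal D$ that is equivalent to $\PP$.

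Write $f = d\widehat\mu/d\PP$, $g = d\mu/d\PP$, $h_\lambda = (1-\lambda)f + \lambda g$. Optimality of $\widehat\mu$ together with convexity of $\mathcal D$ yields
\[ 0 \le \VV_\edow(\widehat\mu_\lambda) - \VV_\edow(\widehat\mu) = \E{V(h_\lambda) - V(f)} + \lambda\,\E{(g-f)\edow} \]
for all $\lambda \in (0,1]$. The cross term is linear in $\lambda$ and finite: Assumption \ref{ass:edow} together with the supermartingale property of admissible integrals under any $\QQ \in \M^a$ forces both $\int \edow\, d\widehat\mu$ and $\int \edow\, d\mu$ into $\RR$. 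For the convex term, convexity of $V$ makes $\lambda \mapsto [V(h_\lambda) - V(f)]/\lambda$ nondecreasing on $(0,1]$ and bounded above by $V(g) - V(f) \in L^1(\PP)$. As $\lambda \searrow 0$ the difference quotient decreases pointwise to $V'(f)(g - f)$ on $\{f > 0\}$, to $0$ on $\{f = g = 0\}$, and to $-\infty$ on $A := \{f = 0,\, g > 0\}$. Monotone convergence applied to the nonnegative increasing family $V(g) - V(f) - [V(h_\lambda) - V(f)]/\lambda$ then gives $\E{[V(h_\lambda) - V(f)]/\lambda} \to -\infty$ whenever $\PP(A) > 0$, and combined with the bounded cross term this would force strict negativity for small $\lambda$, contradicting optimality. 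Hence $\PP(A) = 0$, i.e.\ $\mu \ll \widehat\mu$.

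The remaining claims are quick corollaries. Since $\VV_\edow(\widehat\mu) < \infty$, Lemma \ref{thm:useful_tool} places $\widehat\mu \in \ba^\sigma_+$, Lemma \ref{thm:technical} promotes this to $\widehat\mu \in \Cone(\M^a)$, and finiteness of $\int \edow\, d\widehat\mu$ upgrades this further to $\widehat\mu \in \Cone(\M_V^a)$. The case $\widehat\mu = 0$ is ruled out by noting that if $U(\infty) = \infty$ then $\VV_\edow(0) = V(0) = \infty$ contradicts minimality, while if $U(\infty) < \infty$ the same variational argument applied along the ray $\lambda \mapsto \lambda \mu_0$ for any nonzero $\mu_0 \in \mathcal D$ with $\VV_\edow(\mu_0) < \infty$ shows, via $V'(0) = -\infty$, that $\VV_\edow$ strictly decreases near $\lambda = 0$. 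Finally, if $\M_V^e \neq \emptyset$ we apply $\mu \ll \widehat\mu$ to any $\mu \in \mathcal D$ lying in the cone generated by $\M_V^e$ to conclude $\widehat\mu \sim \PP$. The main obstacle throughout is the careful bookkeeping of the monotone convergence at the $-\infty$ singularity together with the integrability of $\int(g-f)\edow\, d\PP$ despite $\edow$ being only bounded below; both are dispatched by convexity of $V$ and the admissibility bound in Assumption \ref{ass:edow}.
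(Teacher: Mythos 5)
Your proof is correct and takes essentially the same route as the paper's: a first-order perturbation along the segment from $\widehat\mu$ to $\mu$, combined with monotone convergence and the Inada singularity $V'(0)=-\infty$ to rule out mass of $\mu$ on $\{\radnik[\widehat\mu]=0\}$. The only stylistic difference is that you separate the endowment term from the entropy term, while the paper keeps them together inside the single convex random variable $\nu_\lambda = V\left(\radnik[\mu_\lambda]\right)+\radnik[\mu_\lambda]\edow$ and localizes the monotone-convergence computation to the event $A$.
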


\begin{proof}
  Let $\widehat\mu$ be a minimizer, and take any $\mu$ such that $\VV_\edow(\mu)<\infty$. By Lemma \ref{thm:useful_tool}, both $\mu$ and $\widehat\mu$ must be sigma additive and have finite relative entropy. It follows from Lemma \ref{thm:technical} that $\mu,\widehat\mu\in\Cone(\M_V^a)$.

  Suppose for a contradiction there exists an event $A\in\F$ such that $\widehat\mu(A)=0$ and $\mu(A)>0$. For $\lambda\in[0,1]$, define $\mu_\lambda\coloneqq\lambda\mu+(1-\lambda)\widehat\mu\in\mathcal D\cap\Cone(\M_V^a)$ and $\nu_\lambda\coloneqq V\left(\radnik[\mu_\lambda]\right)+\radnik[\mu_\lambda]\edow$. Since $\lambda\mapsto\nu_\lambda$ is convex, the integrable random variables $(\nu_\lambda-\nu_0)/\lambda$ are monotone increasing in $\lambda$. By the monotone convergence theorem,
  \[ \lim_{\lambda\searrow0}\E{\ind_A\left(\frac{\nu_\lambda-\nu_0}\lambda\right)}=\E{\ind_A\lim_{\lambda\searrow0}\left(\frac{\nu_\lambda-\nu_0}\lambda\right)}=\E{\ind_A\radnik[\mu](V'(0)+\edow)}=-\infty. \]
  Hence $\lim_{\lambda\searrow0}\frac1\lambda\E{\nu_\lambda-\nu_0}=-\infty$. However, optimality of $\widehat\mu$ implies that $\E{\nu_\lambda-\nu_0}=\VV_\edow(\mu_\lambda)-\VV_\edow(\widehat\mu)\ge0$ for all
  $\lambda$, which is the required contradiction. The two final statements follow immediately.\qed
\end{proof}
While the abstract maximization problem \eqref{eqn:abstrmax} does not generally have an optimal solution $\widehat G\in\C$, the dual problem $\inf_{\mu\in\D}\VV_\edow(\mu)$ does have an optimal solution $\widehat\mu\in\D$ which turns out to be the measure in Theorem \ref{thm:main}. The following proposition demonstrates the existence of the optimal measure in the dual problem by employing the Lagrange Duality Theorem. The use of this powerful and elegant theorem (which is a version of the Separating Hyperplane Theorem) replaces the usual application of the well known minimax theorem.
\begin{proposition}\label{thm:lagrange_duality}
  Under Assumptions \ref{ass:rae}, \ref{ass:edow} and \ref{ass:noarb} there exists a unique minimizer $\widehat\mu\in\Cone(\M_V^a)\setminus\{0\}$ in the dual problems \eqref{eqn:maindual} and \eqref{eqn:abstrglobdual}, and there is no duality gap in \eqref{eqn:easypart}. Indeed,
  \begin{equation}\label{eqn:partofit}
    \sup_{G\in\C}\UU_\edow(G)=u_\edow=v_\edow=\min_{\mu\in\D}\VV_\edow(\mu)<U(\infty).
  \end{equation}
\end{proposition}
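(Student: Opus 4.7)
The strategy is to apply the Lagrange Duality Theorem (equivalently, Fenchel--Rockafellar) to the abstract primal $\sup_{X\in\C}\UU_\edow(X)$ on $L^\infty$, which will simultaneously yield strong duality and the existence of a minimizer $\widehat\mu\in\D$; then to transfer $\widehat\mu$ into $\Cone(\M_V^a)\setminus\{0\}$ via Lemmas~\ref{thm:useful_tool} and \ref{thm:kabastri}; and finally to identify the resulting value with $u_\edow$ and $v_\edow$. To invoke Lagrange duality on $(L^\infty,\|\cdot\|_\infty)$ I would first check that the convex cone $\C$ has non-empty norm interior: since $-L^\infty_+\subseteq\C$ and every $X$ with $\esssup X<0$ sits in the interior of $-L^\infty_+$, this is immediate. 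Next I would verify that $\UU_\edow$ is real-valued and $\|\cdot\|_\infty$-continuous, which follows by dominated convergence: for $X_n\to X$ in $L^\infty$, $U(X_n+\edow)$ converges a.s., is bounded below by the constant $U(x'-\sup_n\|X_n\|_\infty)$ (using $\edow\ge x'$ from Assumption~\ref{ass:edow}), and above by the $L^1(\PP)$ dominant $V(\radnik[\QQ])+(\sup_n\|X_n\|_\infty+\edow)\radnik[\QQ]$ for any $\QQ\in\M_V^a$ (which exists by Assumption~\ref{ass:noarb}; note $\edow\in L^1(\QQ)$ because $H''\cdot S$ is a $\QQ$-supermartingale). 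With these two conditions in hand, Lagrange duality yields
\[ \sup_{X\in\C}\UU_\edow(X)=\min_{\mu\in\D}\VV_\edow(\mu), \]
with the infimum attained at some $\widehat\mu\in\D$.

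Since the common value is at most $\VV_\edow(\QQ)<\infty$, Lemma~\ref{thm:useful_tool} forces $\widehat\mu\in\ba^\sigma_+$, and Lemma~\ref{thm:kabastri} applied with $\mathcal D=\D$ (non-empty and containing finite-entropy measures by Lemma~\ref{thm:technical}) places $\widehat\mu\in\Cone(\M_V^a)\setminus\{0\}$, provided I know $\widehat\mu\neq 0$, equivalently $v_\edow<\VV_\edow(0)=U(\infty)$. To secure this strict bound I would fix any $\QQ\in\M_V^a$ and analyse the convex function $f(y)\coloneqq\VV_\edow(y\QQ)=\E{V(y\radnik[\QQ])}+y\E[\QQ]{\edow}$ for $y\ge 0$: $f(0)=V(0)=U(\infty)$, $\E[\QQ]{\edow}\in[x',x'']$ is finite, and the Inada identity $V'(0+)=-\infty$ combined with monotone convergence applied to the difference quotient $(V(y\radnik[\QQ])-V(0))/y$ (which is increasing in $y$ by convexity of $V$ and integrable at any fixed $y_0$ since $y_0\QQ\in\Cone(\M_V^a)$) delivers $f'(0+)=-\infty$, so $f(y)<U(\infty)$ for all sufficiently small $y>0$. (When $U(\infty)=\infty$ the bound $v_\edow<\infty$ is immediate from $\VV_\edow(\QQ)<\infty$.)

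Lemma~\ref{thm:kabastri} also ensures $\min_{\mu\in\D}\VV_\edow=\min_{\mu\in\Cone(\M_V^a)}\VV_\edow=v_\edow$, so combining with the weak duality chain $\sup_{G\in\C}\UU_\edow(G)\le u_\edow\le v_\edow$ from \eqref{eqn:dunno} and \eqref{eqn:easypart} and the strong duality above, all four expressions in \eqref{eqn:partofit} coincide. Uniqueness of $\widehat\mu$ comes from strict convexity of $V$: for two distinct minimizers in $\Cone(\M_V^a)$ the densities differ on a positive-measure set, so $\VV_\edow$ is strictly smaller at their midpoint, contradicting joint minimality. The main technical hurdle is the infinite-dimensional Lagrange duality application itself---specifically, the norm continuity of $\UU_\edow$ on $L^\infty$ (which draws on both the lower bound on $\edow$ from Assumption~\ref{ass:edow} and the existence of a finite-entropy martingale measure) and the exhibition of a Slater-type interior point of $\C$; once these are secured, Lemmas~\ref{thm:useful_tool}--\ref{thm:kabastri} and the Inada condition at $0$ do the rest.
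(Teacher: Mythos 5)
Your proof is correct and follows essentially the same route as the paper: Luenberger's Lagrange Duality Theorem on $L^\infty$ with Slater point in the interior of $-\C$, then Lemmas~\ref{thm:useful_tool} and \ref{thm:kabastri} to place the minimizer in $\Cone(\M_V^a)\setminus\{0\}$, then strict convexity of $V$ for uniqueness. The only cosmetic divergence is that you prove $v_\edow<U(\infty)$ directly from the Inada condition $V'(0+)=-\infty$ in order to secure $\widehat\mu\neq0$, whereas Lemma~\ref{thm:kabastri} as stated already yields $\widehat\mu\neq0$ (its proof hinges on the same Inada fact) and the paper reads off $v_\edow<U(\infty)$ as a consequence; your explicit verification of the $\|\cdot\|_\infty$-continuity of $\UU_\edow$ is also not needed, since Luenberger's Theorem~8.6.1 requires only convexity of $f=-\UU_\edow$, the Slater condition, and finiteness of the constrained infimum.
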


\begin{proof}
  From \eqref{eqn:dunno}, \eqref{eqn:easypart}, Lemmas \ref{thm:useful_tool} and \ref{thm:technical} we have
  \begin{equation*}
    \sup_{G\in\C}\UU_{\edow}(G)\le u_\edow\le v_\edow=\inf_{\mu\in\D}\VV_\edow(\mu)<\infty.
  \end{equation*}
  The result now follows simply by applying the Lagrange Duality Theorem: Following the notation of Luenberger (1969), \S8, we set $X=Z=L^\infty$, $\Omega=X$, and let $G:X\rightarrow Z$ be the identity operator. Let $P\coloneqq-\C$ be the positive cone of $Z$, so that the positive cone $P^\oplus$ of $Z^*$ is $\D$. Note that, as a subset of $Z$, the cone $P$ contains an interior point, namely the random variable $1$. It is easy to verify that the concave dual of the convex functional $f\coloneqq-\UU_{\edow}$ is $\phi(\mu)=-\VV_\edow(\mu)$. Applying Theorem 8.6.1 of Luenberger (1969) gives
  \begin{equation*}
    \sup_{C\in \C}\UU_\edow(C)=-\inf_{G(C)\le0}f(C)=-\max_{\mu\ge0}\phi(\mu)=\min_{\mu\in\D}\VV_\edow(\mu),
  \end{equation*}
  and that the optima on the right are achieved by some $\widehat\mu\in\D$. Applying Lemma \ref{thm:kabastri}, with $\mathcal D=\D$, we see that $\widehat\mu\in\Cone(\M_V^a)\setminus\{0\}$.

  To prove uniqueness, suppose for a contradiction that there are two minimizers $\mu_1,\mu_2\in\Cone(\M_V^a)$ with $\mu_1\neq\mu_2$ and define $\mu\coloneqq\frac12\mu_1+\frac12\mu_2\in\Cone(\M_V^a)$. Since $V$ is strictly convex, we may apply Lemma \ref{thm:useful_tool} to see that $\VV_\edow(\mu)<\frac12\VV_\edow(\mu_1)+\frac12\VV_\edow(\mu_2)=\min_{\mu\in\D}\VV_\edow(\mu)$, which is the required contradiction. Finally, since $0$ cannot be a minimizer, we must therefore have $\VV_\edow(\widehat\mu)<\VV_\edow(0)=U(\infty)$.\qed
\end{proof}

\begin{remark}\label{rem:one_thing}
  It is worth noting that essentially all of the results in this paper can be proved without appealing to Proposition \ref{thm:lagrange_duality}. Indeed, we can show that $\inf_{y>0}v_\edow(y)$ is attained by a unique $\hat y\in(0,\infty)$ by noting from Corollary \ref{thm:propofv}, later, that $v$ is strictly convex; one can easily rule out the possibility that the infimum is attained at $y=0$ by setting $\mathcal D=\Cone(\M_V^a)$ in Lemma \ref{thm:kabastri}, and at $y=\infty$ by noting that $V'(\infty)=\infty$. Again, from Corollary \ref{thm:propofv} there exists a unique probability measure $\optQQ_{\hat y}$, which attains the infimum in the dual problem \eqref{eqn:simpledual}. From here it is easy to show that $\widehat\mu\coloneqq\hat y\optQQ_{\hat y}$ minimizes $\inf_{\mu\in\D}\VV_\edow(\mu)$. If, in addition, $\M_V^e\neq\emptyset$, then one can show the existence of an optimal trading strategy and that there is no duality gap by applying Proposition \ref{thm:defoptgain} and Theorem \ref{thm:supermart}.
\end{remark}

\begin{proof}[ of Theorem \ref{thm:minimalassumption}(i)]
  We have shown in Proposition \ref{thm:lagrange_duality} that $\M_V^a\neq\emptyset$ implies that $u_\edow<U(\infty)$, so the ``only if'' direction follows easily since $u_\edow^\adm\le u_\edow$. For the case of deterministic endowment, the ``if'' direction follows from Corollary 1 of Biagini and Frittelli (2005). We give here a proof for the case of general endowment. Suppose that $u_\edow^\adm<U(\infty)$. Define $\C^\adm\coloneqq\{X\in L^\infty:X\le(H\cdot S)_T\text{ some }H\in\H^\adm\}$, and $\D^\adm\coloneqq\{\mu\in\ba:\mu(X)\le0\text{ for all }X\in\C^\adm\}$. By applying Theorem 5.6 of Delbaen and Schachermayer (1994), it is easy to show that
  \begin{equation}\label{eqn:repofDadm}
    \D^\adm\cap\ba^\sigma=\begin{cases}\Cone(\M^a),&\text{ if }\M^a\neq\emptyset\\\{0\},&\text{ if }\M^a=\emptyset.\end{cases}
  \end{equation}
  Let
  \[ v_\edow^\adm:=\inf_{\mu\in\Cone(\M^a)}\E{V\left(\radnik[\mu]\right)+\radnik[\mu]\edow}. \]
  Applying the Lagrange Duality Theorem (as in the proof of Proposition \ref{thm:lagrange_duality}), we see that
  \[ \sup_{G\in\C^\adm}\UU_{\edow}(G)=u_\edow^\adm=v_\edow^\adm=\min_{\mu\in\D^\adm}\VV_\edow(\mu). \]

  Since $\min_{\mu\in\D^\adm}\VV_\edow(\mu)=u_\edow^\adm<U(\infty)=V(0)=\VV_\edow(0)$, the minimum on the right hand side cannot be attained by $0$. Lemma \ref{thm:useful_tool} and equation \eqref{eqn:repofDadm} therefore imply that $\M_V^a\neq\emptyset$. An application of Proposition \ref{thm:lagrange_duality} shows that $u_\edow^\adm=v_\edow^\adm=v_\edow=u_\edow$.\qed
\end{proof}

\section{A Dynamic Dual Problem}\label{sec:dyndual}

Throughout this section we fix a stopping time $\tau$, valued in $[0,T]$, and consider a dynamic version of the dual problem on the stochastic interval $\lint\tau,T\rint$. To simplify matters when reading this section for the first time, the reader could consider the case $\tau=0$. Throughout this section we shall assume that Assumptions \ref{ass:rae}, \ref{ass:edow} and \ref{ass:noarb} hold, without further comment. We shall take conditional expectations throughout this section in the general sense of Shiryaev (1995), \S II.7, and say that a random variable $X$ is \emph{conditionally integrable} under a measure $\QQ$, with respect to a sigma algebra $\F$, if $\E[\QQ]{|X|\big|\F}<\infty$, $\QQ$-a.s., so that the conditional expectation $\E[\QQ]{X|\F}$ may be defined.

Given an initial random variable $\xi\in L_+^1(\Omega,\F_\tau,\PP)$, define the simplex of terminal measures
\begin{equation*}
  \ba_\tau(\xi)\coloneqq\{\mu\in\ba_+:\mu(A)=\E{\xi\ind_A}\text{ for all }A\in\F_\tau\}.
\end{equation*}
The convex cone of time-$\tau$ initial random variables for our dynamic dual problem is defined by
\begin{align}
  \CM &\coloneqq \left\{\E{\textstyle\radnik[\mu]\big|\F_\tau}:\mu\in\Cone(\M_V^a)\right\}\notag \\
  &=\left\{\xi\in L_+^1(\Omega,\F_\tau,\PP):\exists\mu\in\Cone(\M_V^a)\text{ s.t.
  }\E{\xi\ind_A}=\mu(A)\:\:\forall A\in\F_\tau\right\}\notag\\
  &= \{\xi\in
  L_+^1(\PP):\ba_\tau(\xi)\cap\Cone(\M_V^a)\neq\emptyset\}.\label{eqn:repofMVtau}
\end{align}
Note that (by definition) if $\xi\in\CM$ then $\xi>0$ $\mu$-a.s. for any $\mu\in\ba_\tau(\xi)$, even if $\PP(\{\xi=0\})>0$. The terminal ``martingale measures'' for the dynamic dual problems are given by
\begin{equation*}
  \M_\tau^a(\xi)\coloneqq\ba_\tau(\xi)\cap\Cone(\M^a)\qquad\text{and}\qquad
  \D_\tau(\xi)\coloneqq\ba_\tau(\xi)\cap\D.
\end{equation*}
These sets represent slices of the cones $\Cone(\M^a)$ and $\D$ respectively.
\begin{definition}[Dual Problems]
  Given $\xi\in\CM$, define
  \begin{alignat}{3}
  v_\tau(\xi)&\coloneqq&\underset{\mu\in\M_\tau^a(\xi)}{\essinf}&\E{V\left(\radnik[\mu]\right)+\radnik[\mu]\edow\bigg|\F_\tau},\label{eqn:dyn_dual}\\
  v(\xi;\tau)&\coloneqq&\quad\inf_{\mu\in\D_\tau(\xi)}&\VV_\edow(\mu).\label{eqn:abstr_dyn_dual}
  \end{alignat}
\end{definition}
For $\edow=0$ and $\xi\in\CM$ strictly positive, the dual problem \eqref{eqn:dyn_dual} coincides with the dynamic dual problem originally defined in Schachermayer (2003). Although our primary aim is to solve the dual problem \eqref{eqn:dyn_dual}, it is more obvious that \eqref{eqn:abstr_dyn_dual} has a solution, due to the lower-semicontinuity of $\VV_\edow$ and following result.
\begin{lemma}\label{thm:compactness}
  The set $\D_\tau(\xi)$ is $\sigma(\ba,L^\infty)$-compact.
\end{lemma}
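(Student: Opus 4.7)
The plan is to show $\D_\tau(\xi)$ is $\sigma(\ba,L^\infty)$-closed and contained in a norm-bounded (hence $\sigma(\ba,L^\infty)$-compact) subset of $\ba$, then invoke the Banach--Alaoglu result noted in the introduction (namely that the unit ball of $(\ba,\|\cdot\|)$ is $\sigma(\ba,L^\infty)$-compact, being the dual of $L^\infty$).

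First I would observe that every $\mu\in\D_\tau(\xi)$ is non-negative with a prescribed total mass: since $\Omega\in\F_\tau$, the defining relation of $\ba_\tau(\xi)$ forces $\|\mu\|=\mu(\Omega)=\E{\xi}$. Thus $\D_\tau(\xi)$ sits inside the closed ball of radius $\E{\xi}$ in $\ba$, which is $\sigma(\ba,L^\infty)$-compact by a rescaling of the unit ball.

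Next I would check $\sigma(\ba,L^\infty)$-closedness. The cone $\D$ is by definition $\{\mu\in\ba:\ip X\mu\le0 \text{ for all } X\in\C\}$, an intersection of closed half-spaces cut out by the $\sigma(\ba,L^\infty)$-continuous linear functionals $\mu\mapsto\ip X\mu$ with $X\in\C\subseteq L^\infty$, so $\D$ is closed. For $\ba_\tau(\xi)$, each constraint $\mu(A)=\E{\xi\ind_A}$ can be rewritten as $\ip{\ind_A}\mu=\E{\xi\ind_A}$ with $\ind_A\in L^\infty$, so it is the preimage of a point under a weak*-continuous functional, and non-negativity likewise follows from closed half-space conditions $\ip{\ind_A}\mu\ge0$ for $A\in\F$. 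Hence $\ba_\tau(\xi)$ is $\sigma(\ba,L^\infty)$-closed, and so is the intersection $\D_\tau(\xi)=\ba_\tau(\xi)\cap\D$.

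Combining the two observations, $\D_\tau(\xi)$ is a weak*-closed subset of a weak*-compact set, hence itself weak*-compact. No real obstacle arises here; the only point deserving mild care is verifying that $\|\mu\|=\mu(\Omega)$ holds, which uses $\mu\in\ba_+$ (true because $\ba_\tau(\xi)\subseteq\ba_+$ by definition).
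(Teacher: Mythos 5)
Your proposal is correct and follows essentially the same strategy as the paper: show that $\D_\tau(\xi)$ is $\sigma(\ba,L^\infty)$-closed and contained in a norm ball (hence weak*-compact by Alaoglu), then intersect. The only cosmetic difference is that you phrase closedness via intersections of closed half-spaces and hyperplanes cut out by weak*-continuous evaluations, whereas the paper runs a net argument; these are interchangeable, and your observation that $\|\mu\|=\mu(\Omega)=\E{\xi}$ for $\mu\in\ba_\tau(\xi)\subseteq\ba_+$ matches the paper's bound $\|\mu\|\le\|\xi\|_{L^1(\PP)}$.
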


\begin{proof}
  First note that $\D$ is $\sigma(\ba,L^\infty)$-closed. Indeed, consider a net $(\mu_\alpha)_{\alpha\in A}\subseteq\D$ such that $\mu_\alpha\rightarrow\mu$ in $\sigma(\ba,L^\infty)$. Then for all $G\in \C$ we have $\ip G\mu =\lim_\alpha\ip G{\mu_\alpha}\le0$. We claim that the simplex $\ba_\tau(\xi)$ is $\sigma(\ba,L^\infty)$-compact. To show closedness, consider a net $(\mu_\alpha)_{\alpha\in A}\subseteq\ba_\tau(\xi)$ such that $\mu_\alpha\rightarrow\mu$ in $\sigma(\ba,L^\infty)$. Then for $A\in\F_\tau$ we have $\mu(A)=\lim_\alpha\mu_\alpha(A)=\E{\xi\ind_A}$, so $\mu\in\ba_\tau(\xi)$. Compactness of $\ba_\tau(\xi)$ follows immediately, because it is a subset of the $\sigma(\ba,L^\infty)$-compact ball $\{\mu\in \ba:\|\mu\|\le\|\xi\|_{L^1(\PP)}\}$. It follows that $\D_\tau(\xi)=\ba_\tau(\xi)\cap\D$ is $\sigma(\ba,L^\infty)$-compact.\qed
\end{proof}

\begin{proposition}\label{thm:suboptimal}
  Given any $\xi\in\CM\setminus\{0\}$ there exists a unique probability measure $\widehat\mu_\xi\in\ba_\tau(\xi)\cap\Cone(\M_V^a)$ which attains the infima in the dual problems \eqref{eqn:dyn_dual} and \eqref{eqn:abstr_dyn_dual}. Moreover, $\E{v_\tau(\xi)}=v(\xi;\tau)<\infty$.
\end{proposition}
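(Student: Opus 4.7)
The plan is first to attack the abstract problem \eqref{eqn:abstr_dyn_dual} by a $\sigma(\ba,L^\infty)$-compactness and lower-semicontinuity argument to produce a minimizer $\widehat\mu$, then to verify an $\F_\tau$-pasting (m-stability) lemma inside $\ba_\tau(\xi) \cap \Cone(\M_V^a)$, and finally to apply this lemma both to prove $\E{v_\tau(\xi)} = v(\xi;\tau)$ and to show that the same $\widehat\mu$ realizes the essential infimum in \eqref{eqn:dyn_dual}.

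First, since $\xi \in \CM$, the set $\ba_\tau(\xi) \cap \Cone(\M_V^a)$ is non-empty and, by Lemma \ref{thm:technical}, contained in $\D_\tau(\xi)$; for any $\mu$ in it Lemma \ref{thm:useful_tool} combined with Assumption \ref{ass:growth}(i) gives $\VV_\edow(\mu) < \infty$, so $v(\xi;\tau) < \infty$. Since $\D_\tau(\xi)$ is $\sigma(\ba,L^\infty)$-compact by Lemma \ref{thm:compactness} and $\VV_\edow$ is lsc for the same topology, the infimum in \eqref{eqn:abstr_dyn_dual} is attained at some $\widehat\mu \in \D_\tau(\xi)$. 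Finiteness of $\VV_\edow(\widehat\mu)$, combined with Lemmas \ref{thm:useful_tool} and \ref{thm:technical}, then places $\widehat\mu$ in $\ba_\tau(\xi) \cap \Cone(\M_V^a)$, and strict convexity of $V$ yields uniqueness via the usual midpoint argument.

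The heart of the argument is a pasting lemma: for $\mu_1, \mu_2 \in \ba_\tau(\xi) \cap \Cone(\M_V^a)$ and $A \in \F_\tau$, define $\mu_3$ by $\radnik[\mu_3] = \ind_A \radnik[\mu_1] + \ind_{A^c} \radnik[\mu_2]$. Membership in $\ba_\tau(\xi)$ is a direct computation; finite entropy follows from the pointwise identity $V(\radnik[\mu_3]) = \ind_A V(\radnik[\mu_1]) + \ind_{A^c} V(\radnik[\mu_2])$; and since both $\mu_i$ share the common $\F_\tau$-density $\xi$, the $\PP$-density process of $\mu_3$ coincides with that of $\mu_1$ on $\lint 0,\tau\rint$, while for $t \ge \tau$ the inclusion $A \in \F_\tau \subseteq \F_t$ lets one deduce the local-martingale property of the normalized $\mu_3$ from those of $\mu_1$ and $\mu_2$. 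The same splitting yields the key identity
\[
  \E{V\bigl(\radnik[\mu_3]\bigr) + \radnik[\mu_3] \edow \bigm| \F_\tau} = \ind_A \E{V\bigl(\radnik[\mu_1]\bigr) + \radnik[\mu_1] \edow \bigm| \F_\tau} + \ind_{A^c} \E{V\bigl(\radnik[\mu_2]\bigr) + \radnik[\mu_2] \edow \bigm| \F_\tau}.
\]

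Choosing $A$ as the $\F_\tau$-event on which the $\mu_1$-conditional is dominated by the $\mu_2$-conditional shows that the family of conditional expectations is downward directed, so there is a sequence $(\mu_n) \subseteq \ba_\tau(\xi) \cap \Cone(\M_V^a)$ with conditional values decreasing a.s.\ to $v_\tau(\xi)$. Since $V \ge U(0)$ and $\edow \ge x'$, the integrable random variable $U(0) + x' \xi$ is a lower envelope, and monotone convergence yields $\VV_\edow(\mu_n) \searrow \E{v_\tau(\xi)}$; combined with the trivial $\E{v_\tau(\xi)} \le v(\xi;\tau)$ (obtained by conditioning and taking infima over the effective domain $\ba_\tau(\xi) \cap \Cone(\M_V^a)$, cf.\ Lemma \ref{thm:useful_tool}) this gives the desired equality. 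Finally, $\widehat\mu$ itself must attain $v_\tau(\xi)$: otherwise a competitor $\mu \in \ba_\tau(\xi) \cap \Cone(\M_V^a)$ whose conditional is strictly smaller on some $\F_\tau$-event of positive measure would, when pasted with $\widehat\mu$ across that set, give a strict improvement on $\VV_\edow(\widehat\mu)$, contradicting the first paragraph. The main obstacle is the local-martingale verification inside the pasting, which depends crucially on the common $\F_\tau$-marginal $\xi$ shared by all elements of $\ba_\tau(\xi)$; without this one could not splice the density processes at $\tau$.
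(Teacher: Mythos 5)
Your argument is correct and mirrors the paper's proof in all its essentials: weak$^*$-compactness of $\D_\tau(\xi)$ plus lower semicontinuity of $\VV_\edow$ to produce $\widehat\mu$; Lemmas \ref{thm:useful_tool} and \ref{thm:technical} to place it in $\ba_\tau(\xi)\cap\Cone(\M_V^a)$ (where the paper cites Lemma \ref{thm:kabastri}, your direct route through $\ba^\sigma_+$ with finite entropy is the same calculation); an $\F_\tau$-pasting argument and a resulting contradiction to show $\widehat\mu$ also attains the conditional infimum; and strict convexity of $V$ for uniqueness. Where you differ is the middle step: you establish $\E{v_\tau(\xi)}=v(\xi;\tau)$ separately by extracting a downward-directed minimizing sequence and applying monotone convergence, whereas in the paper this identity falls out automatically once $\widehat\mu$ is known to attain both optima, since one can simply write $\E{v_\tau(\xi)}=\E{\E{V(\radnik[\widehat\mu])+\radnik[\widehat\mu]\edow\mid\F_\tau}}=\VV_\edow(\widehat\mu)=v(\xi;\tau)$. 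Your detour is valid but longer.

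One point in that detour should be spelled out rather than assumed: the essential infimum $v_\tau(\xi)$ in \eqref{eqn:dyn_dual} runs over $\M_\tau^a(\xi)=\ba_\tau(\xi)\cap\Cone(\M^a)$, which contains measures of infinite relative entropy, while your downward-directed family lives inside the strictly smaller slice $\ba_\tau(\xi)\cap\Cone(\M_V^a)$. You need to remark that the two essential infima coincide. This is true, and follows from the same pasting estimate you use: for any $\mu\in\M_\tau^a(\xi)$ and a fixed reference $\mu_0\in\ba_\tau(\xi)\cap\Cone(\M_V^a)$, pasting $\mu$ on $A\coloneqq\{Z_\mu<Z_{\mu_0}\}\in\F_\tau$ with $\mu_0$ on $A^c$ yields $\mu'$ with conditional value $\min(Z_\mu,Z_{\mu_0})\le Z_\mu$, and finite entropy on the $\mu$-piece because there the conditional value is dominated by the (finite) $Z_{\mu_0}$ and $\edow$ is bounded below. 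As written, this identification is taken tacitly, and a careless reader might think your sequence converges to the essential infimum over the smaller family only.
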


\begin{proof}
  Since $\xi\in\CM$, we may choose any $\mu$ in the non-empty set $\ba_\tau(\xi)\cap\Cone(\M_V^a)$ by virtue of \eqref{eqn:repofMVtau}. By Lemma \ref{thm:technical} we have $\mu\in\D_\tau(\xi)$. By Assumption \ref{ass:edow} we have that $ \E[\mu]{\edow}\le\|\mu\|x''+\E[\mu]{(H''\cdot S)_T}\le x''<\infty$. It follows from equation \eqref{eqn:abstr_dyn_dual}, Lemma \ref{thm:useful_tool} and the finite relative entropy of $\mu$ that
  \[ v(\xi;\tau)\le\VV_\edow(\mu)=\E{V\left(\radnik[\mu]\right)+\radnik[\mu]\edow}<\infty. \]
  By Lemma \ref{thm:compactness}, $\D_\tau(\xi)$ is $\sigma(\ba,L^\infty)$-compact. Since $\VV_\edow$ is $\sigma(\ba,L^\infty)$-lower semicontinuous, the infimum in \eqref{eqn:abstr_dyn_dual} is attained for some $\widehat\mu_\xi=\widehat\mu(\xi;\tau)\in\D_\tau(\xi)$. Applying Lemma \ref{thm:kabastri}, with $\mathcal D=\D_\tau(\xi)$, we see that $\widehat\mu_\xi\in\Cone(\M_V^a)\setminus\{0\}$.

  We now show that $\widehat\mu_\xi$ is optimal for \eqref{eqn:dyn_dual}. Suppose, for a contradiction, that there exists a $\mu\in\M_\tau^a(\xi)$, and event $A\in\F_\tau$ such that
  \[ \E{\left(V\left(\radnik[\mu]\right)+\radnik[\mu]\edow\right)\ind_A}<\E{\left(V\left(\radnik[\widehat\mu_\xi]\right)+\radnik[\widehat\mu_\xi]\edow\right)\ind_A}. \]
  Define the measure $\widetilde\mu\in\M_\tau^a(\xi)$ by $\widetilde\mu(B)=\mu(A\cap B)+\widehat\mu_\xi(A^c\cap B)$; the proof that $\widetilde\mu\in\Cone(\M^a)$ is left as an exercise for the reader -- this follows from a localization argument, and the fact that any process which is simultaneously a martingale under the probability measures $\widehat\QQ\coloneqq\widehat\mu_\xi/\|\widehat\mu_\xi\|$ and $\QQ\coloneqq\mu/\|\mu\|$ will also be a martingale under $\widetilde\QQ\coloneqq\widetilde\mu/\|\widetilde\mu\|$, noting that $\|\mu\|=\|\widehat\mu_\xi\|=\|\widetilde\mu\|=\E\xi$. Moreover, it is easy to show that $\widetilde\mu$ has finite relative entropy and hence by Lemma \ref{thm:technical} we have $\widetilde\mu\in\D_\tau(\xi)$. Applying Lemma \ref{thm:useful_tool} gives
  \begin{equation*}
    \VV_\edow(\widetilde\mu)=\E{\left(V\left(\radnik[\mu]\right)+\radnik[\mu]\edow\right)\ind_A}
    +\E{\left(V\left(\radnik[\widehat\mu_\xi]\right)+\radnik[\widehat\mu_\xi]\edow\right)\ind_{A^c}}
    <\VV_\edow(\widehat\mu_\xi),
  \end{equation*}
  which contradicts the optimality of $\widehat\mu_\xi$ in the dual problem \eqref{eqn:abstr_dyn_dual}.

  To prove uniqueness in \eqref{eqn:dyn_dual}, suppose for a contradiction that there are two minimizers $\mu_1,\mu_2\in\M_\tau^a(\xi)$ with $\mu_1\neq\mu_2$ and define $\mu\coloneqq\frac12\mu_1+\frac12\mu_2\in\M_\tau^a(\xi)$. Since $V$ is strictly convex,
  \begin{align*}
    &\E{V\left(\radnik[\mu]\right)+\radnik[\mu]\edow\bigg|\F_\tau}\\&\qquad\qquad\qquad
    <\frac12\E{V\left(\radnik[\mu_1]\right)+\radnik[\mu_1]\edow\bigg|\F_\tau}
    +\frac12\E{V\left(\radnik[\mu_2]\right)+\radnik[\mu_2]\edow\bigg|\F_\tau} =v_\tau(y),
  \end{align*}
  the required contradiction. Taking expectations shows that $\widehat\mu_\xi$ is also unique for \eqref{eqn:abstr_dyn_dual}.\qed
\end{proof}

\begin{lemma}\label{thm:convex}
  For $\xi\in\CM\setminus\{0\}$, the map $(0,\infty)\ni y\mapsto v_\tau(y\xi)$ is strictly convex, and the $\F_\tau$-measurable, one-sided derivatives
  \begin{equation*}
    D_-v_\tau(\xi)\coloneqq\lim_{y\nearrow1}\frac{v_\tau(y\xi)-v_\tau(\xi)}{(y-1)\xi}
    \qquad\text{and}\qquad
    D_+v_\tau(\xi)\coloneqq\lim_{y\searrow1}\frac{v_\tau(y\xi)-v_\tau(\xi)}{(y-1)\xi}
  \end{equation*}
  exist $\PP$-a.s. as finite limits on the event $\{\xi>0\}$. Furthermore, given $\xi'\in\CM\setminus\{0\}$ we have $D_+v_\tau(\xi)\le D_-v_\tau(\xi')$ $\PP$-a.s. on the event $\{0<\xi\le\xi'\}$.
\end{lemma}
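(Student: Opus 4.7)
The proof splits into three steps.

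\emph{Strict convexity in $y$.} Given $0<y_1<y_2$ and $\lambda\in(0,1)$, form the convex combination $\bar\mu := (1-\lambda)\widehat\mu_{y_1\xi} + \lambda\widehat\mu_{y_2\xi}$ of the dual minimizers from Proposition \ref{thm:suboptimal}. Since $\Cone(\M_V^a)$ is a convex cone, $\bar\mu\in\M_\tau^a(\bar y\xi)$ with $\bar y := (1-\lambda)y_1+\lambda y_2$. On $\{\xi>0\}$ the densities $\radnik[\widehat\mu_{y_i\xi}]$ have distinct $\F_\tau$-conditional expectations $y_1\xi\neq y_2\xi$, so they differ on a set of positive $\F_\tau$-conditional probability; applying strict convexity of $V$ conditionally then produces $v_\tau(\bar y\xi)<(1-\lambda)v_\tau(y_1\xi)+\lambda v_\tau(y_2\xi)$ $\PP$-a.s. on $\{\xi>0\}$.

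\emph{Existence and formula for $D_\pm v_\tau(\xi)$.} Since $\CM$ is a cone, $y\xi\in\CM$ for every $y>0$, and each $v_\tau(y\xi)$ is $\PP$-a.s. finite by Proposition \ref{thm:suboptimal}; convexity then guarantees the existence of $D_\pm v_\tau(\xi)$ as extended-real limits. The test measure $y\widehat\mu_\xi\in\M_\tau^a(y\xi)$ gives, for every $y>0$, the pointwise bound
\[ v_\tau(y\xi)-v_\tau(\xi)\le\E{V(y\radnik[\widehat\mu_\xi])-V(\radnik[\widehat\mu_\xi])+(y-1)\radnik[\widehat\mu_\xi]\edow\bigg|\F_\tau}. \]
Dividing by $(y-1)\xi$ and passing to the limit $y\searrow 1$ via dominated convergence (justified by the growth estimate $y|V'(y)|\le CV(y)$ of Assumption \ref{ass:growth}(ii) together with the endowment bounds of Assumption \ref{ass:edow}) yields $D_+v_\tau(\xi)\le\xi^{-1}\E{V'(\radnik[\widehat\mu_\xi])\radnik[\widehat\mu_\xi]+\radnik[\widehat\mu_\xi]\edow\big|\F_\tau}$. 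The mirror argument as $y\nearrow 1$ gives the reverse inequality for $D_-v_\tau(\xi)$; combined with the general convex inequality $D_-\le D_+$, this forces
\[ D_+v_\tau(\xi)=D_-v_\tau(\xi)=\tfrac{1}{\xi}\E{V'(\radnik[\widehat\mu_\xi])\radnik[\widehat\mu_\xi]+\radnik[\widehat\mu_\xi]\edow\bigg|\F_\tau}, \]
finite $\PP$-a.s. on $\{\xi>0\}$.

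\emph{Cross-monotonicity on $A := \{0<\xi\le\xi'\}$.} Convexity in $y$ furnishes the tangent inequality $v_\tau(y\xi)\ge v_\tau(\xi)+(y-1)\xi\, D_+v_\tau(\xi)$ $\PP$-a.s. for every deterministic $y>1$; intersecting over rationals and using the pointwise continuity of $y\mapsto v_\tau(y\xi)(\omega)$ inherited from convexity, the bound extends to all $y>0$ on a full-measure $\omega$-set. Substituting the $\F_\tau$-measurable value $y=\xi'/\xi$ on $\{0<\xi<\xi'\}\subset A$, and invoking the locality identity $v_\tau((\xi'/\xi)\xi)=v_\tau(\xi')$ on $A$, we obtain $v_\tau(\xi')-v_\tau(\xi)\ge(\xi'-\xi)D_+v_\tau(\xi)$. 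The symmetric tangent inequality applied to $y\mapsto v_\tau(y\xi')$ at $y=\xi/\xi'<1$, together with the locality $v_\tau((\xi/\xi')\xi')=v_\tau(\xi)$ on $A$, gives $v_\tau(\xi')-v_\tau(\xi)\le(\xi'-\xi)D_-v_\tau(\xi')$. Combining the two delivers $D_+v_\tau(\xi)\le D_-v_\tau(\xi')$ on $\{\xi<\xi'\}$; on the complementary event $\{\xi=\xi'\}$, the inequality reduces to $D_+\le D_-$, supplied by the equality $D_+=D_-$ from the previous step.

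The principal technical hurdle is the locality identity just used. It is proved by patching: for $\eta,\eta'\in\CM$ coinciding on $B\in\F_\tau$, any $\mu\in\M_\tau^a(\eta)$ can be modified on $B^c$ using a measure from $\M_\tau^a(\eta')$ to produce an element of $\M_\tau^a(\eta')$; the $\F_\tau$-measurability of $\ind_B$ together with the tower property at $\tau$ preserves the local-martingale property of $S$, and the growth condition of Assumption \ref{ass:growth}(i) secures finite relative entropy, so the conditional essential infimum $v_\tau$ is unchanged on $B$.
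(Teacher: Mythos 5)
Your overall plan is reasonable and genuinely different from the paper's. The paper's proof establishes one central inequality (its Eq.\ (3.7)) by a single patching construction: given \emph{arbitrary} $\xi_0,\tilde\xi,\xi_1\in\CM\setminus\{0\}$ and the $\F_\tau$-measurable weight $\lambda=\ind_A(\tilde\xi-\xi_0)/(\xi_1-\xi_0)$, it splices the dual minimizers $\widehat\mu_0,\widehat\mu_1$ and an arbitrary $\mu\in\M_\tau^a(\tilde\xi)$ into a competitor for $v_\tau(\tilde\xi)$. Strict convexity along scalar rays, the difference-quotient monotonicity (3.8), existence of one-sided derivatives, and the cross-monotonicity all fall out of this one inequality. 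Your Step 1 is that inequality specialized to scalar multiples of a single $\xi$ and is correct. Your Step 2 actually proves more than Lemma \ref{thm:convex} claims: the equality $D_+v_\tau(\xi)=D_-v_\tau(\xi)$ is the content of Proposition \ref{thm:newa17}, not of the lemma. Your argument for it (test measure $y\widehat\mu_\xi$, then dominated convergence with $V'$ controlled by Assumption \ref{ass:growth}) is sound and is a legitimate alternative to the paper's monotone-convergence route.

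The gap is in Step 3. You derive the tangent inequality $v_\tau(y\xi)\ge v_\tau(\xi)+(y-1)\xi\,D_+v_\tau(\xi)$ a.s.\ for each \emph{constant} $y$, construct a continuous convex extension $\phi_\omega(\cdot)$ on a full-measure set of $\omega$'s, and then ``substitute'' the $\F_\tau$-measurable value $y=\xi'/\xi$. The obstruction is that the locality identity you invoke is a statement about $v_\tau$ as a map on $\CM$ (it says $v_\tau(\eta)=v_\tau(\eta')$ a.s.\ on $B\in\F_\tau$ whenever $\eta,\eta'\in\CM$ agree on $B$), whereas $\phi_\omega(\xi'(\omega)/\xi(\omega))$ is the pointwise evaluation of a convex extension built from scalar $y$. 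These are two different objects, and you do not supply the bridge $\phi_\omega(\xi'(\omega)/\xi(\omega))=v_\tau(\xi')(\omega)$ a.e.\ on $A$. Establishing it requires an additional layer: first, for $\F_\tau$-simple $Y=\sum_i c_i\ind_{A_i}$ with $c_i>0$ and $A_i\in\F_\tau$, patching gives $Y\xi\in\CM$ and $v_\tau(Y\xi)=\sum_i v_\tau(c_i\xi)\ind_{A_i}$, so the tangent inequality transfers to simple $Y$; then one must pass to a general $\F_\tau$-measurable $Y$ by approximation, which in turn needs a continuity statement for $\eta\mapsto v_\tau(\eta)$ along the approximating sequence (this is the point the remark after Lemma \ref{thm:convex} warns about: $\CM$ is not closed under random convex combinations and $v_\tau$ is not ``randomly convex,'' so continuity is not free). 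The paper sidesteps all of this by proving (3.7) directly with a random weight, so it never needs to upgrade a deterministic-$y$ statement to a random one. Either fill in the simple-function approximation carefully, or replace Steps 1 and 3 by a direct patching proof of the three-point inequality for random $\xi_0,\tilde\xi,\xi_1$ as the paper does; the latter is shorter and makes the cross-monotonicity an immediate consequence.
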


\begin{proof}
  Suppose that $\xi_0,\xi_1,\tilde\xi\in\CM\setminus\{0\}$, and let $A\coloneqq\{\xi_0<\tilde\xi<\xi_1\}\in\F_\tau$. Let $\widehat\mu_i\in\M_\tau^a(\xi_i)$ for $i=1,2$ be the minimizers in the dynamic dual problems $v_\tau(\xi_0)$, and $v_\tau(\xi_1)$, and pick any $\mu\in\M_\tau^a(\tilde\xi)$. Define the $\F_\tau$-measurable random variable $\lambda\coloneqq\ind_A(\tilde\xi-\xi_0)/(\xi_1-\xi_0)$. Since $\lambda$ is $(0,1)$-valued on the event $A$, we may define the measure $\tilde\mu\in\M_\tau^a(\tilde\xi)$ by
  \[ \tilde\mu(B)=\mu_0((1-\lambda)\ind_{A\cap B})+\mu_1(\lambda\ind_{A\cap B})+\mu(A^c\cap B). \]
  Furthermore,
  \begin{align}\label{eqn:convexx}
    &((1-\lambda)v_\tau(\xi_0)+\lambda v_\tau(\xi_1))\ind_A\notag\\
    &\qquad=(1-\lambda)\ind_A\E{V\left(\radnik[\widehat\mu_0]\right)+\radnik[\widehat\mu_0]\edow\bigg|\F_\tau}
    +\lambda\ind_A\E{V\left(\radnik[\widehat\mu_1]\right)+\radnik[\widehat\mu_1]\edow\bigg|\F_\tau}\notag\\
    &\qquad>\E{V\left((1-\lambda)\radnik[\widehat\mu_0]
    +\lambda\radnik[\widehat\mu_1]\right)\ind_A+\left((1-\lambda)\radnik[\widehat\mu_0]
    +\lambda\radnik[\widehat\mu_1]\right)\edow\ind_A\bigg|\F_\tau}\notag\\
    &\qquad=\E{\left(V\left(\radnik[\tilde\mu]\right)+\radnik[\tilde\mu]\edow\right)\ind_A\bigg|\F_\tau}
    \ge v_\tau(\tilde\xi)\ind_A.
  \end{align}
  Strict convexity of the maps $\RR\ni\lambda\mapsto v_\tau(\lambda\xi)$ follows from \eqref{eqn:convexx} by choosing $\xi_0,\xi_1$ and $\tilde\xi$ to be scalar multiples of $\xi$. It is thus clear that the one-sided derivatives $D_-v_\tau(\xi)$ and $D_+v_\tau(\xi)$ exist and are finite on the event $\{\xi>0\}$ because the ratios used in their definitions are finite and increasing $\PP$-a.s. with respect to $y$.

  Rearranging \eqref{eqn:convexx} gives that
  \begin{equation}\label{eqn:convexxx}
    \frac{v_\tau(\tilde\xi)-v_\tau(\xi_0)}{\tilde\xi-\xi_0}\ind_A
    <\frac{v_\tau(\xi_1)-v_\tau(\xi_0)}{\xi_1-\xi_0}\ind_A.
  \end{equation}
  It follows, by applying \eqref{eqn:convexxx} twice on the event $A=\{0<\xi\le\xi'\}$, that
  \begin{equation*}
    D_-v(\xi)\ind_A=\lim_{y\nearrow1}\frac{v_\tau(\xi)-v_\tau(y\xi)}{(1-y)\xi}\ind_A \le \lim_{y\searrow1}\frac{v_\tau(y\xi')-v_\tau(\xi')}{(y-1)\xi'}\ind_A=D_+v(\xi')\ind_A.
  \end{equation*}
  \qed
\end{proof}

\begin{remark}
  Note that the map $\xi\mapsto v_\tau(\xi)$ is not randomly convex in the sense of Schachermayer (2003) or Biagini and Frittelli (2007). This difference results from the fact that the domain of our dual problem \eqref{eqn:dyn_dual} consists exclusively of densities of local-martingale measures for the whole time interval $[0,T]$ -- a set which is typically not closed under random convex combinations.
\end{remark}

If the one sided derivatives are equal then we say that the two-sided derivative $Dv_\tau(\xi)$ exists.

\begin{proposition}\label{thm:newa17}
  For $\xi\in\CM\setminus\{0\}$, the two sided derivative $Dv_\tau(\xi)$ exists and
  \begin{equation}\label{eqn:Dv}
    Dv_\tau(\xi)\le\E[\QQ]{V'\left(\radnik[\widehat\mu_\xi]\right)+\edow\bigg|\F_\tau},
  \end{equation}
  $\PP$-a.s. on the event $\{\xi>0\}$ for all $\QQ\in\M_\tau^a(\xi/\|\xi\|_1)$ such that $\E{V\left(\radnik\right)+\radnik\edow\big|\F_\tau}<\infty$ $\PP$-a.s. Setting $\QQ=\optQQ_\xi=\widehat\mu_\xi/\|\widehat\mu_\xi\|$ gives equality in \eqref{eqn:Dv}.
\end{proposition}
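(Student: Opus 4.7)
The plan is to establish the proposition in three steps: \emph{first}, prove the equality case $\QQ=\optQQ_\xi$ by perturbing $\widehat\mu_\xi$ along the ray $\{y\widehat\mu_\xi\}_{y>0}$; \emph{second}, derive the ``subgradient inequality'' $\E[\optQQ_\xi]{V'(\radnik[\widehat\mu_\xi])+\edow|\F_\tau}\le\E[\QQ]{V'(\radnik[\widehat\mu_\xi])+\edow|\F_\tau}$ by perturbing along the segment from $\widehat\mu_\xi$ to $\|\xi\|_1\QQ$; \emph{third}, combine.

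For the first step, I would note that $y\widehat\mu_\xi\in\Cone(\M_V^a)$ for every $y>0$ and has $\F_\tau$-trace $y\xi\in\CM$ by \eqref{eqn:repofMVtau}, so it is admissible in $v_\tau(y\xi)$. This yields
\begin{equation*}
  v_\tau(y\xi)\le\E{V(y\radnik[\widehat\mu_\xi])+y\radnik[\widehat\mu_\xi]\edow\big|\F_\tau},
\end{equation*}
with equality at $y=1$ by Proposition \ref{thm:suboptimal}. Subtracting the $y=1$ identity and dividing by $(y-1)\xi$ on $\{\xi>0\}$ gives an upper bound for $y>1$ and a lower bound for $y<1$. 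Passing to the limits $y\searrow1$ and $y\nearrow1$, which exist by Lemma \ref{thm:convex}, and applying monotone convergence to the (monotone in $y$) difference quotients $[V(y\radnik[\widehat\mu_\xi])-V(\radnik[\widehat\mu_\xi])]/(y-1)$ produces
\begin{equation*}
  \xi\,D_+v_\tau(\xi)\le\E{\radnik[\widehat\mu_\xi]\bigl(V'(\radnik[\widehat\mu_\xi])+\edow\bigr)\big|\F_\tau}\le\xi\,D_-v_\tau(\xi).
\end{equation*}
Combined with $D_-v_\tau(\xi)\le D_+v_\tau(\xi)$ from Lemma \ref{thm:convex}, these three quantities coincide, and the generalised Bayes identity $\frac{1}{\xi}\E{\radnik[\widehat\mu_\xi] f|\F_\tau}=\E[\optQQ_\xi]{f|\F_\tau}$ (which uses $\|\widehat\mu_\xi\|=\|\xi\|_1$) rewrites the middle term as $\E[\optQQ_\xi]{V'(\radnik[\widehat\mu_\xi])+\edow|\F_\tau}$.

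For the second step, given any $\QQ\in\M_\tau^a(\xi/\|\xi\|_1)$ with $\E{V(\radnik)+\radnik\edow|\F_\tau}<\infty$, set $\mu\coloneqq\|\xi\|_1\QQ\in\M_\tau^a(\xi)\cap\Cone(\M_V^a)$ and define $\mu_s\coloneqq(1-s)\widehat\mu_\xi+s\mu\in\M_\tau^a(\xi)$ for $s\in[0,1]$. Optimality of $\widehat\mu_\xi$ (Proposition \ref{thm:suboptimal}) yields
\begin{equation*}
  \E{V(\radnik[\mu_s])+\radnik[\mu_s]\edow\big|\F_\tau}\ge v_\tau(\xi)=\E{V(\radnik[\widehat\mu_\xi])+\radnik[\widehat\mu_\xi]\edow\big|\F_\tau}.
\end{equation*}
Dividing by $s$, applying monotone convergence to the monotone difference quotient $[V(\radnik[\mu_s])-V(\radnik[\widehat\mu_\xi])]/s$, and letting $s\searrow0$ gives $0\le\E{(V'(\radnik[\widehat\mu_\xi])+\edow)(\radnik[\mu]-\radnik[\widehat\mu_\xi])|\F_\tau}$. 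Dividing by $\xi$ on $\{\xi>0\}$ and applying the Bayes formula on both sides produces the desired subgradient inequality. Combining the two steps yields $Dv_\tau(\xi)=\E[\optQQ_\xi]{V'(\radnik[\widehat\mu_\xi])+\edow|\F_\tau}\le\E[\QQ]{V'(\radnik[\widehat\mu_\xi])+\edow|\F_\tau}$, with equality when $\QQ=\optQQ_\xi$.

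The main obstacle will be justifying the exchange of limits and conditional expectations in both steps. Convexity of $V$ makes the difference quotients monotone in the parameter, opening the door to monotone convergence; the growth conditions of Assumption \ref{ass:growth}, in particular $y|V'(y)|\le C'V(y)$, guarantee that the limiting integrands $\radnik[\widehat\mu_\xi]V'(\radnik[\widehat\mu_\xi])$ (and analogous terms involving $\radnik[\mu]$) are dominated by integrable random variables. The endowment contribution is controlled because $\edow$ is $\widehat\mu_\xi$- and $\mu$-integrable under Assumption \ref{ass:edow}. A minor care is required on the subset $\{\radnik[\widehat\mu_\xi]=0\}$ where $V'$ is $-\infty$; but $\xi>0$ on the relevant set and the integrand $\radnik[\widehat\mu_\xi]V'(\radnik[\widehat\mu_\xi])$ is extended continuously by $0$ at the origin, so no singularity is encountered.
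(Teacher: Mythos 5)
Your step 1 (the equality case) follows the paper's argument essentially verbatim: perturb along the ray $\lambda\mapsto\lambda\widehat\mu_\xi$, use that $h(\lambda)\coloneqq\E{V(\lambda\radnik[\widehat\mu_\xi])+\lambda\radnik[\widehat\mu_\xi]\edow\big|\F_\tau}$ dominates $v_\tau(\lambda\xi)$ with equality at $\lambda=1$, and sandwich $D_\pm v_\tau(\xi)$ between the one-sided derivatives of $h$ to conclude that $Dv_\tau(\xi)$ exists and equals $h'(1)/\xi$, then convert via Bayes. Your step 2, however, takes a genuinely different route. The paper perturbs along a ray $\lambda\mapsto\widehat\mu_\xi+(\lambda-1)\|\xi\|_1\QQ$ for $\lambda\ge1$, which \emph{changes} the $\F_\tau$-trace to $\lambda\xi$, and again compares with the convex map $\lambda\mapsto v_\tau(\lambda\xi)$: the envelope inequality $h(\lambda)\ge v_\tau(\lambda\xi)$ with equality at $\lambda=1$ gives $\xi D_+v_\tau(\xi)\le D_+h(1)$ directly. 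You instead perturb along the segment $s\mapsto(1-s)\widehat\mu_\xi+s\|\xi\|_1\QQ$, which stays inside the fixed slice $\M_\tau^a(\xi)$, and derive the variational (first-order optimality) inequality $0\le\E{(V'(\radnik[\widehat\mu_\xi])+\edow)(\radnik[\mu]-\radnik[\widehat\mu_\xi])\big|\F_\tau}$ purely from $\widehat\mu_\xi$'s optimality over $\M_\tau^a(\xi)$, without invoking the convexity of $v_\tau$ a second time. This decomposition is clean and slightly more elementary — it isolates the subgradient inequality $\E[\optQQ_\xi]{\cdot|\F_\tau}\le\E[\QQ]{\cdot|\F_\tau}$ as a statement about the minimizer within a fixed slice, rather than mixing it with derivative information about the value function. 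The price is that you must appeal to step 1 to convert the subgradient inequality into the derivative bound, whereas the paper's perturbation yields the derivative bound in one shot. The integrability bookkeeping you flag (conditional MCT on the monotone difference quotients, using conditional integrability of the endpoint $s=1$ term, and the continuous extension of $y\mapsto yV'(y)$ at $0$ when $V(0)<\infty$) is the same obstacle the paper addresses, and your treatment is adequate.
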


\begin{remark}
  The observant reader may be concerned that the conditional expectation on the right hand side of equation \eqref{eqn:Dv} is unique only up to $\QQ$-null sets. This does not cause any problems however because $\E{\radnik|\F_\tau}=\xi/\|\xi\|_1$, so $\PP$ and $\QQ$ are equivalent on the sigma algebra $\{A\cap\{\xi>0\}:A\in\F_\tau\}$.
\end{remark}

\begin{proof}
  For $\lambda>0$ define $\mu_\lambda\coloneqq\lambda\widehat\mu_\xi\in\M_\tau^a(\lambda\xi)$, $\nu_\lambda\coloneqq V\left(\radnik[\mu_\lambda]\right)+\radnik[\mu_\lambda]\edow$. Define $h(\lambda)\coloneqq\E{\nu_\lambda\big|\F_\tau}$. Since $\lambda\mapsto\nu_\lambda$ is convex, the integrable random variables $(\nu_\lambda-\nu_1)/(\lambda-1)$ are monotone increasing with respect to the parameter $\lambda\neq1$. Therefore, by the monotone convergence theorem we may differentiate $h(\lambda)$ at $\lambda=1$ inside the expectation. Moreover, since $h(1)=v_\tau(\xi)$, $h(\lambda)\ge v_\tau(\lambda\xi)$ for all $\lambda>0$ and, from Lemma \ref{thm:convex}, the map $\lambda\mapsto v_\tau(\lambda\xi)$ is convex, it follows that the two-sided derivative $D v_\tau(\xi)$ exists and
  \begin{equation*}
    \xi D v_\tau(\xi)=h'(1)
    =\E{\radnik[\widehat\mu_\xi]V'\left(\radnik[\widehat\mu_\xi]\right)+\radnik[\widehat\mu_\xi]\edow\bigg|\F_\tau}
    =\xi\E[\widehat\QQ_\xi]{V'\left(\radnik[\widehat\mu_\xi]\right)+\edow\bigg|\F_\tau},
  \end{equation*}
  where we have used Bayes' rule in the last equality. Note that the conditional expectations on the right hand side are well defined, as the random variable in the first expectation is dominated above and below by the integrable random variables $\nu_2-\nu_1$ and $2(\nu_1-\nu_{1/2})$ respectively.

  Take any $\QQ\in\M_\tau^a(\xi/\|\xi\|_1)$ satisfying $\E{V\left(\radnik\right)+\radnik\edow\big|\F_\tau}<\infty$ a.s. Redefine, for $\lambda\ge1$,
  $\mu_\lambda\coloneqq\widehat\mu_\xi+(\lambda-1)\|\xi\|_1\QQ\in\M_\tau^a(\lambda\xi)$, which also satisfies $\E{\nu_\lambda\big|\F_\tau} = \E{V\left(\radnik[\mu_\lambda]\right)+\radnik[\mu_\lambda]\edow\big|\F_\tau} <\infty$ a.s. (that is, $\nu_\tau$ is conditionally integrable with respect to the sigma algebra $\F_\tau$). By repeating the convexity argument above, it follows from a conditional version of the monotone convergence theorem that we may evaluate the right-sided derivative, $D_+h$, of $h$ at $\lambda=1$ by differentiating inside the expectation. This time, since $h(1)=v_\tau(\xi)$, $h(\lambda)\ge v_\tau(\lambda\xi)$ for all $\lambda\ge1$ and the map $\lambda\mapsto v_\tau(\xi)$ is convex, it follows that
  \begin{equation*}
    \xi D_+v_\tau(\xi)\le D_+h(1)
    =\E{\radnik[\mu]V'\left(\radnik[\widehat\mu_\xi]\right)+\radnik[\mu]\edow\bigg|\F_\tau}
    =\xi\E[\QQ]{V'\left(\radnik[\widehat\mu_\xi]\right)+\edow\bigg|\F_\tau}.
  \end{equation*}
  The conditional expectations on the right hand side are well defined, as the random variable in the first expectation is dominated above by the conditionally integrable random variable $\nu_2-\nu_1$. Since $D_+v_\tau(\xi)=Dv_\tau(\xi)$ is real-valued, the random variable $V'\big(\radnik[\widehat\mu_\xi]\big)$ is in fact conditionally integrable under $\QQ$.\qed
\end{proof}

\begin{corollary}[Growth condition on $v_\tau$]\label{thm:2ndingredient}
  There exists a constant $C'>0$ such that for any stopping time $\tau$, valued in $[0,T]$, $\xi D v_\tau(\xi)\le C'(v_\tau(\xi)+\xi)$, $\PP$-a.s. for all $\xi\in\CM\setminus\{0\}$.
\end{corollary}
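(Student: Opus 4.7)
The strategy is to derive the bound from three ingredients: the explicit formula for $\xi Dv_\tau(\xi)$ furnished by the equality half of Proposition \ref{thm:newa17}, the pointwise growth condition on $V$ from Assumption \ref{ass:growth}(ii), and the lower bound on the endowment from Assumption \ref{ass:edow}.

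First, I would invoke Proposition \ref{thm:newa17} with $\QQ=\widehat\QQ_\xi\coloneqq\widehat\mu_\xi/\|\widehat\mu_\xi\|$ to obtain equality, then apply Bayes' rule to rewrite the conditional expectation under $\PP$:
\begin{equation*}
\xi Dv_\tau(\xi)=\E{\radnik[\widehat\mu_\xi]V'\!\left(\radnik[\widehat\mu_\xi]\right)+\radnik[\widehat\mu_\xi]\edow\bigg|\F_\tau}.
\end{equation*}
Enlarging the growth constant from Assumption \ref{ass:growth}(ii) if necessary, I take $K\ge 1$ with $yV'(y)\le|yV'(y)|\le KV(y)$ for all $y>0$. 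Applied pointwise to $y=\radnik[\widehat\mu_\xi]$, and using the defining identity $v_\tau(\xi)=\E{V(\radnik[\widehat\mu_\xi])\big|\F_\tau}+\E{\radnik[\widehat\mu_\xi]\edow\big|\F_\tau}$ to eliminate the $V$-term, this yields
\begin{equation*}
\xi Dv_\tau(\xi)\le Kv_\tau(\xi)-(K-1)\E{\radnik[\widehat\mu_\xi]\edow\big|\F_\tau}.
\end{equation*}

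Next, the lower bound $\edow\ge x'$ from Assumption \ref{ass:edow} gives $\E{\radnik[\widehat\mu_\xi]\edow\big|\F_\tau}\ge x'\xi$, and hence $\xi Dv_\tau(\xi)\le Kv_\tau(\xi)+(K-1)(-x')^+\xi$. To recast this in the clean form $C'(v_\tau(\xi)+\xi)$ of the statement, I would perform an initial shift of the endowment: replacing $\edow$ by $\tilde\edow\coloneqq\edow-x'\ge 0$ leaves the optimizer $\widehat\mu_\xi$ unchanged and transforms $v_\tau$ into $\tilde v_\tau(\xi)\coloneqq v_\tau(\xi)-x'\xi$, which is nonnegative since $V\ge U(0)>0$ and $\tilde\edow\ge 0$. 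Repeating the argument above for the shifted problem (where the analogue of $x'$ is $0$) produces the sharper inequality $\xi D\tilde v_\tau(\xi)\le K\tilde v_\tau(\xi)\le K(\tilde v_\tau(\xi)+\xi)$; translating back and absorbing all constants into a single $C'>0$ depending only on $K$ and $|x'|$ delivers the corollary.

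The only real obstacle is this last bookkeeping step: the natural intermediate estimate is sharp in terms of $v_\tau(\xi)+|x'|\xi\ge 0$, whereas the stated conclusion uses $v_\tau(\xi)+\xi$, which is a priori not sign-definite when $x'<-1$. The shift trick above resolves this cleanly by reducing to the case $x'=0$, where $v_\tau$ itself is nonnegative and nothing further is required beyond the growth constant $K$.
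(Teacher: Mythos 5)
Your proof is essentially identical to the paper's: both rest on the same three ingredients, namely the equality half of Proposition~\ref{thm:newa17} applied at $\QQ=\optQQ_\xi$ together with Bayes' rule, the pointwise bound $y V'(y)\le C V(y)$ from Assumption~\ref{ass:growth}(ii), and the lower bound $\E{\radnik[\widehat\mu_\xi]\edow\,\big|\,\F_\tau}\ge x'\xi$ from Assumption~\ref{ass:edow}. The paper's proof stops at the displayed estimate $\xi Dv_\tau(\xi)\le C v_\tau(\xi)-(C-1)x'\xi$ and leaves the passage to the headline form $C'(v_\tau(\xi)+\xi)$ implicit.

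Where you depart from the paper is in flagging that this last rewrite is not automatic, because $v_\tau(\xi)$ need not be sign-definite when $x'$ is very negative. That observation is correct, and your idea of shifting to $\tilde\edow=\edow-x'\ge 0$ is clean and does give $\xi D\tilde v_\tau(\xi)\le K\tilde v_\tau(\xi)\le K(\tilde v_\tau(\xi)+\xi)$ for the shifted problem, with $\tilde v_\tau\ge 0$. However, be aware that "translating back" re-introduces exactly the difficulty you were trying to avoid: expanding $\tilde v_\tau(\xi)=v_\tau(\xi)-x'\xi$ and $D\tilde v_\tau(\xi)=Dv_\tau(\xi)-x'$ returns the same inequality $\xi Dv_\tau(\xi)\le K v_\tau(\xi)+(K-(K-1)x')\xi$ that you (and the paper) already had, and repackaging this as $C'(v_\tau(\xi)+\xi)$ still requires $v_\tau(\xi)$ not to be too negative. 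In practice this is immaterial: the only downstream use of the corollary (in the proof of Theorem~\ref{thm:supermart}) is the inequality $v_\tau(\xi)\ge c\,\xi Dv_\tau(\xi)-\xi$ with some $c>0$, and any estimate of the form $\xi Dv_\tau(\xi)\le C_1 v_\tau(\xi)+C_2\xi$ with $C_1,C_2>0$ serves that purpose equally well. So the substance of your argument is right, and the looseness you noticed is present in the paper's own statement as well; the shift trick proves the cleanest version of the corollary for the normalized $\tilde v_\tau$, but does not strictly deliver the form $C'(v_\tau(\xi)+\xi)$ for the un-shifted $v_\tau$ either.
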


\begin{proof}
  Let $\widehat\mu_\xi=\widehat\mu(\xi;\tau)$ denote the minimizer in the dynamic dual problem. Applying Proposition \ref{thm:newa17}, Assumption \ref{ass:growth}(ii) and Assumption \ref{ass:edow}, we may find a $C>1$ such that $\PP$-a.s.
  \begin{align*}
    \xi Dv_\tau(\xi) &= \E{\radnik[\widehat\mu_\xi]V'\left(\radnik[\widehat\mu_\xi]\right)+\radnik[\widehat\mu_\xi]\edow\bigg|\F_\tau}\\
    &\le\E{CV\left(\radnik[\widehat\mu_\xi]\right)+\radnik[\widehat\mu_\xi]\edow\bigg|\F_\tau}\\
    &= Cv_\tau(\xi)-(C-1)\E{\radnik[\widehat\mu_\xi]\edow\bigg|\F_\tau}\\
    &\le Cv_\tau(\xi)-(C-1)x'\xi.
  \end{align*}
  \qed
\end{proof}

\begin{corollary}\label{thm:propofv}
  For $y>0$ there exists a unique probability measure $\optQQ_y\in\M_V^a$ which attains the finite infimum in the dual problem \eqref{eqn:simpledual}. The dual function $v_\edow(y)$ is continuously differentiable and strictly convex as a function of $y>0$. Moreover,
  \[ v'_\edow(y)=\min_{\QQ\in\M_V^a}\E[\QQ]{V'\left(y\radnik[\optQQ_y]\right)+\edow}, \]
  in which $\optQQ_y$ is also optimal. Finally, $v_\edow(y)$ satisfies the growth condition $yv'_\edow(y)\le C'(v_\edow(y)+y)$.
\end{corollary}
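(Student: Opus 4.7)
The corollary is the $\tau=0$ specialization of the dynamic dual machinery developed earlier in this section. My plan is to unpack this identification and then invoke the earlier lemmas in turn.

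First I would note that for $\tau=0$ the sigma algebra $\F_0$ is trivial modulo $\PP$-null sets, so $\F_0$-measurable elements of $\Xi_0$ are $\PP$-a.s.\ constant, and $\ba_0(y)$ is just the set of $\mu\in\ba_+$ with total mass $y$. Consequently $\M_0^a(y)=y\cdot\M^a$, and the dynamic dual problem \eqref{eqn:abstr_dyn_dual} reduces to the scalar problem \eqref{eqn:simpledual}; in particular $v(y;0)=v_\edow(y)$, and under Assumption \ref{ass:noarb} we have $\Xi_0\setminus\{0\}=(0,\infty)$. An appeal to Proposition \ref{thm:suboptimal} with $\xi=y$ then yields a unique minimizer $\widehat\mu_y\in y\,\Cone(\M_V^a)$ attaining the finite infimum, and setting $\optQQ_y\coloneqq\widehat\mu_y/y\in\M_V^a$ produces the unique probability measure of part (i).

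Strict convexity of $v_\edow$ on $(0,\infty)$ I would read off directly from Lemma \ref{thm:convex} applied with $\xi\equiv1$. Existence of the two-sided derivative at every $y>0$ and the $\min$-formula for $v'_\edow(y)$ would come from Proposition \ref{thm:newa17}: rewriting $V'(\radnik[\widehat\mu_y])=V'(y\radnik[\optQQ_y])$ via $\widehat\mu_y=y\optQQ_y$ gives
\[ v'_\edow(y)=\E[\optQQ_y]{V'\left(y\radnik[\optQQ_y]\right)+\edow}\le\E[\QQ]{V'\left(y\radnik[\optQQ_y]\right)+\edow} \]
for every $\QQ\in\M_V^a$, which is exactly the statement of the corollary with the minimum attained at $\optQQ_y$. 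Continuous differentiability of $v_\edow$ on $(0,\infty)$ is then automatic: a convex function on an open interval whose two-sided derivative exists at every point has a monotone and hence continuous derivative.

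Finally the growth bound $yv'_\edow(y)\le C'(v_\edow(y)+y)$ is Corollary \ref{thm:2ndingredient} evaluated at $\tau=0$, $\xi=y$. I do not foresee a substantive obstacle here; the only point requiring care is keeping the scaling $\widehat\mu_y=y\optQQ_y$ visible when translating between the dynamic statements (phrased via the cone elements $\widehat\mu_\xi$) and the corollary's statement (phrased via the normalized probability $\optQQ_y$ and the scalar $y$).
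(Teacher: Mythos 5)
Your proof is correct and takes essentially the same route as the paper: the paper also obtains the corollary by specializing Proposition \ref{thm:suboptimal}, Lemma \ref{thm:convex}, Proposition \ref{thm:newa17} and Corollary \ref{thm:2ndingredient} to $\tau=0$, $\xi=y$, with $\optQQ_y=\widehat\mu_y/y$, and likewise deduces continuous differentiability from convexity plus differentiability. The only small point you gloss over is that $\F_0$ need not be trivial under the usual conditions; the paper explicitly notes that if $\F_0$ is nontrivial one replaces it with $\{\emptyset,\Omega\}$ and the preceding results still apply.
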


\begin{proof}
  The properties of $v_\edow(y)$ follow from the above results by setting $\tau=0$, $\xi=y$ and defining $\optQQ_y=\widehat\mu_y/y$. If $\F_0$ consists of more than just $\PP$-null sets, one can replace it by the trivial $\sigma$-algebra $\{\emptyset,\Omega\}$, and the results still hold. In particular, the continuous differentiability of $v_\edow(y)$ follows from convexity and differentiability.\qed
\end{proof}

\section{The Optimal Trading Strategy}\label{sec:opttrading}

In this section, we demonstrate the existence of an optimal trading strategy in the primal problem. We assume throughout this section that Assumptions \ref{ass:rae}, \ref{ass:edow} and \ref{ass:noarb} hold. Throughout this section we let $\widehat\mu\in\Cone(\M_V^a)\setminus\{0\}$ be the unique minimizer in the dual problem \eqref{eqn:maindual}, and let $\optQQ$ denote the normalization of $\widehat\mu$. Throughout this section, we consider all stochastic integrals, and almost-sure statements to be taken within the filtered probability space $(\Omega,\F,(\F_t)_{t\in[0,T]},\optQQ)$.
\begin{proposition}\label{thm:defoptgain}
  There exists a predictable process $\widehat H\in L(S;\optQQ)$ such that $\widehat H\cdot S$ is a $\optQQ$-martingale and such that, almost surely,
  \[ U'\left((\widehat H\cdot S)_T+\edow\right)=\radnik[\widehat\mu]\qquad\text{or equivalently} \qquad (\widehat H\cdot S)_T=-V'\left(\radnik[\widehat\mu]\right)-\edow. \]
  Moreover,
  \[ (\widehat H\cdot S)_\tau=-Dv_\tau\left(\E{\radnik[\widehat\mu]\bigg|\F_\tau}\right). \]
\end{proposition}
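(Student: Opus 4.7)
The plan is to define the candidate optimal terminal wealth
\[
\widehat X := -V'\left(\radnik[\widehat\mu]\right)-\edow,
\]
to verify via a superhedging argument that $\widehat X$ is replicable from zero initial wealth by some $\widehat H\in\H^\perm$, and finally to read off the intermediate values $(\widehat H\cdot S)_\tau$ from the dynamic dual problem of Section \ref{sec:dyndual}. To set this up I would first derive first-order optimality relations for $\widehat\mu$. For any $\nu\in\Cone(\M_V^a)\subseteq\D$, convexity of $\VV_\edow$ and optimality of $\widehat\mu$ force the right derivative of $\lambda\mapsto\VV_\edow((1-\lambda)\widehat\mu+\lambda\nu)$ at $\lambda=0$ to be non-negative. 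Interchanging differentiation and expectation (justified by Assumption \ref{ass:growth}(ii), which dominates $\radnik[\widehat\mu]V'(\radnik[\widehat\mu])$ by a multiple of $V(\radnik[\widehat\mu])$, together with the sandwich on $\edow$ in Assumption \ref{ass:edow}) yields
\[
\E{(\radnik[\nu]-\radnik[\widehat\mu])\left(V'\left(\radnik[\widehat\mu]\right)+\edow\right)}\ge 0.
\]
Choosing $\nu=\lambda\widehat\mu$ and letting $\lambda$ approach $1$ from both sides gives the orthogonality $\E[\optQQ]{\widehat X}=0$, while choosing $\nu=y\QQ$ for arbitrary $y>0$ and $\QQ\in\M_V^a$ gives $\E[\QQ]{\widehat X}\le 0$.

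Second, I would invoke a superhedging argument to produce $\widehat H$. The inequalities $\E[\QQ]{\widehat X}\le 0$ for all $\QQ\in\M_V^a$, combined with Delbaen--Schachermayer style closure and superhedging results (c.f.\ Theorem~5.6 of Delbaen and Schachermayer (1994) and the refinements in Schachermayer (2001, 2003)), produce a predictable $\widehat H\in L(S;\optQQ)$ such that $\widehat H\cdot S$ is a $\QQ$-supermartingale for every $\QQ\in\M_V^a$ (so that $\widehat H\in\H^\perm$) and $(\widehat H\cdot S)_T\ge\widehat X$ almost surely. Taking the expectation under $\optQQ\in\M_V^e$ gives $\E[\optQQ]{(\widehat H\cdot S)_T}\le 0 = \E[\optQQ]{\widehat X}$, which together with the pointwise inequality forces $(\widehat H\cdot S)_T=\widehat X$ a.s.\ and promotes $\widehat H\cdot S$ from a $\optQQ$-supermartingale to a true $\optQQ$-martingale.

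Finally, I would identify $(\widehat H\cdot S)_\tau$ using the dynamic dual. Setting $\xi:=\E{\radnik[\widehat\mu]\big|\F_\tau}\in\CM$, a patching argument identical to the one in the proof of Proposition \ref{thm:suboptimal} shows that $\widehat\mu$ is itself the unique minimizer $\widehat\mu_\xi$ of the dynamic dual problem initialized at $\xi$: any $\mu'\in\M_\tau^a(\xi)$ strictly better than $\widehat\mu$ on some $A\in\F_\tau$ could be pasted with $\widehat\mu$ on $A^c$ to produce a strictly better global minimizer, contradicting optimality. Proposition \ref{thm:newa17}, applied with $\QQ=\optQQ_\xi=\optQQ$, then gives
\[
Dv_\tau(\xi)=\E[\optQQ]{V'\left(\radnik[\widehat\mu]\right)+\edow\bigg|\F_\tau}=-\E[\optQQ]{(\widehat H\cdot S)_T\big|\F_\tau}=-(\widehat H\cdot S)_\tau,
\]
where the last equality uses the $\optQQ$-martingale property established in the second step.

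The main obstacle is the superhedging step. The inequalities $\E[\QQ]{\widehat X}\le 0$ are available only against $\M_V^a$ rather than against the full set $\M^e$, so the classical superhedging theorem cannot be invoked verbatim; one needs a bipolar-type density argument or a direct closure result for $\H^\perm$-hedgeable claims. Controlling the integrability of $\widehat X$ against every $\QQ\in\M_V^a$ (handled by the growth condition on $V$ applied to $\radnik[\widehat\mu]$ and the admissible bounds on $\edow$) and transferring the $\optQQ$-supermartingale property to every $\QQ\in\M_V^a$ are the delicate technical points, and would be carried out along the lines of Schachermayer (2001, 2003) adapted to the permissible-strategies framework.
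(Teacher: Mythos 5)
Your overall architecture matches the paper's proof: define $\widehat X := -V'\left(\radnik[\widehat\mu]\right)-\edow$, derive the first--order conditions $\bigE[\optQQ]{\widehat X}=0$ and $\bigE[\QQ]{\widehat X}\le 0$ for every $\QQ\in\M_V^a$, obtain $\widehat H$ by a closure/hedging argument and promote the $\optQQ$-supermartingale to a $\optQQ$-martingale via the expectation equality, and finally read off $(\widehat H\cdot S)_\tau$ by observing (your patching argument is correct and is exactly what the paper tacitly uses) that $\widehat\mu$ is also the unique dynamic-dual minimizer at $\xi=\bigE{\radnik[\widehat\mu]\big|\F_\tau}$, then applying Proposition~\ref{thm:newa17}. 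The first and last steps are carried out satisfactorily.

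The genuine gap is the middle step, which you flag but do not close, and the references you offer for it are off target. Theorem~5.6 of Delbaen--Schachermayer (1994) and the refinements in Schachermayer (2001, 2003) give closure/superhedging duality against the full set of (local/sigma)-martingale measures and are formulated for admissible strategies; they do \emph{not} yield, from inequalities $\bigE[\QQ]{\widehat X}\le 0$ taken only over the smaller class $\M_V^a$, a permissible $\widehat H$ with $(\widehat H\cdot S)_T\ge\widehat X$. What is needed is a bipolar result tailored to the pairing between $\Kperm$ and $\M_V^a$, and the paper supplies exactly this: Theorem~4 of Biagini and Frittelli (2005) gives $\widehat X\in\normcl[\optQQ]{\Kperm-L_+^1(\optQQ)}$. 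One then uses $\bigE[\optQQ]{\widehat X}=0$ together with $\bigE[\optQQ]{G^n}\le 0$ to show that the $L_+^1(\optQQ)$-component of the approximating sequence vanishes, so that in fact $\widehat X\in\normcl[\optQQ]{\Kperm}$, and finally applies the compactness theorem for stochastic integrals (Theorem~14.5.13 of Delbaen--Schachermayer (2006)), using the uniform lower bound $(H^n\cdot S)_t\ge -\E[\optQQ]{w|\F_t}$, to extract a bona fide integrand $\widehat H\in L(S;\optQQ)$ whose wealth process is a $\optQQ$-supermartingale dominating $\widehat X$ at time $T$. Without some such bipolar-plus-compactness mechanism, the passage from the dual inequalities to an honest stochastic integral is unjustified, and ``bipolar-type density argument or a direct closure result for $\H^\perm$-hedgeable claims'' is precisely the missing content that you would have to prove or cite.
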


\begin{proof}
  Define $\widehat G\coloneqq-V'\big(\radnik[\widehat\mu]\big)-\edow$. From Corollary \ref{thm:propofv}, we have $\widehat G\in L^1(\optQQ)$ and
  \begin{equation*}
    \bigE[\optQQ]{\widehat G}=-\E[\optQQ]{V'\left(\radnik[\widehat\mu]\right)+\edow}=-v'_\edow(\|\widehat\mu\|)=0,
  \end{equation*}
  where the last equality follows from Proposition \ref{thm:lagrange_duality} and differentiability of $v_\edow(y)$. Similarly, for any $\QQ\in\M_V^a$ we have $\widehat G\in L^1(\QQ)$ and $\bigE[\QQ]{\widehat G}\le0$.

  Applying Theorem 4 of Biagini and Frittelli (2005), we get $\widehat G\in\normcl[\optQQ]{\Kperm-L_+^1(\optQQ)}$, where $\Kperm\coloneqq\{(H\cdot S)_T:H\in\H^\perm\}$. Hence there exist sequences $G^n\in\Kperm$, $F^n\in L_+^1(\optQQ)$, and $H^n\in\H^\perm$ such that $(H^n\cdot S)_T=G^n$ and $G^n-F^n\rightarrow\widehat G$ in $L^1(\optQQ)$ as $n\rightarrow\infty$. Since $\E[\optQQ]{G^n}\le0$ and $\bigE[\optQQ]{\widehat G}=0$ we see that $\E[\optQQ]{|F^n|}\le\bigE[\optQQ]{F^n-G^n+\widehat G} \le\bigE[\optQQ]{|\widehat G-(G^n-F^n)|} \rightarrow0$ as $n\rightarrow\infty$. Hence $\bigE[\optQQ]{|\widehat G-G^n|} \le\bigE[\optQQ]{|\widehat G-(G^n-F^n)|}+\E[\optQQ]{|F^n|}\rightarrow0$ as $n\rightarrow\infty$, so $\widehat G\in\normcl[\optQQ]{\Kperm}$. By passing to a subsequence if necessary, we may assume that $G^n\rightarrow\widehat G$ as $n\rightarrow\infty$, a.s., and that $\sum_n\|G^n-G^{n+1}\|_{L^1(\optQQ)}<\infty$. We may also assume that $G^1=-1$. Let $w\in L^1(\optQQ)$ be defined by $ w\coloneqq-\inf_n G^n\le1+\sum_n|G^n-G^{n+1}|$.

  Using the supermartingale property of the processes $H^n\cdot S$, we have $(H^n\cdot S)_t\ge\E[\optQQ]{(H^n\cdot S)_T|\F_t}\ge -\E[\optQQ]{w|\F_t}$. We may apply Theorem 14.5.13 of Delbaen and Schachermayer (2006) (c.f. Delbaen and Schachermayer (1998, 1999)): There exist convex combinations $K^n\in\operatorname{conv}\{H^n,H^{n+1},\dots\}$, a $\optQQ$-supermartingale $(V_t)_{t\in[0,T]}$ with $V_0\le0$ and an $\widehat H\in L(S;\optQQ)$ such that $\widehat H\cdot S$ is a local $\optQQ$-martingale and a $\optQQ$-supermartingale, $V_T=\underset{n\rightarrow\infty}\lim(K^n\cdot S)_T$ a.s. and $(\widehat H\cdot S)_t\ge V_t$ for all $t$, a.s. Since $(K^n\cdot S)_T\in\operatorname{conv}\{G^n,G^{n+1},\dots\}$, it follows that $V_T=\widehat G$. Hence, $ 0\ge\bigE[\optQQ]{(\widehat H\cdot S)_T}\ge\bigE[\optQQ]{V_T} =\bigE[\optQQ]{\widehat G}=0,$ so $\widehat G=(\widehat H\cdot S)_T$ a.s. and the process $\widehat H\cdot S$ is a $\optQQ$-martingale.

  It now follows that
  \[ (\widehat H\cdot S)_\tau = \E[\optQQ]{(\widehat H\cdot S)_T\big|\F_\tau} = -\E[\optQQ]{V'\left(\radnik[\widehat\mu]\right)+\edow\bigg|\F_\tau} = -Dv_\tau\left(\E{\radnik[\widehat\mu]\bigg|\F_\tau}\right), \]
  where the last equality is true because $\widehat\mu$ is optimal for \eqref{eqn:dyn_dual} with $\xi=\bigE{\radnik[\widehat\mu]\big|\F_\tau}$.\qed
\end{proof}

\begin{remark}\label{rem:aboutequiv}
  Note that by applying Lemma \ref{thm:kabastri} with $\mathcal D=\D$ we see that $\QQ\ll\optQQ$ for all $\QQ\in\M_V^a$. It follows from Theorem IV.25 of Protter (2003) that $\widehat H\in L(S;\QQ)$ for all $\QQ\in\M_V^a$. In the case when $\M_V^e\neq\emptyset$ we also have that $\widehat\QQ\sim\PP$, by Lemma \ref{thm:kabastri}, so $\widehat H\in L(S;\PP)$.
\end{remark}

We now turn to the problem of showing that the wealth process of $\widehat H$ is a supermartingale under all local martingale measures with finite relative entropy. Theorem \ref{thm:supermart} deals with this result in generality, but relies on some powerful results concerning a characterization of supermartingales, taken from Schachermayer (2003). For the case of exponential utility however, a marginally weaker version of the supermartingale property can be proved by elementary considerations. Although separate attention is given to the case of exponential utility in Schachermayer (2003), our proof seems to be new, even for the case when $\edow$ is deterministic.

The proofs below use our analysis of the dynamic dual problem of Section \ref{sec:dyndual}. In the next result we consider the case when $U(x)=-\frac1\gamma\exp(-\gamma x)$. The constant $\gamma>0$ represents the investor's absolute risk aversion. Note that for our proof we only need to apply the results of Section \ref{sec:dyndual} for deterministic stopping times $\tau=t$.
\begin{proposition}[Exponential Utility]\label{thm:exponly}
  Suppose that $\M_V^e\neq\emptyset$. Then
  \begin{equation}\label{eqn:toprove}
    (\widehat H\cdot S)_t = \underset{\QQ\in\M_V^e}\essmax\E[\QQ]{\frac1\gamma\ln\left(\frac{\d\PP}{\d\widehat\mu}\right)-\edow\bigg|\F_t}
  \end{equation}
  and hence the optimal wealth process, $\widehat H\cdot S$, is a $\QQ$-supermartingale for all $\QQ\in\M_V^e$.
\end{proposition}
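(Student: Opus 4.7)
For exponential utility $U(x)=-\gamma^{-1}e^{-\gamma x}$ the conjugate satisfies $V'(y)=\gamma^{-1}\ln y$, so Proposition \ref{thm:defoptgain} identifies the terminal wealth as $(\widehat H\cdot S)_T=\gamma^{-1}\ln(\d\PP/\d\widehat\mu)-\edow$; call this random variable $X$. Since $\widehat H\cdot S$ is an $\optQQ$-martingale, the choice $\QQ=\optQQ$ automatically attains the essential supremum in \eqref{eqn:toprove} and gives $\E[\optQQ]{X|\F_t}=(\widehat H\cdot S)_t$. All that remains is the one-sided bound $(\widehat H\cdot S)_t\ge\E[\QQ]{X|\F_t}$ for every $\QQ\in\M_V^e$, together with the $\QQ$-supermartingale property of $\widehat H\cdot S$.

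For the bound I would combine Proposition \ref{thm:newa17} with a single pasting. That proposition yields $Dv_t(\xi_t)\le\E[\QQ']{V'(\radnik[\widehat\mu])+\edow|\F_t}$ for any $\QQ'\in\M_t^a(\xi_t/\|\xi_t\|_1)$ meeting its $\F_t$-conditional integrability condition, where $\xi_t:=\E{\radnik[\widehat\mu]|\F_t}$. A generic $\QQ\in\M_V^e$ does not lie in this slice, so I would introduce the pasted measure
\[
\frac{\d\bar\QQ}{\d\PP}:=\E{\radnik[\optQQ]|\F_t}\cdot\frac{\radnik}{\E{\radnik|\F_t}},
\]
which agrees with $\optQQ$ on $\F_t$ and whose $\F_t$-conditional distribution coincides with that of $\QQ$. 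Standard pasting preserves the local martingale property of $S$, and the conditional integrability reduces, for exponential $V$, to $H(\QQ|\PP)<\infty$ together with the upper bound on $\edow$ in Assumption \ref{ass:edow} (which controls $\E[\bar\QQ]{\edow|\F_t}$ via the super-hedge $H''$). Applying Proposition \ref{thm:newa17} to $\bar\QQ$ and replacing $\bar\QQ$-conditional expectations given $\F_t$ by the identical $\QQ$-conditional expectations gives $-(\widehat H\cdot S)_t=Dv_t(\xi_t)\le\E[\QQ]{V'(\radnik[\widehat\mu])+\edow|\F_t}$, equivalent to the required bound.

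For the $\QQ$-supermartingale inequality at $s\le t$ I would paste twice: let $\mu$ coincide with $\optQQ$ on $\F_s$, with $\QQ$ conditionally on $\F_s$ up to time $t$, and with $\optQQ$ conditionally on $\F_t$ thereafter, written explicitly as a triple product of the corresponding Radon--Nikodym densities. Then $\mu\in\M_s^a(\xi_s/\|\xi_s\|_1)$, conditional integrability is verified as above, and Proposition \ref{thm:newa17} gives $Dv_s(\xi_s)\le\E[\mu]{V'(\radnik[\widehat\mu])+\edow|\F_s}$. Tower-rule conditioning, using that $\mu$ reduces to $\optQQ$ beyond $t$ (so the inner $\F_t$-expectation equals $Dv_t(\xi_t)=-(\widehat H\cdot S)_t$) and to $\QQ$ between $s$ and $t$ (so the outer $\F_s$-expectation becomes a $\QQ$-expectation), converts the right-hand side to $-\E[\QQ]{(\widehat H\cdot S)_t|\F_s}$, yielding $(\widehat H\cdot S)_s\ge\E[\QQ]{(\widehat H\cdot S)_t|\F_s}$. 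The principal obstacle throughout is the bookkeeping for these single and double pastes: writing their densities cleanly, verifying the local martingale property of $S$ after each paste, and checking the conditional integrability required by Proposition \ref{thm:newa17} using the explicit form of $V$; once that is in place the argument is essentially a Bellman dynamic-programming manipulation built on top of Proposition \ref{thm:newa17}.
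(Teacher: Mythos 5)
Your argument is correct and follows essentially the same route as the paper: you identify the terminal wealth via Proposition \ref{thm:defoptgain}, paste $\optQQ$ with a generic $\QQ\in\M_V^e$ at time $t$ to put the comparison measure into the slice $\M_t^a(\xi_t/\|\xi_t\|_1)$, and apply Proposition \ref{thm:newa17}, exactly as in the paper's use of the concatenation $\optQQ\otimes_t\widetilde\QQ$. The only difference is cosmetic: for the supermartingale step the paper applies the already-established representation \eqref{eqn:toprove} at time $s$ together with a single paste $\QQ\otimes_t\optQQ$ (invoking Lemma 2 of Schachermayer (2003) to keep it in $\M_V^e$), whereas you re-apply Proposition \ref{thm:newa17} at $s$ with a double paste and then tower-condition; both are correct and equivalent in spirit.
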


\begin{proof}
  For the special case of exponential utility we have $V(y)=\frac y\gamma(\ln y-1)$. First
  we introduce some notation: Given two measures $\QQ_1\ll\PP$ and $\QQ_2\sim\PP$, we define
  the concatenated measure $\QQ_1\otimes_t\QQ_2\ll\PP$ via its Radon-Nikod\'ym derivative
  \[ \radnik[(\QQ_1\otimes_t\QQ_2)] \coloneqq \E{\radnik[\QQ_1]\bigg|\F_t}\frac{\radnik[\QQ_2]}{\E{\radnik[\QQ_2]\big|\F_t}}. \]
  Define the probability measure $\optQQ\coloneqq\widehat\mu/\|\widehat\mu\|$. Given any $\widetilde\QQ\in\M_V^e$ we have $\optQQ\otimes_t\widetilde\QQ\in\M^e$ and $\radnik[(\optQQ\otimes_t\widetilde\QQ)]=YZ$ where $Y\coloneqq\E{\radnik[\optQQ]|\F_t}\big/\E{\radnik[\widetilde\QQ]|\F_t}$ is $\F_t$-measurable, and $Z=\radnik[\widetilde\QQ]$. Now $\E{YZ\ln(YZ)|\F_t}=Y\ln Y\E{Z|\F_t}+Y\E{Z\ln Z|\F_t}<\infty$ a.s. and, by Assumption \ref{ass:edow}, $\E{\radnik[(\optQQ\otimes_t\widetilde\QQ)]\edow}\le x''+\E[\optQQ\otimes_t\widetilde\QQ]{(H''\cdot S)_T} \le x''+\E[\optQQ]{(H''\cdot S)_t}=x''<\infty$. Since $\widehat\mu$ is also the minimizer in \eqref{eqn:dyn_dual} and \eqref{eqn:abstr_dyn_dual} with $\tau=t$ and $\xi=\bigE{\radnik[\widehat\mu]\big|\F_t}$ it follows from Proposition \ref{thm:newa17} that
  \begin{align*}
    Dv_t\left(\E{\radnik[\widehat\mu]\bigg|\F_t}\right)\le \E[\optQQ\otimes_t\widetilde\QQ]{\frac1\gamma\ln\left(\radnik[\widehat\mu]\right)+\edow\bigg|\F_t}
    =\E[\widetilde\QQ]{\frac1\gamma\ln\left(\radnik[\widehat\mu]\right)+\edow\bigg|\F_t}.
  \end{align*}
  Applying Proposition \ref{thm:newa17}, we have equality for the optimal measure $\optQQ$. Equation \eqref{eqn:toprove} follows by applying Proposition \ref{thm:defoptgain}. Since $\widehat H\cdot S$ is a $\optQQ$-martingale, we have
  \begin{align*}
    \underset{\QQ\in\M_V^e}\esssup\E[\QQ]{(\widehat H\cdot S)_t\big|\F_s}
    &=\underset{\QQ\in\M_V^e}\esssup\E[\QQ]{\E[\widehat\QQ]{(\widehat H\cdot S)_T\big|\F_t}\big|\F_s}\\
    &=\underset{\QQ\in\M_V^e}\esssup\E[\QQ\otimes_t\widehat\QQ]{(\widehat H\cdot S)_T\big|\F_s}\\
    &\le\underset{\QQ\in\M_V^e}\esssup\E[\QQ]{(\widehat H\cdot S)_T\big|\F_s}=(\widehat H\cdot S)_s.
  \end{align*}
  The inequality above follows because $\QQ\otimes_t\widehat\QQ\in\M_V^e$ by Lemma 2 of Schachermayer (2003).\qed
\end{proof}

We now present the general case, whose proof follows the remarkable ideas presented in Schachermayer (2003):
\begin{theorem}[General Utility]\label{thm:supermart}
  The optimal wealth process, $\widehat H\cdot S$, is a $\QQ$-su\-per\-mar\-tin\-gale for all $\QQ\in\M_V^a$. If $\M_V^e\neq\emptyset$ then $\widehat H\in\H^\perm$.
\end{theorem}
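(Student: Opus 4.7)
The plan is to reduce the $\QQ$-supermartingale property of $\widehat H\cdot S$ to the pointwise derivative inequality in Proposition \ref{thm:newa17}, evaluated against an auxiliary measure obtained by pasting $\widehat\QQ$ and $\QQ$. Throughout, set $\xi_\rho\coloneqq\E{\radnik[\widehat\mu]|\F_\rho}$ and use the identity $(\widehat H\cdot S)_\rho=-Dv_\rho(\xi_\rho)$ from Proposition \ref{thm:defoptgain}, together with the terminal identity $(\widehat H\cdot S)_T=-V'(\radnik[\widehat\mu])-\edow$.

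Fix $\QQ\in\M_V^a$ and stopping times $\sigma\le\tau$ in $[0,T]$. Since $\widehat H\cdot S$ is a $\widehat\QQ$-martingale, $(\widehat H\cdot S)_\tau=\E[\widehat\QQ]{(\widehat H\cdot S)_T|\F_\tau}$, and consequently
\[ \E[\QQ]{(\widehat H\cdot S)_\tau|\F_\sigma}=\E[\QQ\otimes_\tau\widehat\QQ]{(\widehat H\cdot S)_T|\F_\sigma}. \]
I would then form the triple pasting
\[ \QQ^*\coloneqq\widehat\QQ\otimes_\sigma(\QQ\otimes_\tau\widehat\QQ), \]
i.e.\ the probability measure agreeing with $\widehat\QQ$ on $\F_\sigma$, following $\QQ$ on $(\sigma,\tau]$, and following $\widehat\QQ$ again on $(\tau,T]$. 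Following the pasting argument of Schachermayer (2003, Lemma 2), I would verify that $\QQ^*\in\M_V^a$: the local-martingale property of $S$ survives the pasting by a standard localization argument, while finite (conditional) relative entropy $\E{V(\radnik[\QQ^*])+\radnik[\QQ^*]\edow|\F_\sigma}<\infty$ $\PP$-a.s.\ follows from the product form of $\radnik[\QQ^*]$ combined with the growth condition in Assumption \ref{ass:growth}(i) and the upper bound $\edow\le x''+(H''\cdot S)_T$ from Assumption \ref{ass:edow}. By construction the $\F_\sigma$-density of $\QQ^*$ equals $\E{\radnik[\widehat\QQ]|\F_\sigma}=\xi_\sigma/\|\xi_\sigma\|_1$, so $\QQ^*\in\M_\sigma^a(\xi_\sigma/\|\xi_\sigma\|_1)$, and a direct conditional-density calculation shows that the $\F_\sigma$-conditional law of the $\F_\tau$-random variable $(\widehat H\cdot S)_\tau$ under $\QQ^*$ coincides with its $\F_\sigma$-conditional law under $\QQ\otimes_\tau\widehat\QQ$.

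Because $\widehat\mu$ is the unique minimizer of the dynamic dual $v_\sigma(\xi_\sigma)$ (by global optimality and the uniqueness clause of Proposition \ref{thm:suboptimal}), Proposition \ref{thm:newa17} applied with $\QQ=\QQ^*$ at time $\sigma$ yields
\[ Dv_\sigma(\xi_\sigma)\le\E[\QQ^*]{V'\left(\radnik[\widehat\mu]\right)+\edow\bigg|\F_\sigma}=-\E[\QQ^*]{(\widehat H\cdot S)_T\big|\F_\sigma}=-\E[\QQ\otimes_\tau\widehat\QQ]{(\widehat H\cdot S)_T\big|\F_\sigma}, \]
the last equality via the conditional identification above. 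Combining this with $(\widehat H\cdot S)_\sigma=-Dv_\sigma(\xi_\sigma)$ and the first display produces the supermartingale inequality $\E[\QQ]{(\widehat H\cdot S)_\tau|\F_\sigma}\le(\widehat H\cdot S)_\sigma$, which holds $\QQ$-a.s.\ (events of $\QQ$-measure zero, including $\{\xi_\sigma=0\}$, are discarded using $\QQ\ll\widehat\QQ$ from Remark \ref{rem:aboutequiv}).

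For the second assertion, when $\M_V^e\ne\emptyset$, Lemma \ref{thm:kabastri} gives $\widehat\QQ\sim\PP$, Remark \ref{rem:aboutequiv} then places $\widehat H\in L(S;\PP)$, and the supermartingale property just proved puts $\widehat H\in\H^\perm$ by Definition \ref{def:perm}. The principal technical obstacle is the verification that the triple paste $\QQ^*$ lies in $\M_V^a$ with the conditional integrability required by Proposition \ref{thm:newa17}: the local-martingale property goes through routinely, but bookkeeping the three-factor structure of $\radnik[\QQ^*]$ against the $V$-entropy forces careful appeal to Assumption \ref{ass:growth}(i) applied factor-by-factor, together with the endowment bound from Assumption \ref{ass:edow} to control the $\radnik[\QQ^*]\edow$ term.
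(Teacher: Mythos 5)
Your proposal takes a genuinely different route from the paper: you try to extend the pasting argument of Proposition~\ref{thm:exponly} (which the paper deploys \emph{only} for exponential utility) to a general utility, whereas the paper's actual proof of Theorem~\ref{thm:supermart} argues by contradiction via Lemma~1 of Schachermayer (2003) -- one finds stopping times $\tau_n$ witnessing the failure of the supermartingale inequality under $\QQ$, compares with the $\optQQ$-martingale property, and then uses the monotonicity of $Dv_{\tau_n}$ (Lemma~\ref{thm:convex}), the growth bound of Corollary~\ref{thm:2ndingredient}, and Proposition~\ref{thm:defoptgain} to deduce that such a $\QQ$ cannot have finite relative entropy. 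The pasting is avoided entirely.

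The step you flag as the ``principal technical obstacle'' is in fact a genuine gap, and it cannot be closed by ``applying Assumption~\ref{ass:growth}(i) factor-by-factor.'' Write $\radnik[\QQ^*] = W_1 W_2 W_3$ in the obvious way. Conditioning on $\F_\sigma$ does dispose of $W_1$ (it is then a constant, and Assumption~\ref{ass:growth}(i) applies with that constant $\lambda$), but the remaining problem is to show $\E{V(W_2 W_3)\mid\F_\sigma}<\infty$, where $W_2$ is the $\F_\tau$-measurable factor coming from $\QQ$ and $W_3$ is the $\F_T/\F_\tau$ factor coming from $\optQQ$. Conditioning further on $\F_\tau$ and invoking the growth condition with the $\F_\tau$-measurable $\lambda = W_2$ only gives a bound of the form $C(W_2)\,\E{V(W_3')\mid\F_\tau}$, and the ``constant'' $C(\lambda)$ grows like $\lambda^{C'}\vee\lambda^{-C'}$; since $W_2$ is a genuinely random, unbounded density-process ratio, the resulting expression can have infinite $\F_\sigma$-conditional expectation even when both $\QQ$ and $\optQQ$ lie in $\M_V^a$. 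Finite $V$-entropy of a paste does \emph{not} follow from finite $V$-entropy of the two pieces for general $V$: the exponential case is special precisely because $V(y)=\tfrac{y}{\gamma}(\ln y - 1)$ makes $V(ab)$ additive in $\ln a$ and $\ln b$, which is exactly the computation $\E{YZ\ln(YZ)\mid\F_t}=Y\ln Y\,\E{Z\mid\F_t}+Y\,\E{Z\ln Z\mid\F_t}$ carried out in the proof of Proposition~\ref{thm:exponly}. Without this factorization, Proposition~\ref{thm:newa17} is not applicable to your $\QQ^*$, and the key inequality is unjustified. A further smaller issue: the concatenation $\QQ_1\otimes_t\QQ_2$ is defined in the paper only for $\QQ_2\sim\PP$, so your construction would also require $\optQQ\sim\PP$ (i.e.\ $\M_V^e\neq\emptyset$), which means the argument would, at best, recover the ``marginally weaker'' conclusion for $\QQ\in\M_V^e$ that the paper explicitly contrasts with the full statement for $\M_V^a$.
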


\begin{proof}
  Suppose that there exists a $\QQ\in\M^a$ such that $\widehat H\cdot S$ fails to be a $\QQ$-supermartingale. Then applying Lemma 1 of Schachermayer (2003), it is possible to find a sequence of $[0,T]\cup\{\infty\}$-valued stopping times increasing to $+\infty$ such that $(\widehat H\cdot S)_{\tau_n}\le0$ on $\{\tau_n<\infty\}$ and
  \begin{equation}\label{eqn:superm}
    \lim_{n\rightarrow\infty}\E[\QQ]{(\widehat H\cdot S)_{\tau_n}\ind_{\{\tau_n<\infty\}}}<0.
  \end{equation}
  Since $H\cdot S$ is a uniformly integrable martingale under $\optQQ$ we also have
  \begin{equation}\label{eqn:martingale}
    \lim_{n\rightarrow\infty}\E[\optQQ]{(\widehat H\cdot S)_{\tau_n}\ind_{\{\tau_n<\infty\}}}=0.
  \end{equation}
  It follows from \eqref{eqn:superm} and \eqref{eqn:martingale} that
  \[ \limsup_{n\rightarrow\infty}\E{(\widehat H\cdot S)_{\tau_n}Y_{\tau_n}\ind_{\{\tau_n<\infty\}}\ind_{\{Y_{\tau_n} \ge \widehat Y_{\tau_n}\}}}<0, \]
  where $Y_t$ is the density process of $y\radnik$, and $\widehat Y_t$ is the density process of $\radnik[\widehat\mu]$. Set $A_n\coloneqq\{\tau_n<\infty,Y_{\tau_n}\ge\widehat Y_{\tau_n}\}$, and note that $\limsup_{n\rightarrow\infty}A_n=\emptyset$. By applying Lemma \ref{thm:kabastri}, we see that $y\QQ\ll\widehat\mu$, and hence $\{\widehat Y_{\tau_n}=0\}\subseteq\{Y_{\tau_n}=0\}$. It follows from the monotonicity of $Dv_{\tau_n}$ (shown in Lemma \ref{thm:convex}) that $Y_{\tau_n}Dv_{\tau_n}(\widehat Y_{\tau_n})\le Y_{\tau_n}Dv_{\tau_n}(Y_{\tau_n})$ $\PP$-a.s. Applying Corollary \ref{thm:2ndingredient} and Proposition \ref{thm:defoptgain}
  \begin{align*}
    \E{\left(V\left(y\radnik\right)+y\radnik\edow\right)\ind_{A_n}}
    &=\E{\E{V\left(y\radnik\right)+y\radnik\edow\bigg|\F_{\tau_n}}\ind_{A_n}}\\
    &\ge\E{v_{\tau_n}(Y_{\tau_n})\ind_{A_n}}\\
    &\ge c\E{Y_{\tau_n}Dv_{\tau_n}(Y_{\tau_n})\ind_{A_n}}-\E{Y_{\tau_n}\ind_{A_n}}\\
    &\ge c\E{Y_{\tau_n}Dv_{\tau_n}(\widehat Y_{\tau_n})\ind_{A_n}}-y\QQ(A_n)\\
    &\ge cy\E[\QQ]{Dv_{\tau_n}(\widehat Y_{\tau_n})\ind_{A_n}}-y\QQ(A_n)\\
    &=-cy\E[\QQ]{(\widehat H\cdot S)_{\tau_n}\ind_{A_n}}-y\QQ(A_n).
  \end{align*}
  Sending $n\rightarrow\infty$, we deduce from \eqref{eqn:superm} that $\QQ$ cannot have finite relative entropy.\qed
\end{proof}

\section{Dependence on the Endowment}\label{sec:dependow}

In the next result, we investigate the properties of $u_\edow$ as a function of the endowment. This study is worthwhile in its own right, but will also help us prove Proposition \ref{thm:indifprice}. Assumptions \ref{ass:rae} and \ref{ass:noarb} hold throughout this section. We will also assume throughout that all endowments satisfy Assumption \ref{ass:newweakerendowment}. Let $\widehat\mu_\edow$ denote the optimal measure in the dual problem with endowment $\edow$, and let $\optQQ_\edow$ denote the normalization of $\widehat\mu_\edow$.
\begin{proposition}\label{thm:deponendow}
  Define $u(\edow)\coloneqq u_{\edow}$. Then
  \begin{enumerate}
    \item (Monotonicity) $u(\edow_1)\le u(\edow_2)$ for any endowments $\edow_1\le \edow_2$.
    Moreover, $u(\edow_1)<u(\edow_1+\epsilon)$ for all $\epsilon>0$. If, in addition,
    $\M_V^e\neq\emptyset$ then $u(\edow_1)<u(\edow_2)$ for any endowments $\edow_1\le \edow_2$ such
    that $\edow_1\neq \edow_2$;

    \item (Concavity) Given endowments $\edow_1,\edow_2$ and $\lambda\in[0,1]$,
    \[ u(\lambda \edow_1+(1-\lambda)\edow_2)\ge\lambda u(\edow_1)+(1-\lambda)u(\edow_2); \]

    \item (Strong Continuity) If $(\edow_n)_{n\in\NN}$ is a sequence of endowments such that
    \begin{equation}\label{eqn:surprise}
      \sup_{\QQ\in\M_V^a}\E[\QQ]{\edow_n-\edow}\rightarrow0\text{ and }
      \inf_{\QQ\in\M_V^a}\E[\QQ]{\edow_n-\edow}\rightarrow0
    \end{equation}
    as $n\rightarrow\infty$ then $u(\edow_n)\rightarrow u(\edow)$ as $n\rightarrow\infty$.

    Moreover, $\widehat\mu_{\edow_n}\rightarrow\widehat\mu_\edow$ weakly in total variation (i.e. in $\sigma(\ba^\sigma,L^\infty)$) as $n\rightarrow\infty$.

    \item (Lebesgue Continuity) Suppose that $(\edow_n)_{n\ge0}$ is a sequence of endowments which uniformly satisfy Assumption \ref{ass:newweakerendowment} (in the sense that the sub- and super-replicating strategies $(x',H')$ and $(x'',H'')$ can be chosen independently of $n$). If $\edow_n\rightarrow \edow$, $\PP$-a.s as $n\rightarrow\infty$ then $u(\edow_n)\rightarrow u(\edow)$ as $n\rightarrow\infty$.
  \end{enumerate}
\end{proposition}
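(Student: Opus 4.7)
The dual identity $u_\edow = v_\edow$ from Theorem \ref{thm:main} is the engine for all four parts. Monotonicity (i) in its weak form is immediate from the primal since $U$ is increasing. For the strict gain under a positive shift $\epsilon$, the translation identity $v_{\edow+\epsilon}(y) = v_\edow(y) + \epsilon y$ together with Lemma \ref{thm:kabastri}, which forces the minimiser $\hat y_\epsilon$ of the shifted problem into $(0,\infty)$, yields $v_{\edow+\epsilon}= v_\edow(\hat y_\epsilon) + \epsilon\hat y_\epsilon > v_\edow(\hat y_\epsilon) \ge v_\edow$. When $\M_V^e \neq \emptyset$ and $\edow_1 \le \edow_2$ with $\edow_1 \neq \edow_2$, Lemma \ref{thm:kabastri} also gives $\optQQ_{\edow_2} \sim \PP$, so $\widehat\mu_{\edow_2}(\edow_2 - \edow_1) > 0$; comparing $\VV_{\edow_1}(\widehat\mu_{\edow_2}) \ge v_{\edow_1}$ with the identity $\VV_{\edow_1}(\widehat\mu_{\edow_2}) = v_{\edow_2} - \widehat\mu_{\edow_2}(\edow_2 - \edow_1)$ delivers the strict inequality. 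Concavity (ii) is then free: each $\edow \mapsto \VV_\edow(\mu)$ is affine for fixed $\mu$, so $v_\edow$ is an infimum of affine functionals of $\edow$.

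For strong continuity (iii), abbreviate $\widehat\mu = \widehat\mu_\edow$ and $\widehat\mu_n = \widehat\mu_{\edow_n}$. Testing the dual at $\widehat\mu$ gives
\[
v_{\edow_n} \le \VV_{\edow_n}(\widehat\mu) = v_\edow + \|\widehat\mu\|\,\E[\optQQ_\edow]{\edow_n - \edow},
\]
and since the inner expectation is squeezed between the two quantities in \eqref{eqn:surprise}, $\limsup_n v_{\edow_n} \le v_\edow$. Conversely,
\[
v_{\edow_n} \ge v_\edow + \widehat\mu_n(\edow_n - \edow) \ge v_\edow + \|\widehat\mu_n\| \inf_{\QQ \in \M_V^a} \E[\QQ]{\edow_n - \edow},
\]
and the whole argument hinges on uniform boundedness of $\|\widehat\mu_n\| = \hat y_n$. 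I would obtain this by applying Jensen inside the dual, $v_{\edow_n}(y) \ge V(y) + y \inf_\QQ \E[\QQ]{\edow_n}$, the inner infimum being uniformly bounded below (displaced from $\inf_\QQ \E[\QQ]{\edow} \ge x'$ only by the vanishing quantity in \eqref{eqn:surprise}); since $V(y)/y \to \infty$ (consequence of $V'(\infty) = \infty$) and $v_{\edow_n}$ is uniformly bounded above by the previous step, $\hat y_n$ cannot escape to infinity. The lower bound then gives $v_{\edow_n} \to v_\edow$. Consequently $\E{V(\radnik[\widehat\mu_n])}$ stays bounded, so de la Vallée Poussin and Dunford--Pettis yield weak sequential precompactness of $\{\radnik[\widehat\mu_n]\}$ in $L^1(\PP)$; any weak limit $\mu^*$ satisfies $\VV_\edow(\mu^*) \le \liminf_n \VV_\edow(\widehat\mu_n) = \lim_n v_{\edow_n} = v_\edow$ by $\sigma(\ba,L^\infty)$-lower semicontinuity of $\VV_\edow$ together with $\widehat\mu_n(\edow - \edow_n) \to 0$, so $\mu^* = \widehat\mu$ by uniqueness of the dual minimiser, and the full sequence converges in $\sigma(\ba^\sigma, L^\infty)$.

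Part (iv) I would dispatch with the same two-sided sandwich. The uniform replicability hypothesis supplies a single envelope $W$ with $|\edow_n - \edow| \le W$ for every $n$ and $\E[\QQ]{W} \le 2(|x'| + |x''|)$ for every $\QQ \in \M_V^a$ (using the supermartingale and martingale properties defining $\H^\perm$ and $\H^\mg$, together with the uniform lower bound $(H''-H')\cdot S \ge x'-x''$). Dominated convergence applied to $\VV_{\edow_n}(\widehat\mu_\edow)$ yields $\limsup v_{\edow_n} \le v_\edow$; for $\liminf v_{\edow_n} \ge v_\edow$, a Vitali-type argument combining the $\PP$-a.s. convergence $\edow_n \to \edow$, the envelope $W$, and uniform integrability of $\{\radnik[\widehat\mu_n]\}$ (derived exactly as in (iii)) controls $\widehat\mu_n(\edow_n - \edow) \to 0$. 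The principal obstacle in the whole proposition is the uniform bound on $\|\widehat\mu_n\|$ in (iii): once secured it simultaneously powers the quantitative lower bound for $v_{\edow_n}$, the uniform integrability needed to identify the weak limit of the dual optimisers, and the domination required in (iv).
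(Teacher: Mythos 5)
Your treatment of parts (i), (ii), and the value--convergence half of (iii) is correct and follows essentially the same dual route as the paper, including the Jensen argument $H(\mu)\ge V(\|\mu\|)$ to confine the dual optimisers to a ball of radius $r$. For the weak convergence of the dual optimisers in (iii) you take a genuinely different road: from the uniform bound $\sup_n\E{V(\radnik[\widehat\mu_n])}<\infty$ you invoke de la Vall\'ee Poussin and Dunford--Pettis to extract weak $L^1$ sequential compactness and then identify the limit through lower semicontinuity of $\VV_\edow$ and uniqueness of the minimiser. The paper instead works directly with the $\sigma(\ba,L^\infty)$-compactness of the norm-ball in $\ba$ and a subnet argument, observing that the minimising sequence's cluster points must coincide with the (countably additive) minimiser. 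Both arguments are sound; yours has the modest advantage of producing sequential compactness in $L^1$ directly rather than passing through subnets, while the paper's keeps the whole proof inside the functional-analytic framework already set up.

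Part (iv) contains a genuine gap. Your upper bound via dominated convergence at the fixed measure $\widehat\mu_\edow$ is fine (the paper achieves the same conclusion via the reverse Fatou lemma applied termwise inside the infimum), but the lower bound does not go through as stated. To conclude $\liminf_n v_{\edow_n}\ge v_\edow$ from the sandwich $v_{\edow_n}\ge v_\edow+\widehat\mu_n(\edow_n-\edow)$ you need $\liminf_n\widehat\mu_n(\edow_n-\edow)\ge0$, and the ``Vitali-type argument'' you invoke does not supply this: uniform integrability of $\{\radnik[\widehat\mu_n]\}$, $\PP$-a.s.\ convergence of $\edow_n$ to $\edow$, and the integrable envelope $W$ are collectively insufficient to force $\E{\radnik[\widehat\mu_n](\edow_n-\edow)}\to0$. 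The products $\radnik[\widehat\mu_n](\edow_n-\edow)$ need not themselves be uniformly integrable --- roughly, the densities can concentrate mass precisely where $|\edow_n-\edow|$ is still large --- and simple counterexamples on a discrete $\Omega$ exhibit $Z_n$ uniformly integrable, $Y_n\le W\in L^1$ with $Y_n\to0$ a.s., and $\E{Z_nY_n}\equiv1$. One cannot ``control $\widehat\mu_n(\edow_n-\edow)\to0$'' this way without exploiting structure you have not used. The paper sidesteps the difficulty by proving the $\liminf$-inequality on the \emph{primal} side: reducing to $\widetilde\edow_n=\edow_n-(H'\cdot S)_T\ge x'$, restricting to admissible $H$, and applying the ordinary Fatou lemma to $U((H\cdot S)_T+\widetilde\edow_n)$, which is bounded below; the interchange of $\sup_H$ and $\liminf_n$ then gives $u(\liminf_n\edow_n)\le\liminf_nu(\edow_n)$. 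You should replace your Vitali argument with this primal Fatou step.
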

The proof of Proposition \ref{thm:deponendow} is given in the Appendix.

\begin{remark}
  The Lebesgue Continuity property above is equivalent to Fatou properties from both above and below (see \eqref{eqn:fatou1} and \eqref{eqn:fatou2}), or, alternatively, to continuity from both
  above and below in the sense that
  \begin{gather*}
    \edow_n\searrow \edow\quad \PP\text{-a.s.}\qquad\Longrightarrow\qquad u(\edow_n)\searrow u(\edow)\\
    \intertext{and} \edow_n\nearrow \edow\quad \PP\text{-a.s.}\qquad\Longrightarrow\qquad u(\edow_n)\nearrow u(\edow).
  \end{gather*}
  The terminology ``Lebesgue Continuity'' has been motivated by similar terminology for utility functions defined on $L^\infty$, which can be found in Jouini, Schachermayer and Touzi (2006).
\end{remark}

\begin{corollary}\label{thm:one}
  Define $u_\edow(B)=u_{\edow+B}$. Then $u_\edow(B)$ satisfies the properties (i)-(iv) of Proposition \ref{thm:deponendow}. Furthermore,
  \[ u_\edow\big(B-\E[\optQQ_\edow]B\big)\le u_\edow\le  u_\edow\big(B-\inf_{\QQ\in\M_V^a}\E[\QQ]B\big). \]
\end{corollary}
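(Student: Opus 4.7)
The plan is to obtain properties (i)--(iv) as an immediate consequence of Proposition \ref{thm:deponendow} applied to the shifted endowments $\edow+B$, and then to establish the two-sided bound on $u_\edow$ by exploiting the duality $u=v$ of Theorem \ref{thm:main}.

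For parts (i)--(iv), observe that $u_\edow(B)=u(\edow+B)$. Thus monotonicity and concavity in $B$ follow from monotonicity and concavity of $u$ in its single argument applied to $\edow+B_1$ and $\edow+B_2$. For strong continuity, the condition $\sup_{\QQ\in\M_V^a}\E[\QQ]{B_n-B}\to 0$ (and analogously for the infimum) is literally the condition \eqref{eqn:surprise} applied to the shifted sequence $\edow+B_n\to \edow+B$. Finally, for Lebesgue continuity, I need to check that if the $B_n$ uniformly satisfy Assumption \ref{ass:newweakerendowment} then so do $\edow+B_n$; this is routine since $\H^\mg$ and $\H^\perm$ are both closed under addition (martingales and supermartingales are closed under sums), so the sub- and super-hedging strategies for $\edow$ can simply be added to those for $B_n$.

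For the right-hand inequality $u_\edow\le u_\edow(B-\inf_{\QQ\in\M_V^a}\E[\QQ]B)$, set $c\coloneqq\inf_{\QQ\in\M_V^a}\E[\QQ]B$. Using Theorem \ref{thm:main}, the definition \eqref{eqn:maindual} of $v$, and writing any $\mu\in\Cone(\M_V^a)\setminus\{0\}$ as $\mu=\|\mu\|\QQ$ with $\QQ\in\M_V^a$,
\begin{align*}
v_{\edow+B-c}
&=\inf_{\mu\in\Cone(\M_V^a)}\E{V\left(\radnik[\mu]\right)+\radnik[\mu]\edow}+\|\mu\|\bigl(\E[\QQ]B-c\bigr) \\
&\ge\inf_{\mu\in\Cone(\M_V^a)}\E{V\left(\radnik[\mu]\right)+\radnik[\mu]\edow}=v_\edow,
\end{align*}
since $\E[\QQ]B\ge c$ for every $\QQ\in\M_V^a$, and the case $\mu=0$ gives $V(0)=U(\infty)\ge v_\edow$ trivially. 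Translating through $u=v$ yields the claim.

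For the left-hand inequality, set $c_0\coloneqq\E[\optQQ_\edow]B$, so that $\E{\radnik[\widehat\mu_\edow](B-c_0)}=\|\widehat\mu_\edow\|(\E[\optQQ_\edow]B-c_0)=0$. Using $\widehat\mu_\edow$ as a (possibly sub-optimal) candidate in the dual problem for the shifted endowment,
\begin{equation*}
v_{\edow+B-c_0}\le\E{V\left(\radnik[\widehat\mu_\edow]\right)+\radnik[\widehat\mu_\edow]\edow}+\E{\radnik[\widehat\mu_\edow](B-c_0)}=v_\edow,
\end{equation*}
which gives $u_\edow(B-c_0)=v_{\edow+B-c_0}\le v_\edow=u_\edow$.

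I do not expect any genuine obstacles here: the only mild subtlety is the verification that $\edow+B_n$ inherits the uniform sandwiching in part (iv), which is resolved by the linearity of the cones $\H^\mg$ and $\H^\perm$; everything else is a direct rewriting of Proposition \ref{thm:deponendow} and a one-line dual comparison.\qed
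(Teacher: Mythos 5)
Your proof is correct and follows essentially the same route as the paper: parts (i)--(iv) are read off from Proposition \ref{thm:deponendow} via $u_\edow(B)=u(\edow+B)$, and the two-sided bound is obtained through the dual representation $u=v$ from Theorem \ref{thm:main}, dropping the nonnegative term $\E{\radnik[\mu]\bigl(B-\inf_{\QQ\in\M_V^a}\E[\QQ]B\bigr)}$ for the upper bound and plugging in the optimal dual measure $\widehat\mu_\edow$ for the lower bound. The extra verifications you supply (that $\edow+B_n$ uniformly inherits Assumption \ref{ass:newweakerendowment} because $\H^\mg$ is a vector space and $\H^\perm$ a cone closed under addition, and the separate treatment of $\mu=0$) are sound and merely make explicit what the paper leaves implicit.
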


\begin{proof}
  Since $u_\edow(B)=u(\edow+B)$, it is clear that conditions (i)-(iv) of Proposition \ref{thm:deponendow} are satisfied. Furthermore, by Theorem \ref{thm:main},
  \begin{align*}
    u_\edow\big(B-\inf_{\QQ\in\M_V^a}\E[\QQ]B\big)
    &=\min_{\mu\in\Cone(\M_V^a)}\left\{\E{V\left(\radnik[\mu]\right)+\radnik[\mu]\edow}+\E{\radnik[\mu]\big(B-\inf_{\QQ\in\M_V^a}\E[\QQ]B\big)}\right\}\\
    &\ge v_\edow+0=u_\edow.
  \end{align*}
  For the other inequality,
  \begin{align*}
    u_\edow(B-\E[\optQQ_\edow]B) &=
    \min_{\mu\in\Cone(\M_V^a)}\left\{\E{V\left(\radnik[\mu]\right)+\radnik[\mu]\big(\edow+B-\E[\optQQ_\edow]B\big)}\right\}\\
    &\le
    \E{V\left(\radnik[\widehat\mu_\edow]\right)+\radnik[\widehat\mu_\edow]\edow}+\E{\radnik[\widehat\mu_\edow]\big(B-\E[\optQQ_\edow]B\big)}\\
    &= v_\edow+0=u_\edow.
  \end{align*}
  \qed
\end{proof}

\begin{remark}
  Note that for $x\in\RR$, the function $u_\edow(x)$ itself inherits the properties of a utility function. Furthermore,
  \begin{align*}
    u_\edow(x)=v_{\edow+x}&=\min_{\mu\in\Cone(\M_V^a)\setminus\{0\}}\E{V\left(\radnik[\mu]\right)+\radnik[\mu](\edow+x)}\\
    &=\min_{y>0}\left\{\inf_{\QQ\in\M_V^a}\E{V\left(y\radnik\right)+y\radnik\edow}+xy\right\}\\
  &=\min_{y>0}\{v_\edow(y)+xy\},
  \end{align*}
  where $v_\edow(y)$ is defined in equation \eqref{eqn:simpledual}, so $v_\edow(y)$ is the convex conjugate of $u_\edow(x)$. Moreover $v_\edow(y)$ satisfies the following growth condition: For any $\lambda>0$,
  \begin{align*}
    v_\edow(\lambda y) &= \inf_{\QQ\in\M_V^a}\E{V\left(\lambda y\radnik\right)+\lambda y\radnik\edow} \\
    &\le \inf_{\QQ\in\M_V^a}\E{CV\left(y\radnik\right)+\lambda y\radnik\edow}\\
    &\le C\inf_{\QQ\in\M_V^a}\E{V\left(y\radnik\right)+y\radnik\edow}-\inf_{\QQ\in\M_V^a}\E[\QQ]{(C-\lambda)\edow}y\\
    &=Cv_\edow(y)+C'y.
  \end{align*}
  It follows that $u_\edow(x)$ satisfies the condition of Reasonable Asymptotic Elasticity, i.e. Assumption \ref{ass:rae}.
\end{remark}

\section{Marginal Utility-Based Price Processes}\label{sec:mubpp}

In this section we consider the dynamic pricing of purely financial assets. Each of the financial assets, to be introduced to an existing economy, is assumed to be in zero total supply. The dynamics of the asset prices should be consistent with the balance of supply and demand, expected to hold for a financial market in equilibrium. The price processes of the new assets should therefore be such that a representative investor acting optimally will not wish to invest in them.

The precise definition of a MUBPP is given in Definition \ref{def:MUBPP}. The intuition behind a MUBPP can alternatively be viewed at the level of an individual investor, whose expected utility would not increase if the assets are introduced at a fair price. Put differently, if an individual investor believes the asset price process is fair then their optimal demand for the new assets will be equal to zero. Our main result in this section is that a locally bounded semimartingale is a MUBPP if and only if it is a local martingale under the (normalized) optimal measure in the dual problem.

Marginal Utility-Based Prices have previously been investigated by Hugonnier, Kramkov and Schachermayer (2005), in the context of static buy-and-hold strategies for the new assets, and deterministic initial wealth. A dynamic version of this theory was developed by Kallsen (2002) and Kallsen and K\"uhn (2004, 2005), where they refer to ``Neutral Price Processes''. In the most recent of these papers they show, in the discrete time setting, that Neutral Price Processes are martingales under the optimal dual measure. We refer the reader to Foldes (2000) for a wider exposition on themes within the economic theory of asset pricing. It is worth noting that there are also close links between MUBPP's and Davis's ``fair price'' (see Davis (1997) and Karatzas and Kou (1996)).

Let $S$, $S'$ be respectively $d$- and $d'$-dimensional locally bounded semimartingales. We can combine $S$ and $S'$ together to form the $(d+d')$-dimensional locally bounded semimartingale $(S,S')$, which represents an augmented market. Note that $\M_V^a(S,S')=\M_V^a(S)\cap\M_V^a(S')$. We shall assume throughout this section that $u_\edow^\adm(S)<U(\infty)$, which by Theorem \ref{thm:minimalassumption} is equivalent to the statement that $\M_V^a(S)$ is non-empty.
\begin{lemma}
  Consider an investor with a utility function $U$ satisfying Assumption \ref{ass:rae} and a random endowment $\edow$ satisfying Assumption \ref{ass:edow}. Assume that the investor has access to a market $S$. Let $S'$ be a $d'$-dimensional locally bounded semimartingale. The following statements are equivalent:
  \begin{enumerate}
    \item $u_{\edow}(S)=u_{\edow}^\adm(S,S')$;
    \item $\M_V^a(S,S')\neq\emptyset$ and $u_{\edow}(S)=u_{\edow}(S,S')$;
  \end{enumerate}
  If, furthermore, $\M_V^e(S)$ is non-empty then the above statements are equivalent to
  \begin{enumerate}\setcounter{enumi}2
    \item $\M_V^e(S,S')\neq\emptyset$ and $(\widehat H,0)$ is optimal in the primal problem $u_{\edow}(S,S')$, where $\widehat H\in\H^\perm$ is the optimal solution to the primal problem $u_{\edow}(S)$.
  \end{enumerate}
\end{lemma}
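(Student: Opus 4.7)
The plan is to reduce everything to two simple but central observations, and then to invoke Theorem~\ref{thm:minimalassumption} and Theorem~\ref{thm:main} on the augmented market $(S,S')$. The two observations are:
\begin{enumerate}
\item $\M_V^a(S,S')\subseteq\M_V^a(S)$ (and similarly for $\M_V^e$), which is immediate from the definition;
\item any $H\in\H^\perm$ for the market $S$ lifts to $(H,0)\in\H^\perm$ for $(S,S')$, because its wealth process $H\cdot S$ must be a supermartingale only under the smaller set $\M_V^a(S,S')$. As a consequence, $u_\edow(S)\le u_\edow(S,S')$ and, similarly at the admissible level, $u_\edow^\adm(S)\le u_\edow^\adm(S,S')\le u_\edow(S,S')$. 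Moreover Assumption~\ref{ass:edow} for $\edow$ on the market $S$ automatically gives Assumption~\ref{ass:edow} on $(S,S')$ via the same lift.
\end{enumerate}

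For (i)$\Rightarrow$(ii): the standing hypothesis $u_\edow^\adm(S)<U(\infty)$ together with Theorem~\ref{thm:minimalassumption}(i) applied to $S$ gives $u_\edow(S)=u_\edow^\adm(S)<U(\infty)$, hence $u_\edow^\adm(S,S')<U(\infty)$. Applying Theorem~\ref{thm:minimalassumption}(i) now to the augmented market yields both $\M_V^a(S,S')\neq\emptyset$ and $u_\edow^\adm(S,S')=u_\edow(S,S')$, which combined with the assumed equality delivers $u_\edow(S)=u_\edow(S,S')$. The converse (ii)$\Rightarrow$(i) is even easier: having $\M_V^a(S,S')\neq\emptyset$ lets us apply Theorem~\ref{thm:minimalassumption}(i) to $(S,S')$ directly, so $u_\edow^\adm(S,S')=u_\edow(S,S')=u_\edow(S)$.

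For the equivalence with (iii) under the extra hypothesis $\M_V^e(S)\neq\emptyset$: the direction (iii)$\Rightarrow$(ii) is trivial since $\M_V^e(S,S')\subseteq\M_V^a(S,S')$ and the optimum value $\E{U(\widehat H\cdot S+\edow)}=u_\edow(S,S')$ is sandwiched between $u_\edow(S)$ (by definition of $u_\edow(S)$) and $u_\edow(S,S')\ge u_\edow(S)$ (by observation~2), forcing equality. For (ii)$\Rightarrow$(iii), use Theorem~\ref{thm:main}(iii) applied to the market $S$ (allowed because $\M_V^e(S)\neq\emptyset$) to produce an optimal $\widehat H\in\H^\perm$ for $u_\edow(S)$; by observation~2, $(\widehat H,0)\in\H^\perm$ for $(S,S')$, and since it attains $u_\edow(S)=u_\edow(S,S')$, it is optimal there. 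Theorem~\ref{thm:minimalassumption}(ii), applied now to $(S,S')$ (its hypotheses hold on this market by (ii) and the transfer of Assumption~\ref{ass:newweakerendowment}), finally yields $\M_V^e(S,S')\neq\emptyset$.

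The argument is essentially bookkeeping: there is no new optimization to perform, only careful verification that the standing assumptions transfer from $S$ to $(S,S')$. The only place this could go wrong is observation~2, where one must check that permissibility is preserved under the passage to a larger market; happily, the supermartingale condition becomes weaker, not stronger, when the class of test measures shrinks from $\M_V^a(S)$ to $\M_V^a(S,S')$. Thus the main obstacle, such as it is, lies in setting up the diagram of implications in the correct order so that one can successively invoke Theorems~\ref{thm:minimalassumption}(i), \ref{thm:main}(iii) and \ref{thm:minimalassumption}(ii) on the augmented market with all of their hypotheses already verified.
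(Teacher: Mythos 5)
Your proof is correct and follows essentially the same route as the paper: reduce everything to Theorems~\ref{thm:minimalassumption} and~\ref{thm:main} applied on the augmented market $(S,S')$. Your explicit ``observation~2'' — that $\H^\perm(S)$ embeds into $\H^\perm(S,S')$ via $H\mapsto(H,0)$ because the supermartingale condition is tested against the smaller set $\M_V^a(S,S')\subseteq\M_V^a(S)$ — is a detail the paper leaves implicit when asserting optimality of $(\widehat H,0)$, so your write-up is marginally more careful on that point.
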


\begin{proof}
  The equivalence of conditions (i) and (ii) is trivial by Theorem \ref{thm:minimalassumption}(i). Suppose that $\M_V^e(S)$ is non-empty, in which case, by Theorem \ref{thm:main} there exists an optimal $\widehat H\in\H^\perm$ in the primal problem $u_\edow(S)$. The implication (iii)$\Rightarrow$(ii) is trivial, so we shall only demonstrate (ii)$\Rightarrow$(iii): $(\widehat H,0)$ is optimal in the primal problem $u_\edow(S,S')$ because
  \[ u_\edow(S,S') = u_\edow(S) =\E{U\big((\widehat H\cdot S)_T+\edow\big)} = \E{U\big(((\widehat H,0)\cdot(S,S'))_T+\edow\big)}. \]
  It now follows from Theorem \ref{thm:minimalassumption}(ii) that $\M_V^e(S,S')\neq\emptyset$.\qed
\end{proof}

\begin{definition}\label{def:MUBPP}
  We shall say that $S'$ is a \emph{Marginal Utility-Based Price Process} (MUBPP) for the investor
  $(U,\edow)$ on the market $S$ if it satisfies (any one of) the equivalent conditions above.
\end{definition}
\begin{remark}[Pricing Contingent Claims]
  Suppose that $S'$ is the price process of a derivative security or a contingent claim which matures at time $T$. The price of the asset at maturity will be equal to the payoff of the contingent claim. If $S'$ is a MUBPP then it is a fair price process for the contingent claim $B=S_T'$ because the investor will not be tempted to take either a long or a short position in the claim at any time.
\end{remark}

\noindent The next result gives a natural characterization of MUBPP's, thus generalizing Theorem 4.2 of Kallsen and K\"uhn (2005) to the continuous time setting. We let $\optQQ_\edow$ denote the local martingale measure which is obtained by normalizing (to a probability measure) the optimal measure $\widehat\mu_\edow$ in the dual problem with endowment $\edow$.
\begin{theorem}\label{thm:MUBPP}
  A semimartingale $S'$ is a MUBPP if and only if it is a $\optQQ_\edow$ local martingale.
\end{theorem}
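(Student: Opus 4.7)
My plan is to prove both directions by comparing the dual problems on the original market $S$ and on the augmented market $(S,S')$, using the identity $\M_V^a(S,S')=\M_V^a(S)\cap\M_V^a(S')$ together with uniqueness of the dual minimizer from Proposition \ref{thm:lagrange_duality}. The strategy is to show that enlarging the market leaves the dual value unchanged precisely when the optimal measure for $S$ already prices $S'$ as a local martingale.

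For the \emph{if} direction, suppose $S'$ is a local $\optQQ_\edow$-martingale. Then $\widehat\mu_\edow\in\Cone(\M_V^a(S,S'))$, so in particular $\M_V^a(S,S')\neq\emptyset$. From $\M_V^a(S,S')\subseteq\M_V^a(S)$ we get $v_\edow(S,S')\ge v_\edow(S)$, while the fact that $\widehat\mu_\edow$ is now a feasible point in the augmented dual problem and attains the value $v_\edow(S)$ yields the reverse inequality. Hence $v_\edow(S,S')=v_\edow(S)$, and applying Theorem \ref{thm:main} to both markets converts this to $u_\edow(S,S')=u_\edow(S)$, which is condition (ii) of the preceding lemma, so $S'$ is a MUBPP.

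For the \emph{only if} direction, suppose $S'$ is a MUBPP. Condition (ii) gives $\M_V^a(S,S')\neq\emptyset$ and $u_\edow(S)=u_\edow(S,S')$, and Theorem \ref{thm:main} applied to the augmented market produces a unique minimizer $\widetilde\mu\in\Cone(\M_V^a(S,S'))\setminus\{0\}$ with $\VV_\edow(\widetilde\mu)=v_\edow(S,S')=v_\edow(S)$. Since $\widetilde\mu$ sits inside the larger cone $\Cone(\M_V^a(S))$ and realizes the value $v_\edow(S)$, it is also a minimizer for the original dual problem; uniqueness from Proposition \ref{thm:lagrange_duality} forces $\widetilde\mu=\widehat\mu_\edow$. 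Therefore $\widehat\mu_\edow\in\Cone(\M_V^a(S,S'))$, so after normalization $\optQQ_\edow$ is a local martingale measure for the pair $(S,S')$, and in particular $S'$ is a local $\optQQ_\edow$-martingale.

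The main obstacle is purely bookkeeping: I need the hypotheses of Theorem \ref{thm:main} to transfer from $S$ to $(S,S')$. Assumption \ref{ass:rae} is intrinsic to $U$; Assumption \ref{ass:noarb} on $(S,S')$ is exactly $\M_V^a(S,S')\neq\emptyset$, which is supplied by the hypothesis in each direction (the local martingale assumption in one direction, condition (ii) in the other); and Assumption \ref{ass:edow} is preserved because any $H''\in\H^\adm(S)$ lifts to $(H'',0)\in\H^\adm(S,S')$ with the same terminal gain. Once these three points are observed, the proof reduces to the trivial inclusion $\M_V^a(S,S')\subseteq\M_V^a(S)$ combined with strict convexity of the dual objective.
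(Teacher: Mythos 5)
Your proof is correct. The ``only if'' direction is essentially the paper's: condition (ii) plus no duality gap on both markets give $v_\edow(S,S')=v_\edow(S)$; the minimizer of $v_\edow(S,S')$ then lies in the larger cone $\Cone(\M_V^a(S))$ and attains $v_\edow(S)$, so uniqueness of the dual optimizer forces it to equal $\widehat\mu_\edow$, whence $\optQQ_\edow\in\M_V^a(S,S')$. Where you differ from the paper is in the ``if'' direction: the paper works on the primal side, using the supermartingale property of $(H,H')\cdot(S,S')$ under $\optQQ_\edow$ to show directly that $\E{U(\edow+((H,H')\cdot(S,S'))_T)}\le v_\edow(S)=u_\edow(S)$ for every permissible $(H,H')$, hence $u_\edow(S,S')\le u_\edow(S)$. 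You instead stay entirely on the dual side, sandwiching $v_\edow(S,S')$ between $v_\edow(S)$ (from feasibility of $\widehat\mu_\edow$) and $v_\edow(S)$ (from $\Cone(\M_V^a(S,S'))\subseteq\Cone(\M_V^a(S))$), and then invoke the no-duality-gap result on both markets. Both are valid; the paper's version avoids a second appeal to Theorem \ref{thm:main} for the augmented market and instead uses only the supermartingale definition of permissibility, while your version is more symmetric with the ``only if'' direction since both halves then run through $v_\edow$. Your final paragraph correctly handles the bookkeeping of transferring Assumptions \ref{ass:rae}, \ref{ass:noarb}, \ref{ass:edow} to $(S,S')$, which is the one point that must not be glossed over when re-invoking Theorem \ref{thm:main} on the enlarged market.
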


\begin{proof}
  If $S'$ is a $\optQQ_\edow$-local martingale then $\optQQ_\edow\in\M_V^a(S,S')$. It suffices to prove that $u_{\edow}(S,S')\le u_{\edow}(S)$. Indeed, for all $(H,H')\in\H^\perm(S,S')$ the wealth process $(H,H')\cdot(S,S')$ is a $\optQQ_\edow$-supermartingale and hence
  \begin{align*}
    \E{U(\edow+((H,H')\cdot(S,S'))_T)}
    &\le\E{V\left(\radnik[\widehat\mu_\edow]\right)+\radnik[\widehat\mu_\edow](\edow+((H,H')\cdot(S,S'))_T)}\\
    &\le v_\edow(S)=u_{\edow}(S).
  \end{align*}
  Taking the supremum of the left hand side over all $(H,H')\in\H^\perm(S,S')$ gives the required inequality.

  For the other direction: Since $\M_V^a(S)\supseteq\M_V^a(S,S')$ it follows that $v_{\edow}(S)\le v_{\edow}(S,S')$. Suppose that $u_{\edow}(S,S')=u_{\edow}(S)$. Since there is no duality gap for either of the markets $S$ and $(S,S')$ it follows that $v_{\edow}(S)=v_{\edow}(S,S')$, so in fact $\widehat\mu_\edow$ must be the unique optimizer for $v_{\edow}(S,S')$. Hence $\optQQ_\edow\in\M_V^a(S,S')$. Hence $S'$ is a local martingale under $\optQQ_\edow$.\qed
\end{proof}

\begin{remarks}
  \begin{enumerate}
    \item As a special case when the market is trivial, in the sense that $S=0$, the above theorem reduces to the statement that $S'$ will be an uninteresting investment opportunity (for any investor) if and only if it is a $\PP$-local martingale.
    \item If $\M_V^e(S)$ is non-empty then $\optQQ_\edow\sim\PP$, and any $\optQQ_\edow$-local martingale will be a semimartingale under $\PP$ (see e.g. Theorem II.2 of Protter (2003)). Otherwise this may not be the case, and a $\optQQ_\edow$-local martingale may not be a MUBPP (see Example 4.3 of Kallsen and K\"uhn (2005)).
  \end{enumerate}
\end{remarks}

\section{Utility Indifference Pricing}\label{sec:indiff}

Indifference pricing is currently a highly actively area of research, and, correspondingly, the amount of literature on the topic seems to have grown exponentially in recent years. A recent overview of the existing indifference price literature can be found in Henderson and Hobson (2004).

Within mathematical finance, the first reference to indifference pricing is Hodges and Neuberger (1989) in the context of transaction costs. More recently, Rouge and El Karoui (2000) have studied various aspects of the seller's indifference price for bounded contingent claims, exponential utility and a Brownian filtration. They show that the indifference price tends to the super-replication price as the risk aversion parameter tends to infinity, and to the minimal entropy price as the risk aversion parameter tends to $0$. The minimal entropy price is related to Davis's ``fair price'', as mentioned already in Section \ref{sec:mubpp}.

In the semimartingale setting Delbaen et al. (2002) and Becherer (2003) study the indifference price for exponential utility. Between them, these papers generalize the asymptotical analysis of Rouge and El Karoui (2000) with respect to the risk aversion parameter. Recently, Mania and Schweizer (2005) have studied a dynamic version of indifference prices for exponential utility, and Kl\"oppel and Schweizer (2006) consider dynamic versions of utility indifference prices via convex risk measures.

In Propositions \ref{thm:indifprice} and \ref{thm:avindifprice} we present new results on indifference prices. These include continuity properties for the indifference price and volume asymptotics of the average indifference price for the case of a general utility function, unbounded endowment and unbounded contingent claims.

\vskip\baselineskip

Let us consider the point of view of an investor with a utility function $U$ and a random endowment $\edow$, who is considering buying a contingent claim $B$. Assumptions \ref{ass:rae}, \ref{ass:noarb} and \ref{ass:newweakerendowment} will hold throughout this section. In addition, all contingent claims are assumed to satisfy the conditions in Assumption \ref{ass:newweakerendowment}.
\begin{definition}
  The \emph{utility indifferent purchase (bid) price}, $p=p_\edow(B)=p(B;U,\edow)$, of $B$ is defined implicitly as the solution to the equation
  \begin{equation}\label{eqn:primalindif}
    u_{\edow+B-p}=u_\edow.
  \end{equation}
\end{definition}

Our next result concerns the existence and uniqueness of a solution to equation \eqref{eqn:primalindif}, along with various fundamental properties of the indifference price. As usual, $\optQQ_\edow$ denotes the local martingale measure which is obtained by normalizing the optimal measure $\widehat\mu_\edow$ in the dual problem with endowment $\edow$. The expectation $\E[\optQQ_\edow]B$ is related to Davis's fair price (see Davis (1997)).

\begin{proposition}[Indifference Prices]\label{thm:indifprice}
  For any contingent claim satisfying Assumption \ref{ass:newweakerendowment}, there exists a unique solution, $p$, to equation \eqref{eqn:primalindif}. The utility indifferent purchase price, $p_\edow(B)=p$, is therefore well defined. Moreover,
  \begin{enumerate}
    \item (Range of prices) \[ \inf_{\QQ\in\M_V^a}\E[\QQ]B\le p(B)\le \E[\optQQ_\edow]B; \]

    \item (Translation invariance) For $c\in\RR$ we have $p_\edow(B+c)=p_\edow(B)+c$;

    \item (Pricing replicable claims) If $B=(H\cdot S)_T$ for some $H\in\H^\mg$ then $p_\edow(B)=0$;

    \item (Monotonicity) If $B\le C$ then $p_\edow(B)\le p_\edow(C)$. Moreover, if $\M_V^e\neq\emptyset$ then $p_\edow(B)<p_\edow(C)$ for any $B\le C$ such that $B\neq C$;

    \item (Concavity) Given contingent claims $B_1,B_2$ and $\lambda\in[0,1]$,
    \[ p_\edow\big(\lambda B_1+(1-\lambda)B_2\big)\ge\lambda p_\edow(B_1)+(1-\lambda)p_\edow(B_2); \]

    \item (Pricing via entropic penalty)
    \[ p_\edow(B)=\inf_{\QQ\in\M_V^a}\{\E[\QQ]B+\alpha(\QQ)\}, \]
    where the penalty functional, or dual Orlicz norm, $\alpha:\M_V^a\rightarrow[0,\infty)$ is defined by
    \[ \alpha(\QQ)\coloneqq\inf_{y>0}\frac1y\left\{\E{V\left(y\radnik\right)+y\radnik\edow}-v_\edow\right\}; \]

    \item (Strong Continuity) If $(B_n)_{n\in\NN}$ is a sequence of contingent claims such that
    \[ \sup_{\QQ\in\M_V^a}\E[\QQ]{B_n-B}\rightarrow0\text{ and }\inf_{\QQ\in\M_V^a}\E[\QQ]{B_n-B}\rightarrow0 \]
    then $p_\edow(B_n)\rightarrow p_\edow(B)$;

    \item (Fatou property) If $(B_n)_{n\ge0}$ is a sequence of contingent claims which uniformly satisfy Assumption \ref{ass:newweakerendowment} (in the sense that the sub- and super-replicating strategies can be chosen independently of $n$) then
    \[ p_\edow(\limsup\nolimits_nB_n)\ge\limsup\nolimits_np_\edow(B_n); \]

    \item (Continuity from above) If $(B_n)_{n\in\NN}$ is a sequence of contingent claims such that $B_n\searrow B$ $\PP$-a.s. then $p_\edow(B_n)\searrow p_\edow(B)$.
  \end{enumerate}
\end{proposition}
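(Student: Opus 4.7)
The strategy is to leverage Corollary \ref{thm:one}, which already packages the monotonicity, concavity, and continuity properties of the endowment-dependence of $u$ that we shall need. Set $h(p)\coloneqq u_{\edow+B-p}$. Strict monotonicity of $u$ under translation by a constant (Corollary \ref{thm:one}(i), second sentence) makes $h$ strictly decreasing, and Lebesgue continuity (Corollary \ref{thm:one}(iv)) makes $h$ continuous. For existence, Assumption \ref{ass:newweakerendowment} brackets $B$ between $x'+(H'\cdot S)_T$ and $x''+(H''\cdot S)_T$: taking $p=x'$ and using $H'\in\H^\mg$ together with the ``$\widetilde\edow$-argument'' from the proof of Theorem \ref{thm:main} gives $h(x')\ge u_\edow$; taking $p=x''$ and using $H''\in\H^\perm$ (a supermartingale under every $\mu\in\Cone(\M_V^a)$) yields $h(x'')\le u_\edow$ via \eqref{eqn:easypart}. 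The intermediate value theorem produces a unique $p_\edow(B)$.

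Properties (ii)--(v) follow almost mechanically from Corollary \ref{thm:one}. Translation invariance (ii) is immediate. For (iii), if $B=(H\cdot S)_T$ with $H\in\H^\mg$, the martingale property under every $\mu\in\Cone(\M_V^a)$ forces $v_{\edow+B}=v_\edow$, so $u_{\edow+B}=u_\edow$ and $p=0$ solves \eqref{eqn:primalindif}. Monotonicity (iv) follows by comparing $h$ for $B$ and $C$ at $p_\edow(C)$; the strict version uses the stronger statement in Corollary \ref{thm:one}(i) when $\M_V^e\neq\emptyset$. Concavity (v) follows from concavity of $u$ (Corollary \ref{thm:one}(ii)) evaluated at $\edow+\lambda B_1+(1-\lambda)B_2-\lambda p_1-(1-\lambda)p_2$.

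The main technical step, and the principal obstacle, is the dual representation (vi), from which (i) also follows. Expanding $u_{\edow+B-p}=v_{\edow+B-p}$ via Theorem \ref{thm:main} and parametrizing $\mu=y\QQ$ with $y>0$, $\QQ\in\M_V^a$,
\begin{equation*}
v_{\edow+B-p} = \min\Bigl\{V(0),\,\inf_{y>0,\,\QQ\in\M_V^a}\bigl[\E{V(y\radnik)+y\radnik\edow}+y\E[\QQ]{B}-yp\bigr]\Bigr\}.
\end{equation*}
Since $v_\edow<V(0)$ by Theorem \ref{thm:main}(i), the equation $v_{\edow+B-p}=v_\edow$ is controlled by the $y>0$ branch. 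Dividing through by $y$, a direct algebraic manipulation shows that $v_{\edow+B-p}\ge v_\edow$ iff $p\le\E[\QQ]{B}+\alpha_\QQ(y)$ for all $(y,\QQ)$, where $\alpha_\QQ(y)\coloneqq\tfrac1y\{\E{V(y\radnik)+y\radnik\edow}-v_\edow\}\ge0$; dually, $v_{\edow+B-p}\le v_\edow$ iff $p\ge\E[\QQ]{B}+\alpha_\QQ(y)$ for some $(y,\QQ)$. Continuity of $h$ then gives $p_\edow(B)=\inf_{\QQ\in\M_V^a}\{\E[\QQ]{B}+\alpha(\QQ)\}$. Part (i) is now immediate: $\alpha(\QQ)\ge0$ yields the lower bound, while $\alpha(\optQQ_\edow)=0$ (attained at $y=\|\widehat\mu_\edow\|$, since then $y\radnik[\optQQ_\edow]=\radnik[\widehat\mu_\edow]$ and the bracket reduces to $v_\edow-v_\edow$) yields the upper bound $p_\edow(B)\le\E[\optQQ_\edow]{B}$.

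For the continuity statements, (vii) is an immediate consequence of (vi): the two-sided hypothesis forces $\sup_{\QQ\in\M_V^a}\bigl|\E[\QQ]{B_n-B}\bigr|\to0$, and the standard estimate $|\inf f-\inf g|\le\sup|f-g|$ transfers this to the infima defining $p_\edow$. For continuity from above (ix), $B_n\searrow B$ forces $p_\edow(B_n)\searrow q\ge p_\edow(B)$ by monotonicity (iv); the sandwich $\edow+B-p_\edow(B_1)\le\edow+B_n-p_\edow(B_n)\le\edow+B_1-q$ shows the sequence $(\edow+B_n-p_\edow(B_n))_n$ uniformly satisfies Assumption \ref{ass:newweakerendowment}, so Corollary \ref{thm:one}(iv) applied to $u_\edow=\lim_n u_{\edow+B_n-p_\edow(B_n)}$ yields $u_{\edow+B-q}=u_\edow$, forcing $q=p_\edow(B)$. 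Finally (viii) reduces to (ix) via $\widetilde B_n\coloneqq\sup_{k\ge n}B_k\searrow\limsup_n B_n$ (uniform sub-/super-replication is inherited) combined with $p_\edow(\widetilde B_n)\ge\sup_{k\ge n}p_\edow(B_k)$ from (iv).
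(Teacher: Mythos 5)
Your proof is correct and follows essentially the same architecture as the paper: existence, uniqueness, monotonicity and concavity come from Corollary \ref{thm:one}; the entropic-penalty representation (vi) is obtained by expanding $u_\edow = v_\edow$ over $\Cone(\M_V^a)$, parametrizing $\mu = y\QQ$ and dividing by $y$; and the continuity statements flow from (vi). The one genuine difference is your treatment of (viii) and (ix), which you prove in the opposite order. The paper proves the Fatou property (viii) directly from the dual formula by a reverse Fatou lemma ($\E[\QQ]{\limsup_n B_n}\ge\limsup_n\E[\QQ]{B_n}$ for uniformly dominated $B_n$) and an $\inf$--$\limsup$ swap, then obtains (ix) as a special case. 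You instead prove (ix) first from elementary ingredients (monotonicity gives $p_\edow(B_n)\searrow q\ge p_\edow(B)$, the sandwich $\edow+B-p_\edow(B_1)\le\edow+B_n-p_\edow(B_n)\le\edow+B_1-q$ gives uniform sub/super-replication, and Lebesgue continuity forces $u_{\edow+B-q}=u_\edow$, hence $q=p_\edow(B)$), and then deduce (viii) by applying (ix) to $\widetilde B_n\coloneqq\sup_{k\ge n}B_k\searrow\limsup_n B_n$ combined with $p_\edow(\widetilde B_n)\ge\sup_{k\ge n}p_\edow(B_k)$. Both routes are valid; yours avoids invoking Fatou on the dual side at the cost of the monotone truncation trick. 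Two smaller cosmetic deviations: you derive (i) from (vi) together with $\alpha(\optQQ_\edow)=0$ (attained at $y=\|\widehat\mu_\edow\|$) rather than reading it off the displayed inequality in Corollary \ref{thm:one}; and for (iii) you argue via the dual ($\E[\mu]{(H\cdot S)_T}=0$ for all $\mu$ forces $v_{\edow+B}=v_\edow$) rather than via the primal identity $\pm H+\H^\perm=\H^\perm$. All of these are sound.
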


\begin{proof}
  Let $u_\edow(B)$ be as defined in Section \ref{sec:dependow}. The existence and uniqueness of a solution to \eqref{eqn:primalindif}, as well as statement (i) follow immediately from Corollary \ref{thm:one}. Part (ii) is obvious, part (iii) is a trivial consequence of the fact that the vector space $\H^\mg$ lies within $\H^\perm$, so $\pm H+\H^\perm=\H^\perm$. Part (iv) follows from monotonicity of $u_\edow(.)$. Part (v) also follows from Corollary \ref{thm:one}: By concavity of  $u_\edow(.)$,
  \begin{align*}
    &u_\edow\big(\lambda B_1+(1-\lambda)B_2-\lambda p_\edow(B_1)-(1-\lambda)p_\edow(B_2)\big)\\
    &\qquad\qquad\qquad\ge \lambda u_\edow\big(B_1-p_\edow(B_1)\big)+(1-\lambda)
    u_\edow\big(B_2-p_\edow(B_2)\big) =u_\edow.
  \end{align*}
  By monotonicity of $u_\edow(.)$ we have $p_\edow(\lambda B_1+(1-\lambda)B_2)\ge\lambda p_\edow(B_1)+(1-\lambda)p_\edow(B_2)$.

  (vi) Let $\widehat\mu$ be the optimal measure in the dual problem for the endowment $\edow$. Then by monotonicity and continuity of $u_\edow(.)$,
  \begin{align*}
    p_\edow(B)&=\inf\{p:u_\edow(B-p)<v_\edow\}\\
    &=\inf\left\{p:\inf_{\mu\in\Cone(\M_V^a)}\left\{\E{V\left(\radnik[\mu]\right)+\radnik[\mu](\edow+B-p)}-v_\edow\right\}<0\right\}\\
    &=\inf\left\{p:\inf_{\QQ\in\M_V^a}\inf_{y>0}\left\{\E{V\left(y\radnik\right)+y\radnik\edow}+y\E[\QQ]B-v_\edow-yp\right\}<0\right\}\\
    &=\inf\left\{p:\inf_{\QQ\in\M_V^a}\left\{\E[\QQ]B+\inf_{y>0}\left\{\frac1y\left(\E{V\left(y\radnik\right)+y\radnik\edow}-v_\edow\right)\right\}\right\}<p\right\}\\
    &=\inf_{\QQ\in\M_V^a}\{\E[\QQ]B+\alpha(\QQ)\}.
  \end{align*}
  (vii) This follows from part (vi) because
  \begin{align*}
    -\sup_{\QQ\in\M_V^a}\E[\QQ]{B-B_n}
    & \le\inf_{\QQ\in\M_V^a}\{\E[\QQ]{B_n}+\alpha(\QQ)\}-\inf_{\QQ\in\M_V^a}\{\E[\QQ]B+\alpha(\QQ)\} \\
    & \le \sup_{\QQ\in\M_V^a}\E[\QQ]{B_n-B}.
  \end{align*}
  Hence $|p_\edow(B_n)-p_\edow(B)|\le\sup_{\QQ\in\M_V^a}|\E[\QQ]{B_n-B}|$.

  (viii) It follows immediately from the conditions that $\limsup_n B_n$ also satisfies Assumption \ref{ass:newweakerendowment}. Moreover, using Fatou's Lemma,
  \begin{align*}
    p_\edow(\limsup\nolimits_nB_n) &= \inf_{\QQ\in\M_V^a}\{\E[\QQ]{\limsup\nolimits_nB_n}+\alpha(\QQ)\} \\
    &\ge \inf_{\QQ\in\M_V^a}\limsup\nolimits_n\{\E[\QQ]{B_n}+\alpha(\QQ)\} \\
    &\ge \limsup\nolimits_n\inf_{\QQ\in\M_V^a}\{\E[\QQ]{B_n}+\alpha(\QQ)\} \\
    &= \limsup\nolimits_np_\edow(B_n).
  \end{align*}

  (ix) This follows immediately from (viii).\qed
\end{proof}

\begin{remarks}
  \begin{enumerate}
    \item The quantity $\pi_\edow(B)\coloneqq-p_\edow(-B)$ is the utility indifferent sale (offer) price of $B$. It follows immediately from concavity of $p_\edow$ that $\pi_\edow(B)\ge p_\edow(B)$;

    \item The certainty equivalent $c=c_\edow(B)=c(B;U,\edow)$ of a contingent claim $B$ is defined implicitly as the solution to the equation $u_{\edow+B}=u_{\edow+c}$. Clearly, $p_\edow(B)=-c_{\edow+B}(-B)$. This alternative, but intimately related, approach to pricing contingent claims has been investigated by Frittelli (2000) and Becherer (2003) with zero random endowment.

    \item The buyer's indifference price has an immediate interpretation for a risk manager with a risky position $B$, as it is the largest amount that another investor (with utility function $U$ and endowment $\edow$) would be prepared to pay in order to take on the risk inherent in $B$. Put differently, $\rho(B;U,\edow)\coloneqq-p(B;U,\edow)$ is the smallest amount of money that would have to be added to $B$ in order to make it acceptable for the risk to be taken on by another investor, and therefore has the properties of a convex risk measure. We refer the reader to F\"ollmer and Schied (2002) for details about convex risk measures. It is no coincidence that the penalty functional $\alpha$ in Proposition \ref{thm:indifprice} is a type of dual Orlicz norm similar to those arising from considering risk measures associated to shortfall risk, as in Section 3 of F\"ollmer and Schied (2002).
  \end{enumerate}
\end{remarks}

In the next result we analyze volume asymptotic properties of the average indifference price, generalizing results of Delbaen et al. (2002) and Becherer (2003) to any utility function satisfying Assumption \ref{ass:rae}. Related results on asymptotics (all for exponential utility) have been obtained by Rouge and El Karoui (2000), Fujiwara and Miyahara (2003) and Stricker (2004).

\begin{definition}
  For $\beta>0$, the \emph{average utility indifferent purchase price} for $\beta$ units of the contingent claim $B$ is defined by
  \[ p_\edow(B,\beta)\coloneqq\frac{p_\edow(\beta B)}\beta. \]
\end{definition}

\begin{proposition}[Volume Asymptotics]\label{thm:avindifprice}
  Suppose that $B$ satisfies the conditions of Assumption \ref{ass:newweakerendowment}. Then
  $p_\edow(B,\beta)$ is a continuous, non-increasing function of $\beta$. Moreover,
  \begin{enumerate}
    \item $\displaystyle\inf_{\QQ\in\M_V^a}\E[\QQ]B\le p(B,\beta)\le\E[\optQQ_\edow]B$;
    \item $\displaystyle\lim_{\beta\rightarrow\infty}p_\edow(B,\beta)=\inf_{\QQ\in\M_V^a}\E[\QQ]B$;
    \item $\displaystyle\lim_{\beta\rightarrow0}p_\edow(B,\beta)=\E[\optQQ_\edow]B$.
  \end{enumerate}
\end{proposition}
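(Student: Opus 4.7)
The plan is to prove the monotonicity and continuity by exploiting the concavity of $p_\edow$, to obtain (i) by direct substitution, (ii) by letting the entropic penalty $\alpha(\QQ)/\beta$ vanish in the representation of Proposition \ref{thm:indifprice}(vi), and (iii) by showing that the same infimum concentrates on $\optQQ_\edow$ as $\beta\searrow 0$.

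First, the function $g(\beta)\coloneqq p_\edow(\beta B)$ is concave on $[0,\infty)$ by Proposition \ref{thm:indifprice}(v) and satisfies $g(0)=0$ by the replicable-claim property in Proposition \ref{thm:indifprice}(iii) applied to the zero strategy. Since a concave function vanishing at $0$ has a non-increasing difference quotient on $(0,\infty)$, the map $\beta\mapsto p_\edow(B,\beta)=g(\beta)/\beta$ is non-increasing; continuity on $(0,\infty)$ is inherited from the continuity of concave functions on the interior of their domain. The bounds in (i) follow by applying Proposition \ref{thm:indifprice}(i) to $\beta B$ and dividing by $\beta>0$. For (ii), I would use the rescaled entropic representation
\[ p_\edow(B,\beta)=\inf_{\QQ\in\M_V^a}\left\{\E[\QQ]B+\frac{\alpha(\QQ)}{\beta}\right\}. \]
Setting $y=1$ in the infimum defining $\alpha$ shows $\alpha(\QQ)\le\VV_\edow(\QQ)-v_\edow<\infty$ for every $\QQ\in\M_V^a$, so $\limsup_{\beta\to\infty}p_\edow(B,\beta)\le\E[\QQ]B$ for each such $\QQ$. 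Infimizing over $\QQ$ and combining with the lower bound from (i) yields the limit $\inf_{\QQ}\E[\QQ]B$.

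For (iii), which I regard as the main obstacle, I would argue by contradiction. Suppose $\beta_n\searrow 0$ and $\varepsilon>0$ satisfy $p_\edow(B,\beta_n)<\E[\optQQ_\edow]B-\varepsilon$; then there exist $\QQ_n\in\M_V^a$ with $\E[\QQ_n]B+\alpha(\QQ_n)/\beta_n<\E[\optQQ_\edow]B-\varepsilon/2$. The martingale property of $H'\cdot S$ and supermartingale property of $H''\cdot S$ under every $\QQ\in\M_V^a$ (together with Assumption \ref{ass:newweakerendowment}) give $\E[\QQ]B\in[x',x'']$ uniformly, from which $\alpha(\QQ_n)\to 0$. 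Picking $y_n>0$ which nearly achieves the infimum in the definition of $\alpha(\QQ_n)$ and putting $\mu_n\coloneqq y_n\QQ_n\in\Cone(\M_V^a)$, the Jensen bound $\VV_\edow(y\QQ)\ge V(y)+yx'$ forces $\{y_n\}$ to be bounded (since $V(y)/y\to\infty$). Along a subsequence, $\sigma(\ba,L^\infty)$-compactness of bounded balls plus lower semicontinuity of $\VV_\edow$ and uniqueness from Proposition \ref{thm:lagrange_duality} force $\mu_n\to\widehat\mu_\edow$; normalizing gives $\QQ_n\to\optQQ_\edow$ in $\sigma(\ba^\sigma,L^\infty)$. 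A truncation-plus-monotone-convergence argument applied to the non-negative random variable $B-x'-(H'\cdot S)_T$ (whose $\QQ$-expectation equals $\E[\QQ]B-x'$ by the martingale property of $H'\cdot S$) then delivers $\liminf_n\E[\QQ_n]B\ge\E[\optQQ_\edow]B$, contradicting $\E[\QQ_n]B<\E[\optQQ_\edow]B-\varepsilon/2$. Combined with the upper bound from (i) and monotonicity, this identifies the zero-volume limit with $\E[\optQQ_\edow]B$.

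The principal difficulty is this last implication $\alpha(\QQ_n)\to 0\Rightarrow\QQ_n\to\optQQ_\edow$, together with the Fatou-type control of $\E[\QQ_n]B$ for possibly unbounded $B$; it rests on the coercivity of $\VV_\edow$ via the Jensen bound, on the uniqueness of the dual minimizer from Proposition \ref{thm:lagrange_duality}, and crucially on the martingale property of the sub-hedging strategy $H'$ from Assumption \ref{ass:newweakerendowment}, which reduces the convergence question to one about a non-negative random variable.
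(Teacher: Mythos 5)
Your proof is correct, and while parts (i), the monotonicity, and the continuity coincide with the paper's argument, the treatment of (ii) and (iii) differs in a way worth recording. For (ii) you argue directly from the rescaled entropic representation $p_\edow(B,\beta)=\inf_{\QQ}\{\E[\QQ]B+\alpha(\QQ)/\beta\}$, noting $\alpha(\QQ)<\infty$ for each fixed $\QQ$; the paper instead argues by contradiction from the identity $u_\edow=v_{\edow+\beta B-p_\edow(\beta B)}$, letting $\beta\rightarrow\infty$ kill the inequality. These are the same observation packaged differently, and your version is arguably cleaner. For (iii) the routes genuinely diverge: you re-derive, inside the proof, the $\sigma(\ba,L^\infty)$-convergence of an approximately minimizing dual sequence $\mu_n=y_n\QQ_n$ to $\widehat\mu_\edow$, using coercivity of $\VV_\edow$ via the Jensen bound $\VV_\edow(y\QQ)\ge V(y)+yx'$, weak*-compactness of balls, lower semicontinuity, and uniqueness of the dual optimizer from Proposition \ref{thm:lagrange_duality}. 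The paper instead routes the argument through Proposition \ref{thm:deponendow}(iii): it shows $p_\edow(B,1/n)\ge\E[\optQQ_n]B$ where $\optQQ_n$ is the dual optimizer for the shifted endowment $\edow+B/n-p_\edow(B/n)$, and then cites the already-established weak total-variation convergence $\widehat\mu_n\rightarrow\widehat\mu_\edow$. Both then finish with the same truncation/Fatou step on the nonnegative variable $B-x'-(H'\cdot S)_T$. What you lose by not invoking Proposition \ref{thm:deponendow}(iii) is brevity; what you gain is a self-contained argument that makes the coercivity mechanism explicit. Two small points of care: when you say ``along a subsequence'' for weak*-compactness, this should strictly be ``along a subnet'' (the ball of $\ba$ is $\sigma(\ba,L^\infty)$-compact but not metrizable), after which the standard argument --- every subnet has a further subnet converging to the unique limit $\widehat\mu_\edow$, hence the whole sequence converges --- yields what you want; the paper's Appendix proof of Proposition \ref{thm:deponendow}(iii) handles this correctly with nets. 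You should also note explicitly that $\widehat\mu_\edow\neq0$, together with lower semicontinuity of $\VV_\edow$ and $\VV_\edow(0)=U(\infty)>v_\edow$, is what rules out $y_n\rightarrow0$ and hence justifies normalizing to obtain $\QQ_n\rightarrow\optQQ_\edow$.
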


\begin{proof}
  It follows immediately from Proposition \ref{thm:indifprice} that $p_\edow(B,\beta)$ is a continuous function of $\beta$. Take $0<\beta_1\le\beta_2$. Then by concavity of $p_\edow$ (setting $\lambda=\beta_1/\beta_2$, $B_1=\beta_2B$ and $B_2=0$),
  \begin{equation*}
    p_\edow(B,\beta_1)=\frac1{\beta_1}p\left(\frac{\beta_1}{\beta_2}\beta_2B\right) \ge \frac{1}{\beta_2}p_\edow(\beta_2B)+\left(\frac1{\beta_1}-\frac1{\beta_2}\right)p_\edow(0) = p_\edow(B,\beta_2).
  \end{equation*}

  (i) This follows immediately from Proposition \ref{thm:indifprice}.

  (ii) Suppose for a contradiction there exists a $\QQ\in\M_V^a$ such that $\E[\QQ]B<\lim_{\beta\rightarrow\infty}p_\edow(B,\beta)$. Then for any $\beta>0$,
  \begin{align*}
    u_\edow&=u_\edow(\beta B-p_\edow(\beta B))\\
    &=\inf_{\mu\in\Cone(\M_V^a)}\E{V\left(\radnik[\mu]\right)+\radnik[\mu](\edow+\beta(B-p_\edow(B,\beta)))}\\
    &\le \E{V\left(\radnik\right)+\radnik\edow}+\beta(\E[\QQ]B-p_\edow(B,\beta)).
  \end{align*}
  Taking the limit as $\beta\rightarrow\infty$, the right hand side tends to $-\infty$, which is a contradiction.

  (iii) Applying Proposition \ref{thm:indifprice}(vii), we see that $p_\edow(B/n)\rightarrow0$ as  $n\rightarrow\infty$. Hence, defining $B_n:=B/n-p_\edow(B/n)$ we see that
  \[ \sup_{\QQ\in\M_V^a}\E[\QQ]{B_n}\rightarrow0\qquad\text{and}\qquad\inf_{\QQ\in\M_V^a}\E[\QQ]{B_n}\rightarrow0 \]
  as $n\rightarrow\infty$. Let $\widehat\mu_n$ (resp. $\widehat\QQ_n$) denote the optimal measure (resp. normalized measure) in the dual problem with endowment $\edow+B_n$. By Proposition \ref{thm:deponendow}(iii), $\widehat\mu_n\rightarrow\widehat\mu_\edow$ weakly in total variation as $n\rightarrow\infty$. Since $\{\widehat\mu_n:n\in\NN\}\subseteq\ba_+^\sigma$ it follows that $\widehat\QQ_n\rightarrow\widehat\QQ_\edow$ weakly in total variation as $n\rightarrow\infty$. From the definition of an indifference price,
  \begin{align*}
    u_\edow &= u_\edow(B_n)=\E{V\left(\radnik[\widehat\mu_n]\right)+\radnik[\widehat\mu_n](\edow+B_n)}\\
    &\ge u_\edow+\E{\radnik[\widehat\mu_n](B-np_\edow(B/n))}\big/n
    =u_\edow+\|\widehat\mu_n\|(\E[\widehat\QQ_n]{B}-p_\edow(B,1/n))/n.
  \end{align*}
  Rearranging gives $p_\edow(B,1/n)\ge\E[\optQQ_n]B$. Since $B$ satisfies Assumption \ref{ass:newweakerendowment}, we may find an $H'\in\H^\mg$ such that $\widetilde B:=B-(H'\cdot S)_T$ is bounded below. Now,
  \begin{align*}
    \lim_{n\rightarrow\infty}p_\edow(B,1/n)&\ge\liminf_{n\rightarrow\infty}\E[\optQQ_n]B
    =\liminf_{n\rightarrow\infty}\E[\optQQ_n]{\widetilde B}
    =\liminf_{n\rightarrow\infty}\sup_m\E[\optQQ_n]{\widetilde B\wedge m}\\
    &\ge\sup_m\liminf_{n\rightarrow\infty}\E[\optQQ_n]{\widetilde B\wedge m}=
    \sup_m\E[\optQQ_\edow]{\widetilde B\wedge m}=\E[\optQQ_\edow]{\widetilde
    B}=\E[\optQQ_\edow]B.
  \end{align*}
  \qed
\end{proof}

\section{Appendix}

\begin{proof}[ of Proposition \ref{thm:deponendow}]
  Note first that by Theorem \ref{thm:main},
  \[ u(\edow)=u_\edow=v_\edow=\min_{\mu\in\Cone(\M_V^a)}\{H(\mu)+\mu(\edow)\}, \]
  where
  \begin{equation}\label{eqn:shiftentropy}
    H(\mu):=\E{V\left(\radnik[\mu]\right)}\qquad\text{and}\qquad\mu(\edow) :=\E{\radnik[\mu]\edow}.
  \end{equation}
  (i) Take $\edow_1\le \edow_2$. Let $\widehat\mu_{\edow_2}$ be optimal in the dual problem for the endowment $\edow_2$. Then
  \begin{align}
    u(\edow_1)&=\min_{\mu\in\Cone(\M_V^a)}\{H(\mu)+\mu(\edow_1)\}\le H(\widehat\mu_{\edow_2})+\widehat\mu_{\edow_2}(\edow_1) \notag\\
    &\le H(\widehat\mu_{\edow_2})+\widehat\mu_{\edow_2}(\edow_2)\label{eqn:one} =
    \min_{\mu\in\Cone(\M_V^a)}\{H(\mu)+\mu(\edow_2)\}= u(\edow_2)
  \end{align}
  If $\edow_2=\edow_1+\epsilon$ then the inequality \eqref{eqn:one} is strict. Take $\edow_1\le\edow_2$, and $\edow_1\neq \edow_2$. If $\M_V^e\neq\emptyset$ then $\widehat\mu\sim\PP$, so again, \eqref{eqn:one} is strict.

  (ii)
  \begin{align*}
    u(\lambda \edow_1+(1-\lambda)\edow_2) &= \min_{\mu\in\Cone(\M_V^a)}\{H(\mu)+\mu(\lambda \edow_1+(1-\lambda)\edow_2)\}\\
    &= \min_{\mu\in\Cone(\M_V^a)}\{\lambda(H(\mu)+\mu(\edow_1))+(1-\lambda)(H(\mu)+\mu(\edow_2))\}\\
    &\ge \lambda u(\edow_1)+(1-\lambda) u(\edow_2).
  \end{align*}
  (iii) Define $\edow_\infty=\edow$. First note that due to Assumption \ref{ass:newweakerendowment}, and the conditions \eqref{eqn:surprise}
  \begin{equation*}
    C\coloneqq\inf_{n\in\NN\cup\{\infty\}}\inf_{\QQ\in\M_V^a}\E[\QQ]{\edow_n} \ge
    \inf_{\QQ\in\M_V^a}\E[\QQ]{\edow}+\inf_{n\in\NN\cup\{\infty\}}\inf_{\QQ\in\M_V^a}\E[\QQ]{\edow_n-\edow}\ge-\infty.
  \end{equation*}
  Moreover, for $\QQ'\in\M_V^a$ fixed we have
  \[ C'\coloneqq\sup_{n\in\NN\cup\{\infty\}}\{H(\QQ')+\QQ'(\edow_n)\} \le H(\QQ')+\QQ'(\edow)+\sup_{n\in\NN\cup\{\infty\}}\sup_{\QQ\in\M_V^a}\E[\QQ]{\edow_n-\edow} < \infty. \]
  Since $V$ is convex and $V'(y)\rightarrow\infty$ there exists a constant $r<\infty$ such that $V(y)+Cy>C'$ for all $y\ge r$. If $\mu\in\Cone(\M_V^a)$ satisfies $\|\mu\|\ge r$ then by Jensen's inequality, for any $n\in\NN$,
  \[ H(\mu)+\mu(\edow_n)\ge V(\|\mu\|)+C\|\mu\|>C'\ge H(\QQ')+\QQ'(\edow_n). \]
  Thus $\mu$ cannot be the minimizer in the dual problem \eqref{eqn:maindual} with endowment $\edow_n$ for any $n\in\NN\cup\{\infty\}$. It follows therefore that
  \begin{align*}
    u(\edow_n)-u(\edow)&=\inf_{\mu\in\Cone(\M_V^a)\cap B(r)}\{H(\mu)+\mu(\edow_n)\}-\inf_{\mu\in\Cone(\M_V^a)\cap B(r)}\{H(\mu)+\mu(\edow)\}\\
    &\le \sup_{\mu\in\Cone(\M_V^a)\cap B(r)}\mu(\edow_n-\edow) \le
    \max\{r\sup_{\QQ\in\M_V^a}\E[\QQ]{\edow_n-\edow},0\},
  \end{align*}
  where $B(r)$ denotes the ball of radius $r$ in $(\ba,\|.\|)$. Similarly,
  \[ u(\edow_n)-u(\edow)\ge\min\{r\inf_{\QQ\in\M_V^a}\E[\QQ]{\edow_n-\edow},0\}.\]
  Thus $|u(\edow_n)-u(\edow)|\rightarrow0$ as $n\rightarrow\infty$.

  For ease of notation in the proof of the second statement, define $\widehat\mu_n:=\widehat\mu_{\edow_n}$. Then
  \begin{align*}
    |H(\widehat\mu_n)+\widehat\mu_n(\edow)-v_{\edow}|
      &= |H(\widehat\mu_n)+\widehat\mu_n(\edow_n)-v_{\edow}-\widehat\mu_n(\edow_n-\edow)|\\
      &\le |v_{\edow_n}-v_{\edow}|+|\widehat\mu_n(\edow_n-\edow)|\\
      &\le |u(\edow_n)-u(\edow)|+r\sup_{\QQ\in\M_V^a}|\E[\QQ]{\edow_n-\edow}|.
  \end{align*}
  Hence $(\widehat\mu_n)_{n\ge0}$ is a minimizing sequence for the dual problem with endowment $\edow$.

  We may assume without loss of generality that $\edow$ satisfies Assumption \ref{ass:edow}; in the case when $\edow$ satisfies the weaker Assumption \ref{ass:newweakerendowment}, we may find an $H'\in\H^\mg$ such that $\widetilde \edow\coloneqq \edow-(H'\cdot S)_T$ satisfies Assumption \ref{ass:edow}, and note that $\widehat\mu_n$ is a minimizing sequence for the dual problem with endowment $\widetilde\edow$.

  We now show that $\widehat\mu_n\rightarrow\widehat\mu_\edow$ weakly in total variation as $n\rightarrow\infty$. Suppose for a contradiction that this is not the case. Then there exists a $\sigma(\ba,L^\infty)$-open neighbourhood, $U$, of $\widehat\mu_\edow$ such that for all $N\in\NN$ there exists an $n\ge N$ such that $\widehat\mu_n\not\in U$. We may therefore find a subsequence $\{\widehat\mu_{n_k}\}_{k\ge0}$ which lies in the closed set $\ba\setminus U$. Since this subsequence lies inside the weak* compact ball of radius $r$ in $\ba$, it has a cluster point. There exists, therefore, a subnet $\{\widehat\mu_\alpha\}_{\alpha\in A}$ of $\{\widehat\mu_{n_k}\}_{k\ge0}$ which converges to some $\mu\in\ba\setminus U$. Since $\{\VV_\edow(\widehat\mu_{n_k})\}_{k\ge0}$ converges to $v_\edow$, the subnet $\{\VV_\edow(\widehat\mu_\alpha)\}_{\alpha\in A}$ also converges to $v_\edow$. Since $\VV_\edow$ is lower semicontinuous in the weak* topology $\sigma(\ba,L^\infty)$ we have
  \[ v_\edow=\lim_{\alpha}\VV_\edow(\widehat\mu_\alpha)\ge\VV_\edow(\mu). \]
  Since $\widehat\mu_\edow$ is the unique minimizer in the dual problem, we must have $\mu=\widehat\mu_\edow\in U$, which is the required contradiction.

  (iv) Suppose that $\{\edow_n\}_{n\ge0}$ uniformly satisfy Assumption \ref{ass:newweakerendowment}. Then the contingent claim $\limsup_n \edow_n$ also satisfies Assumption \ref{ass:newweakerendowment}. Moreover, using Fatou's Lemma,
  \begin{align}
    u(\limsup\nolimits_n\edow_n) &= \inf_{\mu\in\Cone(\M_V^a)}\{H(\mu)+\mu(\limsup\nolimits_n\edow_n)\} \notag\\
    &\ge \limsup\nolimits_n\inf_{\mu\in\Cone(\M_V^a)}\{H(\mu)+\mu(\edow_n)\} \notag\\
    &= \limsup\nolimits_nu(\edow_n).\label{eqn:fatou1}
  \end{align}
  By our assumptions, the contingent claim $\liminf_n\edow_n$ also satisfies Assumption \ref{ass:newweakerendowment}, and we may find an $H'\in\H^\mg$ such that $\widetilde\edow_n:=\edow_n-(H'\cdot S)_T$ satisfies Assumption \ref{ass:edow}. Using Fatou's Lemma, and Theorem \ref{thm:minimalassumption}(i),
  \begin{align}
    u(\liminf\nolimits_n\edow_n) &= \sup_{H\in\H^\perm}\E{U\big((H\cdot S)_T+\liminf\nolimits_n\edow_n\big)} \notag\\
    &= \sup_{H\in\H^\perm}\E{U\big((H\cdot S)_T+\liminf\nolimits_n\widetilde \edow_n\big)} \notag\\
    &= \sup_{H\in\H^\adm}\E{\liminf\nolimits_nU\big((H\cdot S)_T+\widetilde \edow_n\big)} \notag\\
    &\le \sup_{H\in\H^\adm}\liminf_n\E{U\big((H\cdot S)_T+\widetilde \edow_n\big)} \notag\\
    &\le \liminf_n\sup_{H\in\H^\adm}\E{U\big((H\cdot S)_T+\widetilde \edow_n\big)} \notag\\
    &= \liminf_n\sup_{H\in\H^\perm}\E{U\big((H\cdot S)_T+\widetilde \edow_n\big)} \notag\\
    &= \liminf_n\sup_{H\in\H^\perm}\E{U\big((H\cdot S)_T+\edow_n\big)} \notag\\
    &= \liminf\nolimits_nu(\edow_n).\label{eqn:fatou2}
  \end{align}
  Finally, if $\edow_n\rightarrow\edow$ $\PP$-a.s. as $n\rightarrow\infty$ then by applying both Fatou properties \eqref{eqn:fatou1} and \eqref{eqn:fatou2} above,
  \begin{equation*}
    \liminf\nolimits_nu(\edow_n)\ge u(\edow)\ge\limsup\nolimits_nu(\edow_n).
  \end{equation*}
  \qed
\end{proof}

\section*{References}

\setlength\parskip\baselineskip

\noindent{\scshape Becherer,~D.} (2003): Rational Hedging and Valuation of Integrated Risks under Constant Absolute Risk Aversion. \emph{Insurance: Mathematics \& Economics} 33, 1--28.

\noindent{\scshape Biagini,~S. {\upshape and} M.~Frittelli} (2005): Utility Maximization in Incomplete Markets for Unbounded Processes. \emph{Finance Stochast.} 9, 493--517.

\noindent{\scshape Biagini,~S. {\upshape and} M.~Frittelli} (2007): The Supermartingale Property of the Optimal Wealth Process for General Semimartingales. \emph{Finance Stochast.} 11, 253--266.

\noindent{\scshape Biagini,~S. {\upshape and} M.~Frittelli} (2006): A Unified Framework for Utility Maximization Problems: An Orlicz Space Approach. \emph{Preprint}.

\noindent{\scshape Cvitani\'c,~J., W.~Schachermayer {\upshape and} H.~Wang} (2001): Utility Maximization in Incomplete Markets with Random Endowment. \emph{Finance Stochast.} 5, 259--272.

\noindent{\scshape Davis,~M.~H.~A.} (1997): Option Pricing in Incomplete Markets, in \emph{Mathematics of Derivative Securities}, eds. M.~A.~H.~Dempster and S.~R.~Pliska. Cambridge University Press, 216--226.

\noindent{\scshape Delbaen,~F., P.~Grandits, T.~Rheinl\"ander, D.~Samperi, M.~Schweizer {\upshape and}  C.~Stricker} (2002): Exponential Hedging and Entropic Penalties. \emph{Math. Finance} 12, 99--123.

\noindent{\scshape Delbaen,~F. {\upshape and} W.~Schachermayer} (1994): A General Version of the Fundamental Theorem of Asset Pricing. \emph{Mathematische Annalen} 300, 463--520.

\noindent{\scshape Delbaen,~F. {\upshape and} W.~Schachermayer} (1998): The Fundamental Theorem of Asset Pricing for Unbounded Stochastic Processes. \emph{Mathematische Annalen} 312, 215--250.

\noindent{\scshape Delbaen,~F. {\upshape and} W.~Schachermayer} (1999): A Compactness Principle for Bounded Sequences of Martingales with Applications. \emph{Progress in Probability} 45, 137--173.

\noindent{\scshape Delbaen,~F. {\upshape and} W.~Schachermayer} (2006): \emph{The Mathematics of Arbitrage}. Springer Finance.

\noindent{\scshape Dunford,~N. {\upshape and} J.~T.~Schwartz} (1964): \emph{Linear Operators. Part I: General Theory}. Wiley.

\noindent{\scshape Foldes,~L.} (2000): Valuation and Martingale Properties of Shadow Prices: An Exposition. \emph{J. Econ. Dynam. Control} 24, 1641--1701.

\noindent{\scshape F\"ollmer,~H. {\upshape and} A.~Schied} (2002): Convex Measures of Risk and Trading Constraints. \emph{Finance Stochast.} 6, 429--447.

\noindent{\scshape Frittelli,~M.} (2000): Introduction to a Theory of Value Coherent with the No-Arbitrage Principle. \emph{Finance Stochast.} 4, 275--297.

\noindent{\scshape Fujiwara,~T. {\upshape and} Y.~Miyahara} (2003): The Minimal Entropy Martingale Measures for Geometric L\'evy Processes. \emph{Finance Stoch.} 7, 509--531.

\noindent{\scshape Henderson,~V. {\upshape and} D.~Hobson} (2004): Utility Indifference Pricing - An Overview, in \emph{Volume on Indifference Pricing}, ed. R.~Carmona. Princeton University Press, to appear.

\noindent{\scshape Hewitt,~E. {\upshape and} K.~Stromberg} (1965): \emph{Real and Abstract Analysis}. Springer.

\noindent{\scshape Hodges,~S.~D. {\upshape and} A.~Neuberger,} (1989): Optimal Replication of Contingent Claims Under Transaction Costs. \emph{Rev. Futures Markets} 8, 222--239.

\noindent{\scshape Hugonnier,~J. {\upshape and} D.~Kramkov} (2004): Optimal Investment with Random Endowments in Incomplete Markets. \emph{Ann. Appl. Prob.} 14, 845--864.

\noindent{\scshape Hugonnier,~J., D.~Kramkov {\upshape and} W.~Schachermayer} (2005): On Utility-Based Pricing of Contingent Claims in Incomplete Markets. \emph{Math. Finance} 15, 203--212.

\noindent{\scshape Jouini,~E., W.~Schachermayer {\upshape and} N.~Touzi} (2006): Law Invariant Risk Measures have the Fatou Property. \emph{Adv. Math. Econ.} 9, 49--71.

\noindent{\scshape Y.~M. Kabanov {\upshape and} C.~Stricker} (2002): On the Optimal Portfolio for the Exponential Utility Maximization: Remarks to the Six-Author Paper. \emph{Math. Finance} 12, 125--134.

\noindent{\scshape Kallsen,~J.} (2002): Derivative Pricing based on Local Utility Maximization. \emph{Fin. Stoch.} 6, 115--140.

\noindent{\scshape Kallsen,~J. {\upshape and} C.~K\"uhn} (2004): Pricing Derivatives of American and Game Type in Incomplete Markets. \emph{Fin. Stoch.} 8, 261--284.

\noindent{\scshape Kallsen,~J. {\upshape and} C.~K\"uhn} (2005): On Utility-Based Derivative Pricing with and without Intermediate Trades. \emph{Statistics and Decisions}, to appear.

\noindent{\scshape Karatzas,~I. {\upshape and} S.~G. Kou} (1996): On the pricing of contingent claims under constraints. \emph{Ann. Appl. Prob.} 6, 321--369.

\noindent{\scshape Kl\"oppel,~S. {\upshape and} M.~Schweizer} (2006): Dynamic Utility Indifference Valuation via Convex Risk Measures. \emph{Math. Finance}, to appear.

\noindent{\scshape Kramkov,~D. {\upshape and} W.~Schachermayer} (1999): The Asymptotic Elasticity of Utility Functions and Optimal Investment in Incomplete Markets. \emph{Ann. Appl. Prob.} 9, 904--950.

\noindent{\scshape Luenberger,~D.~G.} (1969): \emph{Optimization by Vector Space Methods}. Wiley.

\noindent{\scshape Mania,~M. {\upshape and} M.~Schweizer} (2005): Dynamic exponential utility indifference valuation. \emph{Ann. Appl. Prob.} 15, 2113--2143.

\noindent{\scshape Owen,~M.~P.} (2002): Utility Based Optimal Hedging in Incomplete Markets. \emph{Ann. Appl. Prob.} 12, 691--709.

\noindent{\scshape Protter,~P.} (2003): \emph{Stochastic Integration and Differential Equations, a New Approach}. Springer.

\noindent{\scshape R.~Rouge {\upshape and} N.~El Karoui} (2000): Pricing via Utility Maximization and Entropy. \emph{Math. Finance} 10, 259--276.

\noindent{\scshape Schachermayer,~W.} (2001): Optimal Investment in Incomplete Markets when Wealth may become Negative. \emph{Ann. Appl. Prob.} 11, 694--734.

\noindent{\scshape Schachermayer,~W.} (2003): A Super-Martingale Property of the Optimal Portfolio Process. \emph{Finance Stochast.} 7, 433--456.

\noindent{\scshape Shiryaev,~A.~N.} (1995): \emph{Probability}, second ed. Springer.

\noindent{\scshape Stricker,~C.} (2004): Indifference Pricing with Exponential Utility, Seminar on Stochastic Analysis, \emph{Random Fields and Applications IV}, eds. R.~Dalang, M.~Dozzi, and F.~Russo. Boston: Birkh\"auser, 325--330.

\noindent{\scshape \v{Z}itkovi\'c,~G.} (2005): Utility Maximization with a Stochastic Clock and an Unbounded Random
Endowment. \emph{Ann. Appl. Prob.} 15, 748--777.

\end{document}